\numberwithin{equation}{section}
\newtheorem{theorem}{Theorem}[section]
\newtheorem{prop}[theorem]{Proposition}
\theoremstyle{remark}
\newtheorem{example}{Example}
\tikzset{->-/.style={decoration={
  markings,
  mark=at position .74 with {\arrow{Latex}}},postaction={decorate}}}
\newcommand\Square[1]{+(-#1,-#1) rectangle +(#1,#1)}
\renewcommand{\author}[1]{\large\rm #1\\ \bigskip}
\newcommand{\address}[1]{{\normalsize\it #1\\}\bigskip}
\renewcommand{\title}[1]{\bigskip\bigskip\Large\bf #1\bigskip\bigskip\\}
\DeclarePairedDelimiter{\abs}{|}{|}
\newcommand{\R}{\mathbb{R}}
\newcommand{\N}{\mathbb{N}}
\newcommand{\Z}{\mathbb{Z}}
\newcommand{\Cn}{\mathbb{C}}
\newcommand{\K}{\mathbb{K}}
\DeclareMathOperator{\ud}{d}
\DeclareMathOperator{\sgn}{sgn}
\newcommand{\x}{{\boldsymbol{x}}}
\newcommand{\y}{{\boldsymbol{y}}}
\newcommand{\al}{{\bm{\alpha}}}
\newcommand{\bt}{{\bm{\beta}}}
\renewcommand{\imath}{{\mathrm{i}}}
\newcommand{\Pj}{\mathbb{CP}}
\newcommand{\PTN}{\mathcal{X}_1}
\newcommand{\PNpPNm}{\mathcal{X}_2}
\newcommand{\ccy}{y_{m,n}}
\newcommand{\ccya}{y_{m-1,n+1}}
\newcommand{\ccyb}{y_{m+1,n+1}}
\newcommand{\ccyc}{y_{m-1,n-1}}
\newcommand{\ccyd}{y_{m+1,n-1}}
\newcommand{\ccx}{x_{m,n}}
\newcommand{\cca}{x_{m-1,n+1}}
\newcommand{\ccb}{x_{m+1,n+1}}
\newcommand{\ccc}{x_{m-1,n-1}}
\newcommand{\ccd}{x_{m+1,n-1}}
\newcommand{\ccpc}{\alpha}
\newcommand{\ccpa}{\alpha_1}
\newcommand{\ccpb}{\alpha_2}
\newcommand{\ccqc}{\beta}
\newcommand{\ccqa}{\beta_1}
\newcommand{\ccqb}{\beta_2}
\newcommand{\ccpp}{{\al}}
\newcommand{\ccqq}{{\bt}}
\newcommand{\A}[7]{A(#1;#2,#3,#4,#5;#6,#7)}
\newcommand{\Bf}[7]{\overline{B}(#1;#2,#3,#4,#5;#6,#7)}
\newcommand{\B}[7]{B(#1;#2,#3,#4,#5;#6,#7)}
\newcommand{\Cf}[7]{\overline{C}(#1;#2,#3,#4,#5;#6,#7)}
\newcommand{\C}[7]{C(#1;#2,#3,#4,#5;#6,#7)}
\newcounter{app}
\newcounter{sapp}[app]
\begin{document}

\vglue 2cm

\begin{center}

\title{Algebraic entropy for face-centered quad equations}
\author{Giorgio Gubbiotti$^{1,2}$ and Andrew P.~Kels$^2$}
\address{
    $^1$School of Mathematics and Statistics F07, The University of Sydney, NSW 2006, Australia,
    \\
    \emph{email: \texttt{giorgio.gubbiotti@sydney.edu.au}}
    \\[0.1cm]
    $^2$Scuola Internazionale Superiore di Studi Avanzati, Via Bonomea 265, 34136 Trieste, Italy,
    \\
    \emph{email: \texttt{giorgio.gubbiotti@sissa.it}}
    \\
    \emph{email: \texttt{akels@sissa.it}}
}

\end{center}

\begin{abstract}
    In this paper we define the algebraic entropy test for face-centered 
    quad equations, which are equations defined on vertices of a quadrilateral plus an additional interior vertex.
    This notion of algebraic entropy is applied to a recently introduced
    class of these equations that satisfy a new form of multidimensional consistency called consistency-around-a-face-centered-cube (CAFCC), whereby the system of equations is consistent on a face-centered cubic unit cell.  It is found that for certain arrangements of equations (or pairs of equations) in the square lattice, all known CAFCC equations pass the algebraic entropy test possessing either quadratic or linear growth.
\end{abstract}

\section{Introduction}

Recently the second author has introduced a new set of five-point
lattice equations, called face-centered quad equations,
which satisfy a new form of multidimensional consistency,
called consistency-around-a-face-centered-cube (CAFCC)
\cite{Kels2020cafcc}.  The latter property may be regarded
as an analogue of the consistency-around-a-cube (CAC) property
\cite{DoliwaSantini1997,Nijhoff2001} that is satisfied by regular
integrable quad equations \cite{BobenkoSuris2002}.
For such quad equations there is also the concept of algebraic entropy
\cite{BellonViallet1999}, which provides an algorithmic tool for
determining the integrability of the equations based on examining the
growth of their degrees under lattice evolution.
The main goals of this paper are to introduce an analogous concept of
algebraic entropy that is applicable to face-centered quad equations,
and to apply this concept to equations that satisfy CAFCC to quantify
the nature of their integrability (or non-integrability) in terms of
their degree growth.

The concept of algebraic entropy is a method to measure the
{complexity} of birational maps, analogous to the one introduced
by Arnold for diffeomorphisms \cite{Arnold1990}.
For finite dimensional maps this concept was first introduced in
\cite{Veselov1992,FalquiViallet1993,HietarintaViallet1997} and finally
formalised in \cite{BellonViallet1999}.
The concept of algebraic entropy has been generalised to various
different kinds of infinite-dimensional systems,
most notably to quad-equations in \cite{Tremblay2001,Viallet2006,Viallet2009}.
A small generalisation of the concept of algebraic entropy for quad equations,
considering the possibility of multiple growths coexisting in the same
quad equation, was then given in \cite{GSL_general}. 

Integrability for both continuous and discrete systems is associated with 
regularity and predictability of the solutions of a given system.
Intuitively, this means that the complexity of an integrable system
is expected to be {low}.
For birational maps defined over complex projective spaces, this
condition translates to the vanishing of the algebraic entropy.
Positive algebraic entropy implies that the complexity of the 
iterates of a birational map grow exponentially.
This makes the evolution of the motion almost impossible to predict and 
very sensitive to the initial conditions: 
an indication of what is usually referred as {chaos}.
For a complete account on the history and the definition of algebraic
entropy in the mentioned cases we refer to the reviews
\cite{GrammaticosHalburdRamaniViallet2009,GubbiottiASIDE16,HietarintaBook}.

Face-centered quad equations are generalisations of quad equations.
While quad equations are equations defined on the four vertices of a
quadrilateral polygon, face-centered quad equations also involve an
additional interior vertex.
Thus a face-centered quad equation can be expressed as a sum of powers
of the variable at the interior vertex with coefficients given by regular
quad equations.
In \cite{Kels2020cafcc} the second author introduced CAFCC as a form of
multidimensional consistency for face-centered quad equations.
The latter property involves satisfying an over-determined system
of fourteen equations on the face-centered cube (or face-centered cubic
unit cell) for eight unknown variables, and may be regarded as an analogue
of the CAC property defined for regular quad equations.
The algorithmic nature of CAC has been exploited to obtain several
different classifications of quad equations
\cite{ABS2003,ABS2009,Boll2011,Hietarinta2005,Hietarinta2018}.
The quad equations that satisfy CAC have also been shown
to arise in the asymptotics of more general equations for
hypergeometric functions associated to the star-triangle relations
\cite{Bazhanov:2007mh,Bazhanov:2010kz,Bazhanov:2016ajm,Kels:2018xge},
and this type of connection was used by the second author to derive the
set of CAFCC equations.
Similarly to the case of integrable quad equations, it was subsequently
shown that the property of CAFCC allows for an algorithmic derivation
of Lax pairs for face-centered quad equations \cite{Kels2020lax}.

The goals of this paper are to define and apply the concept of algebraic
entropy to face-centered quad equations.
Since the CAFCC equations satisfy a form of multidimensional consistency
and have associated Lax pairs, it is expected that they will have a
polynomial degree growth.
Indeed, for specific arrangements of equations in the lattice it is
found that
the type-A and type-C equations each have quadratic growths of degrees,
and the type-B equations each have linear growths of degrees.
The formulation of algebraic entropy for face-centered quad equations and
the computation of the generating functions describing the polynomial
degree growths for equations satisfying CAFCC are the main results
of this paper.

In the process of applying algebraic entropy we also solve the non-trivial
problem of finding integrable combinations of
CAFCC equations in a $\Z^{2}$ lattice.
This problem is non-trivial because
an arbitrarily chosen arrangement of equations is typically not sufficient
to obtain a sub-exponential
growth of degrees.
The reason for this is that type-B and -C CAFCC equations do not satisfy
the full
symmetries of the square, so a general arrangement of these equations
will not give rise to a
well-defined lattice of consistent face-centered cubes in three
dimensions.
This situation is similar to the case of the $H4$- and $H6$-type quad
equations \cite{GSL_general}.
Our constructions of lattices for type-B and -C CAFCC equations may be
regarded as
the analogue of the black-and-white lattices for the $H4$ equations
\cite{ABS2009,Xenitidis2009} and the four-colours lattice for the $H6$
equations \cite{Boll2011,GSL_general}.
This means that such type-B and type-C equations are \emph{non-autonomous
face-centered quad equations} alternating
on the lattice between different CAFCC equations (see \cite{GSL_general}
for the analogous case on quad equations).

This paper is organised as follows.
In Section \ref{sec:fcqeaae} we give an overview of the CAFCC
equations, and then define the concept of algebraic
entropy for face-centered quad equations.
We discuss the main properties of the algebraic entropy,
and we present a few different examples of growth of degrees.
In Section \ref{sec:aecafcc} we give the arrangements
of type-A, type-B, and type-C CAFCC equations in the square lattice,
and give the results from the computations of the
algebraic entropy for the corresponding systems of equations.
Finally, in Section \ref{sec:concl} we give some conclusion and outlook
for future works.

\section{Face-centered quad equations and algebraic entropy}
\label{sec:fcqeaae}

In Section \ref{sec:Mcomp} an overview of the concept of the face-centered
quad equations will be given.
Then in Section \ref{sec:aefcqe} a definition of algebraic entropy
for the face-centered quad equations will be given.
The concept of algebraic entropy requires the definition of a
birational evolution, which in turn leads to the problem of defining
proper initial conditions.
These topics will be treated in Section \ref{sec:aefcqe}.

\subsection{Face-centered quad equations}
\label{sec:Mcomp}

The face-centered quad equations are defined on a five-point stencil of
the square lattice shown in Figure \ref{fig-face}.  Figure \ref{fig-face}
may be regarded as a face of the face-centered cube, which was the central
idea in the formulation of the property of multidimensional consistency
for these equations \cite{Kels2020cafcc}. In Figure \ref{fig-face}
the four variables $\cca,\ccb,\ccc,\ccd$,
 are associated to the four corner vertices, and the variable $\ccx$
 is associated to the face vertex.  The two-component parameters
\begin{align}
\label{pardefs}
\ccpp=(\ccpa,\ccpb),\qquad\ccqq=(\ccqa,\ccqb),
\end{align}
are associated to the edges of the face, where two opposite edges that
are parallel are assigned the same components of the parameters.

 \begin{figure}[tbh]
\centering
\begin{tikzpicture}[scale=0.8]

\draw[-,gray,very thin,dashed] (5,-1)--(5,3)--(1,3)--(1,-1)--(5,-1);
\draw[-] (5,3)--(1,-1);
\draw[-] (5,-1)--(1,3);
\fill (0.8,-0.0) circle (0.1pt)
node[left=0.5pt]{\color{black}\small $\ccpa$};
\fill (5.2,-0.0) circle (0.1pt)
node[right=0.5pt]{\color{black}\small $\ccpa$};
\fill (4,-1.2) circle (0.1pt)
node[below=0.5pt]{\color{black}\small $\ccqb$};
\fill (4,3.2) circle (0.1pt)
node[above=0.5pt]{\color{black}\small $\ccqb$};
\fill (0.8,2.0) circle (0.1pt)
node[left=0.5pt]{\color{black}\small $\ccpb$};
\fill (5.2,2.0) circle (0.1pt)
node[right=0.5pt]{\color{black}\small $\ccpb$};
\fill (2,-1.2) circle (0.1pt)
node[below=0.5pt]{\color{black}\small $\ccqa$};
\fill (2,3.2) circle (0.1pt)
node[above=0.5pt]{\color{black}\small $\ccqa$};
\fill (3,1) circle (3.5pt)
node[below=7.5pt]{\color{black} $\ccx$};
\fill (1,-1) circle (3.5pt)
node[left=1.5pt]{\color{black} $\ccc$};
\fill (1,3) circle (3.5pt)
node[left=1.5pt]{\color{black} $\cca$};
\fill (5,3) circle (3.5pt)
node[right=1.5pt]{\color{black} $\ccb$};
\fill (5,-1) circle (3.5pt)
node[right=1.5pt]{\color{black} $\ccd$};

\draw[-,dotted,thick] (2,-1.2)--(2,3.2);\draw[-,dotted,thick]
(4,-1.2)--(4,3.2);
\draw[-,dotted,thick] (0.8,-0.0)--(5.2,-0.0);\draw[-,dotted,thick]
(0.8,2.0)--(5.2,2.0);

\end{tikzpicture}

\caption{Variables and parameters associated to the vertices and
edges of a face of the face-centered cube.  In terms of the graphical
representation of CAFCC this is a type-A equation.}
\label{fig-face}
\end{figure}
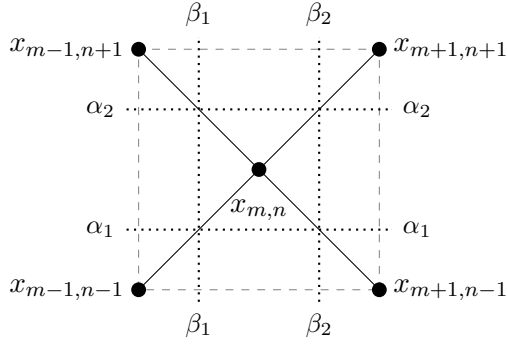

The face-centered quad equations that we are interested in, are always
linear in the four corner variables $\cca,\ccb,\ccc,\ccd$,
but not the face variable $\ccx$
(typically they are quadratic in $\ccx$,
except for the more complicated elliptic case).  Such face-centered quad
equations can be written in a general polynomial form
\begin{equation}
    \begin{split}
    A=\kappa_1x_ax_bx_cx_d+\kappa_2x_ax_bx_c+\kappa_3x_ax_bx_d+\kappa_4x_ax_cx_d+\kappa_5x_bx_cx_d
    \phantom{,}
    \\
    +\kappa_6x_ax_b+\kappa_7x_ax_c+\kappa_8x_ax_d+\kappa_9x_bx_c+\kappa_{10}x_bx_d+\kappa_{11}x_cx_d
    \phantom{,}
    \\
    +\kappa_{12}x_a+\kappa_{13}x_b+\kappa_{14}x_c+\kappa_{15}x_d+\kappa_{16}=0,
    \end{split}
    \label{afflin}
\end{equation}
where $x_a=\cca,x_b=\ccb,x_c=\ccc,x_d=\ccd$, and the coefficients
$\kappa_i(x_{m,n};\ccpp,\ccqq)$ ($i=1,\ldots,16$) depend on the face
variable $\ccx$ and the
four components $\ccpa,\ccpb,\ccqa,\ccqb$ of the parameters $\ccpp,\ccqq$.
The multi-linear expression \eqref{afflin} resembles the expression for
regular quad equations, except in the latter case the coefficients should
only depend on two parameters and no variables.

The face-centered quad equations that were found to satisfy CAFCC
\cite{Kels2020cafcc} are grouped into three different types.
Type-A equations $\A{\ccx}{\cca}{\ccb}{\ccc}{\ccd}{\ccpp}{\ccqq}$ are
invariant under the following exchanges of variables and parameters
\begin{subequations}
    \begin{gather}
        \label{refsym}
        \{\cca\leftrightarrow\ccb, \ccc\leftrightarrow\ccd,
        \beta_1\leftrightarrow\beta_2\},
        \\
        \label{symAB}
        \{\cca\leftrightarrow\ccc,\ccb\leftrightarrow\ccd,\alpha_1\leftrightarrow\alpha_2\},
        \\
        \label{symA}
        \{\cca\leftrightarrow\ccd,\al\leftrightarrow\bt\}.
    \end{gather}
\end{subequations}
Type-B equations are only invariant under \eqref{refsym} and
\eqref{symAB}, while type-C equations are only invariant under
\eqref{refsym}.  It is also useful to distinguish these three different
types of equations by using three different graphical representations,
where the type-A equation is shown in Figure \ref{fig-face} and the
type-B and type-C equations are shown in Figure \ref{fig:3fig4quad}.
Type-A equations are drawn with single-line edges and type-B equations
are drawn with double-line edges.  Type-C equation are drawn with both
single- and double-line edges, and the reason for this is that they arise
on the face-centered cube as an intermediate equation that connects a
type-A equation and a type-B equation.

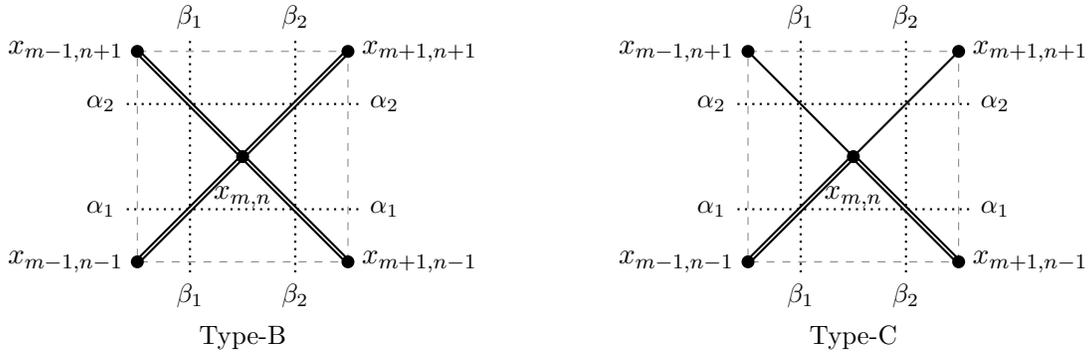
\begin{figure}[tbh]
\centering
\begin{tikzpicture}[scale=0.7]

\draw[-,gray,very thin,dashed] (5,-1)--(5,3)--(1,3)--(1,-1)--(5,-1);
\draw[-,double,thick] (5,3)--(1,-1);
\draw[-,double,thick] (5,-1)--(1,3);
\fill (0.8,-0.0) circle (0.1pt)
node[left=0.5pt]{\color{black}\small $\ccpa$};
\fill (5.2,-0.0) circle (0.1pt)
node[right=0.5pt]{\color{black}\small $\ccpa$};
\fill (4,-1.2) circle (0.1pt)
node[below=0.5pt]{\color{black}\small $\ccqb$};
\fill (4,3.2) circle (0.1pt)
node[above=0.5pt]{\color{black}\small $\ccqb$};
\fill (0.8,2.0) circle (0.1pt)
node[left=0.5pt]{\color{black}\small $\ccpb$};
\fill (5.2,2.0) circle (0.1pt)
node[right=0.5pt]{\color{black}\small $\ccpb$};
\fill (2,-1.2) circle (0.1pt)
node[below=0.5pt]{\color{black}\small $\ccqa$};
\fill (2,3.2) circle (0.1pt)
node[above=0.5pt]{\color{black}\small $\ccqa$};
\fill (3,1) circle (3.5pt)
node[below=7.5pt]{\color{black} $\ccx$};
\fill (1,-1) circle (3.5pt)
node[left=1.5pt]{\color{black} $\ccc$};
\fill (1,3) circle (3.5pt)
node[left=1.5pt]{\color{black} $\cca$};
\fill (5,3) circle (3.5pt)
node[right=1.5pt]{\color{black} $\ccb$};
\fill (5,-1) circle (3.5pt)
node[right=1.5pt]{\color{black} $\ccd$};

\draw[-,dotted,thick] (2,-1.2)--(2,3.2);\draw[-,dotted,thick] (4,-1.2)--(4,3.2);
\draw[-,dotted,thick] (0.8,-0.0)--(5.2,-0.0);\draw[-,dotted,thick] (0.8,2.0)--(5.2,2.0);

\fill (3,-2) circle (0.01pt)
node[below=0.5pt]{\color{black}\small Type-B};

\begin{scope}[xshift=330pt]

\draw[-,gray,very thin,dashed] (5,-1)--(5,3)--(1,3)--(1,-1)--(5,-1);
\draw[-,thick] (5,3)--(3,1)--(1,3);
\draw[-,double,thick] (1,-1)--(3,1)--(5,-1);
\fill (0.8,-0.0) circle (0.1pt)
node[left=0.5pt]{\color{black}\small $\ccpa$};
\fill (5.2,-0.0) circle (0.1pt)
node[right=0.5pt]{\color{black}\small $\ccpa$};
\fill (4,-1.2) circle (0.1pt)
node[below=0.5pt]{\color{black}\small $\ccqb$};
\fill (4,3.2) circle (0.1pt)
node[above=0.5pt]{\color{black}\small $\ccqb$};
\fill (0.8,2.0) circle (0.1pt)
node[left=0.5pt]{\color{black}\small $\ccpb$};
\fill (5.2,2.0) circle (0.1pt)
node[right=0.5pt]{\color{black}\small $\ccpb$};
\fill (2,-1.2) circle (0.1pt)
node[below=0.5pt]{\color{black}\small $\ccqa$};
\fill (2,3.2) circle (0.1pt)
node[above=0.5pt]{\color{black}\small $\ccqa$};
\fill (3,1) circle (3.5pt)
node[below=7.5pt]{\color{black} $\ccx$};
\fill (1,-1) circle (3.5pt)
node[left=1.5pt]{\color{black} $\ccc$};
\fill (1,3) circle (3.5pt)
node[left=1.5pt]{\color{black} $\cca$};
\fill (5,3) circle (3.5pt)
node[right=1.5pt]{\color{black} $\ccb$};
\fill (5,-1) circle (3.5pt)
node[right=1.5pt]{\color{black} $\ccd$};

\draw[-,dotted,thick] (2,-1.2)--(2,3.2);\draw[-,dotted,thick] (4,-1.2)--(4,3.2);
\draw[-,dotted,thick] (0.8,-0.0)--(5.2,-0.0);\draw[-,dotted,thick] (0.8,2.0)--(5.2,2.0);

\fill (3,-2) circle (0.01pt)
node[below=0.5pt]{\color{black}\small Type-C};

\end{scope}

\end{tikzpicture}
\caption{Graphical representations of type-B (left) and type-C (right) face-centered quad equations.}
\label{fig:3fig4quad}
\end{figure}

The three types of face-centered quad equations associated to Figures
\ref{fig-face} and \ref{fig:3fig4quad} can typically also be written in
the equivalent forms
\begin{align}
\label{4leg}
&\textrm{Type-A:}\;&\frac{
a(\ccx;\cca;\ccpb,\ccqa)a(\ccx;\ccd;\ccpa,\ccqb)}{a(\ccx;\ccb;\ccpb,\ccqb)a(\ccx;\ccc;\ccpa,\ccqa)}=1,
\\[0.1cm]
%
%
%
%
\label{4legb}
&\textrm{Type-B:}\;&\frac{b(\ccx;\cca;\ccpb,\ccqa)b(\ccx;\ccd;\ccpa,\ccqb)}{b(\ccx;\ccb;\ccpb,\ccqb)b(\ccx;\ccc;\ccpa,\ccqa)}=1,
\\[0.1cm]
%
\label{4legc}
&\textrm{Type-C:}\;&\frac{a(\ccx;\cca;\ccpb,\ccqa)c(\ccx;\ccd;\ccpa,\ccqb)}{a(\ccx;\ccb;\ccpb,\ccqb)c(\ccx;\ccc;\ccpa,\ccqa)}=1.
\end{align}
in terms of three types of ``leg'' functions $a(x;y;\ccpc,\ccqc)$,
$b(x;y;\ccpc,\ccqc)$, and $c(x;y;\ccpc,\ccqc)$.
The ``leg'' function $a(x_i;x_j;\ccpc,\ccqc)$ is associated to
the solid edges of both type-A and type-C equations, and the leg
functions $b(x_i;x_j;\ccpc,\ccqc)$ and $c(x_i;x_j;\ccpc,\ccqc)$
are associated to the solid double edges of type-B and type-C
equations respectively.  The leg function $a(x_i;x_j;\ccpc,\ccqc)$
satisfies a symmetry $a(x_i;x_j;\ccpc,\ccqc)a(x_i;x_j;\ccqc,\ccpc)=1$.
The leg functions $a(x_i;x_j;\ccpc,\ccqc)$, $b(x_i;x_j;\ccpc,\ccqc)$,
and $c(x_i;x_j;\ccpc,\ccqc)$, are each rational linear functions of a
corner variable $y$, and depend also on the face variable $x$ and one
component from each of the parameters $\ccpp$, $\ccqq$.  The expressions \eqref{4leg}--\eqref{4legc} may be regarded as analogues of the three-leg forms for the regular quad equations \cite{ABS2003}.  For the cases of type-A equations, the expressions of the form \eqref{4leg} turn out to be equivalent to expressions for discrete Laplace-type (or Toda-type) equations associated to type-Q ABS equations \cite{BobenkoSuris2002}.

Explicit expressions for the different equations and combinations
that have been found to satisfy CAFCC are given in Appendix
\ref{app:equations}.  The property of CAFCC itself will not be considered
here as it is not essential for this paper.

\subsection{Algebraic entropy for face-centered quad equations}
\label{sec:aefcqe}

In this section we introduce the concept of algebraic entropy
for face-centered quad equations.
This may be regarded as an extension of the definition of algebraic
entropy for quad equations 
\cite{Tremblay2001,Viallet2006,Viallet2009,GSL_general,GubbiottiASIDE16}.
In what follows we will define the forms of the standard initial
conditions and describe the essential tools that are involved in the entropy computations.

\subsubsection{Double staircases of initial conditions}

Like quad equations, face-centered quad equations are multilinear 
in the extremal variables $x_{m\pm1,n\pm1}$.
This implies that face-centered quad equations satisfy a fundamental 
condition required to apply the algebraic entropy criterion: in every 
direction of evolution the associated map is rational and invertible with
a rational inverse, that is they are birational.
There are four possible directions of evolution in the square lattice,
corresponding to solving the equations with respect to one of the four different
corner variables.

In general, initial conditions can be given along straight half-lines
in the four directions in the square lattice.
To be more precise, in terms of the following half-lines
\begin{subequations}
    \begin{align}
        \mathcal{M}_{m_{0},n_{0}}^{\left( + \right)} &=
        \bigl\{ \left( m,n_{0} \right)\in\Z^{2}\, |\, m_{0}\leq m < +\infty  \bigr\},
        \\
        \mathcal{M}_{m_{0},n_{0}}^{\left( - \right)} &=
        \bigl\{ \left( m,n_{0} \right)\in\Z^{2} \, |\, -\infty\leq m < m_{0}  \bigr\},
        \\
        \mathcal{N}_{m_{0},n_{0}}^{\left( + \right)} &=
        \bigl\{ \left( m_{0},n \right)\in\Z^{2} \, |\, n_{0}\leq n < +\infty  \bigr\},
        \\
        \mathcal{N}_{m_{0},n_{0}}^{\left( - \right)} &=
        \bigl\{ \left( m_{0},n \right)\in\Z^{2} \, |\, -\infty\leq n < n_{0}  \bigr\},
    \end{align}
    \label{eq:halflines}%
\end{subequations}
the standard initial conditions for face-centered quad equations are defined on the following sets
of indices:
\begin{align}
    \begin{gathered}
        \mathcal{I}^{\left(+,+\right)}_{m_{0},n_{0}}
        =
        \mathcal{M}_{m_{0},n_{0}}^{\left( + \right)} 
        \cup
        \mathcal{M}_{m_{0}+1,n_{0}+1}^{\left( + \right)} 
        \cup
        \mathcal{N}_{m_{0},n_{0}}^{\left( + \right)} 
        \cup
        \mathcal{N}_{m_{0}+1,n_{0}+1}^{\left( + \right)}, 
        \\
        \mathcal{I}^{\left(-,+\right)}_{m_{0},n_{0}}
        =
        \mathcal{M}_{m_{0},n_{0}}^{\left( - \right)} 
        \cup
        \mathcal{M}_{m_{0}+1,n_{0}-1}^{\left( - \right)} 
        \cup
        \mathcal{N}_{m_{0},n_{0}}^{\left( + \right)} 
        \cup
        \mathcal{N}_{m_{0}+1,n_{0}-1}^{\left( + \right)},
        \\
        \mathcal{I}^{\left(+,-\right)}_{m_{0},n_{0}}
        =
        \mathcal{M}_{m_{0},n_{0}}^{\left( + \right)} 
        \cup
        \mathcal{M}_{m_{0}-1,n_{0}+1}^{\left( + \right)} 
        \cup
        \mathcal{N}_{m_{0},n_{0}}^{\left( - \right)} 
        \cup
        \mathcal{N}_{m_{0}-1,n_{0}+1}^{\left( - \right)},
        \\
        \mathcal{I}^{\left(-,-\right)}_{m_{0},n_{0}}
        =
        \mathcal{M}_{m_{0},n_{0}}^{\left( - \right)} 
        \cup
        \mathcal{M}_{m_{0}+1,n_{0}+1}^{\left( - \right)} 
        \cup
        \mathcal{N}_{m_{0},n_{0}}^{\left( - \right)} 
        \cup
        \mathcal{N}_{m_{0}+1,n_{0}+1}^{\left( - \right)}.
    \end{gathered}
    \label{eq:inihalflines}
\end{align}

However, to compute the algebraic entropy of face-centered 
quad equations it is more convenient to give initial conditions 
on \emph{double staircase} configurations. 
We discuss later the main rationale behind this choice.
An evolution from any staircase-like arrangement of initial values is possible 
in the lattices of face-centered quad equations.
Some prototypical examples of double staircases of points in the lattice are shown in Figure 
\ref{fig:doublestaircases}.
In general, a double staircase configuration might include hook-like configurations as in example 
$(d)$ of Figure \ref{fig:doublestaircases}.
However, a hook-like configuration requires compatibility conditions
on the initial data since there will be more than one way to calculate 
the same value for the dependent variable.
For this reason we exclude these kinds of configurations from our consideration.

\begin{figure}[htb!]
   \centering
   \begin{tikzpicture}[scale=0.5]
        \draw[style=help lines,dashed] (0,0) grid[step=1cm] (30,30);
        \draw[ultra thick] 
            (8,0) node[circle,fill,inner sep=2pt](){}--
            (6,0) node[circle,fill,inner sep=2pt](){}--
            (6,2) node[circle,fill,inner sep=2pt](){}--
            (6,4) node[circle,fill,inner sep=2pt](){}--
            (4,4) node[circle,fill,inner sep=2pt](){}--
            (4,6) node[circle,fill,inner sep=2pt](){}--
            (4,8) node[circle,fill,inner sep=2pt](){}--
            (2,8) node[circle,fill,inner sep=2pt](){}--
            (2,10) node[circle,fill,inner sep=2pt](){}--
            (2,12) node[circle,fill,inner sep=2pt](){}--
            (0,12) node[circle,fill,inner sep=2pt](){};
        \draw[very thick] (7,0)--
            (7,1) node[circle,fill,color=gray,inner sep=2pt](){}--
            (7,3) node[circle,fill,color=gray,inner sep=2pt](){}--
            (5,3) node[circle,fill,color=gray,inner sep=2pt](){}--
            (5,5) node[circle,fill,color=gray,inner sep=2pt](){}--
            (5,7) node[circle,fill,color=gray,inner sep=2pt](){}--
            (3,7) node[circle,fill,color=gray,inner sep=2pt](){}--
            (3,9) node[circle,fill,color=gray,inner sep=2pt](){}--
            (1,9) node[circle,fill,color=gray,inner sep=2pt](){}--
            (1,11) node[circle,fill,color=gray,inner sep=2pt](){}--
            (1,13) node[circle,fill,color=gray,inner sep=2pt](){}--(0,13);
        \node[above right] at (0,13) {(a)};
        \draw[ultra thick, dash dot] (30,18)--
            (29,18) node[circle,fill,inner sep=2pt](){}--
            (29,20) node[circle,fill,inner sep=2pt](){}--
            (27,20) node[circle,fill,inner sep=2pt](){}--
            (27,22) node[circle,fill,inner sep=2pt](){}--
            (25,22) node[circle,fill,inner sep=2pt](){}--
            (25,24) node[circle,fill,inner sep=2pt](){}--
            (23,24) node[circle,fill,inner sep=2pt](){}--
            (23,26) node[circle,fill,inner sep=2pt](){}--
            (21,26) node[circle,fill,inner sep=2pt](){}--
            (21,28) node[circle,fill,inner sep=2pt](){}--
            (19,28) node[circle,fill,inner sep=2pt](){}--
            (19,30) node[circle,fill,inner sep=2pt](){}--
            (17,30) node[circle,fill,inner sep=2pt](){};
        \draw[very thick,dash dot] (30,17) node[circle,fill,color=gray,inner sep=2pt](){}--
            (30,19) node[circle,fill,color=gray,inner sep=2pt](){}--
            (28,19) node[circle,fill,color=gray,inner sep=2pt](){}--
            (28,21) node[circle,fill,color=gray,inner sep=2pt](){}--
            (26,21) node[circle,fill,color=gray,inner sep=2pt](){}--
            (26,23) node[circle,fill,color=gray,inner sep=2pt](){}--
            (24,23) node[circle,fill,color=gray,inner sep=2pt](){}--
            (24,25) node[circle,fill,color=gray,inner sep=2pt](){}--
            (22,25) node[circle,fill,color=gray,inner sep=2pt](){}--
            (22,27) node[circle,fill,color=gray,inner sep=2pt](){}--
            (20,27) node[circle,fill,color=gray,inner sep=2pt](){}--
            (20,29) node[circle,fill,color=gray,inner sep=2pt](){}--
            (18,29) node[circle,fill,color=gray,inner sep=2pt](){}--
            (18,30);
        \node[below left] at (29,18) {(b)};
        \draw[ultra thick, dash pattern={on 7pt off 2pt on 1pt off 3pt}] 
            (0,17) node[circle,fill,inner sep=2pt](){}--
            (2,17) node[circle,fill,inner sep=2pt](){}--
            (4,17) node[circle,fill,inner sep=2pt](){}--
            (4,19) node[circle,fill,inner sep=2pt](){}--
            (6,19) node[circle,fill,inner sep=2pt](){}--
            (8,19) node[circle,fill,inner sep=2pt](){}--
            (10,19) node[circle,fill,inner sep=2pt](){}--
            (12,19) node[circle,fill,inner sep=2pt](){}--
            (12,21) node[circle,fill,inner sep=2pt](){}--
            (12,23) node[circle,fill,inner sep=2pt](){}--
            (12,25) node[circle,fill,inner sep=2pt](){}--
            (12,27) node[circle,fill,inner sep=2pt](){}--
            (12,29) node[circle,fill,inner sep=2pt](){}--
            (12,30);
        \draw[very thick, dash pattern={on 7pt off 2pt on 1pt off 3pt}] 
            (0,18) -- 
            (1,18) node[circle,fill,color=gray,inner sep=2pt](){}--
            (3,18) node[circle,fill,color=gray,inner sep=2pt](){}--
            (5,18) node[circle,fill,color=gray,inner sep=2pt](){}--
            (5,20) node[circle,fill,color=gray,inner sep=2pt](){}--
            (7,20) node[circle,fill,color=gray,inner sep=2pt](){}--
            (9,20) node[circle,fill,color=gray,inner sep=2pt](){}--
            (11,20) node[circle,fill,color=gray,inner sep=2pt](){}--
            (11,22) node[circle,fill,color=gray,inner sep=2pt](){}--
            (11,24) node[circle,fill,color=gray,inner sep=2pt](){}--
            (11,26) node[circle,fill,color=gray,inner sep=2pt](){}--
            (11,28) node[circle,fill,color=gray,inner sep=2pt](){}--
            (11,30) node[circle,fill,color=gray,inner sep=2pt](){};
        \node[above] at (1,18) {(c)};
        \draw[ultra thick, dash pattern={on 7pt off 3pt on 1pt off 2pt}] 
            (30,4) node[circle,fill,inner sep=2pt](){}--
            (28,4) node[circle,fill,inner sep=2pt](){}--
            (26,4) node[circle,fill,inner sep=2pt](){}--
            (26,6) node[circle,fill,inner sep=2pt](){}--
            (26,8) node[circle,fill,inner sep=2pt](){}--
            (24,8) node[circle,fill,inner sep=2pt](){}--
            (22,8) node[circle,fill,inner sep=2pt](){}--
            (22,10) node[circle,fill,inner sep=2pt](){}--
            (22,12) node[circle,fill,inner sep=2pt](){}--
            (24,12) node[circle,fill,inner sep=2pt](){}--
            (26,12) node[circle,fill,inner sep=2pt](){}--
            (26,14) node[circle,fill,inner sep=2pt](){}--
            (24,14) node[circle,fill,inner sep=2pt](){}--
            (22,14) node[circle,fill,inner sep=2pt](){}--
            (20,14) node[circle,fill,inner sep=2pt](){}--
            (18,14) node[circle,fill,inner sep=2pt](){}--
            (16,14) node[circle,fill,inner sep=2pt](){}--
            (16,14) node[circle,fill,inner sep=2pt](){}--
            (16,16) node[circle,fill,inner sep=2pt](){}--
            (16,18) node[circle,fill,inner sep=2pt](){}--
            (18,18) node[circle,fill,inner sep=2pt](){}--
            (18,20) node[circle,fill,inner sep=2pt](){}--
            (18,22) node[circle,fill,inner sep=2pt](){}--
            (18,24) node[circle,fill,inner sep=2pt](){}--
            (16,24) node[circle,fill,inner sep=2pt](){}--
            (14,24) node[circle,fill,inner sep=2pt](){}--
            (14,26) node[circle,fill,inner sep=2pt](){}--
            (14,28) node[circle,fill,inner sep=2pt](){}--
            (14,30) node[circle,fill,inner sep=2pt](){};
        \draw[very thick, dash pattern={on 7pt off 3pt on 1pt off 2pt}] (30,5)--
            (29,5) node[circle,fill,color=gray,inner sep=2pt](){}--
            (27,5) node[circle,fill,color=gray,inner sep=2pt](){}--
            (27,7) node[circle,fill,color=gray,inner sep=2pt](){}--
            (27,9) node[circle,fill,color=gray,inner sep=2pt](){}--
            (25,9) node[circle,fill,color=gray,inner sep=2pt](){}--
            (23,9) node[circle,fill,color=gray,inner sep=2pt](){}--
            (23,11) node[circle,fill,color=gray,inner sep=2pt](){}--
            (25,11) node[circle,fill,color=gray,inner sep=2pt](){}--
            (27,11) node[circle,fill,color=gray,inner sep=2pt](){}--
            (27,13) node[circle,fill,color=gray,inner sep=2pt](){}--
            (27,15) node[circle,fill,color=gray,inner sep=2pt](){}--
            (25,15) node[circle,fill,color=gray,inner sep=2pt](){}--
            (23,15) node[circle,fill,color=gray,inner sep=2pt](){}--
            (21,15) node[circle,fill,color=gray,inner sep=2pt](){}--
            (19,15) node[circle,fill,color=gray,inner sep=2pt](){}--
            (17,15) node[circle,fill,color=gray,inner sep=2pt](){}-- 
            (17,17) node[circle,fill,color=gray,inner sep=2pt](){}-- 
            (19,17) node[circle,fill,color=gray,inner sep=2pt](){}-- 
            (19,19) node[circle,fill,color=gray,inner sep=2pt](){}-- 
            (19,21) node[circle,fill,color=gray,inner sep=2pt](){}-- 
            (19,23) node[circle,fill,color=gray,inner sep=2pt](){}-- 
            (19,25) node[circle,fill,color=gray,inner sep=2pt](){}-- 
            (17,25) node[circle,fill,color=gray,inner sep=2pt](){}-- 
            (15,25) node[circle,fill,color=gray,inner sep=2pt](){}--
            (15,27) node[circle,fill,color=gray,inner sep=2pt](){}--
            (15,29) node[circle,fill,color=gray,inner sep=2pt](){}--
            (15,30);
        \node[below left] at (14,30) {(d)};
     \end{tikzpicture}
    \caption{General examples of double staircases of points for possible configurations of initial conditions for face-centered quad equations.}
    \label{fig:doublestaircases}
\end{figure}
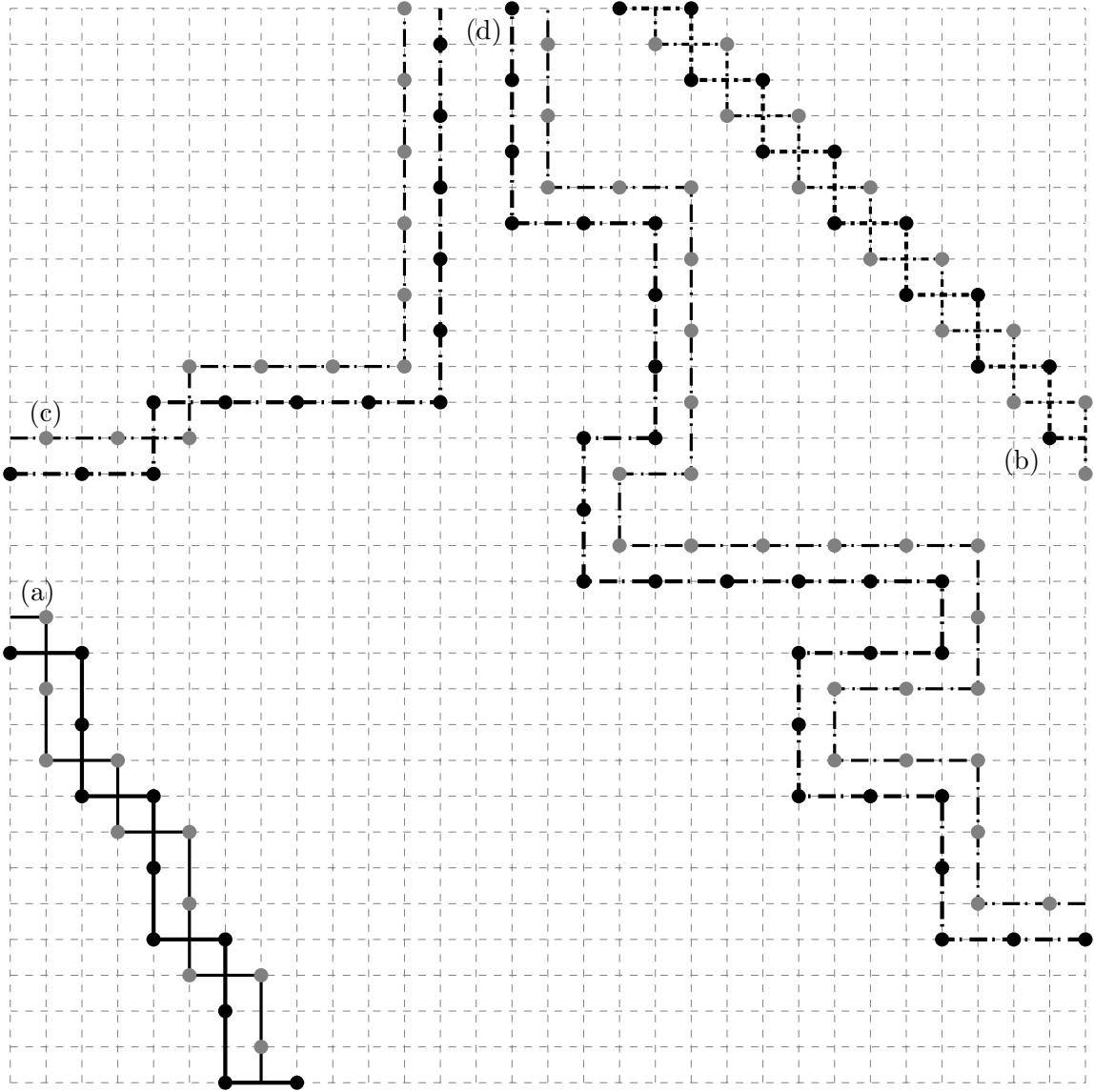

In general, double staircases go from $(-\infty,-\infty)$ to $(\infty, \infty)$, 
or from $(-\infty, +\infty)$ to $(\infty, -\infty)$, implying that
the space of initial conditions is infinite.
Like in the case of quad equations, we decide to restrict ourselves
to a specific kind of double staircases that we call 
\emph{regular double staircases}.
We define a double staircase to be regular when the external points
(the black ones in Figure \ref{fig:doublestaircases}) lie on a 
\emph{regular single staircase}, that is a staircase with 
steps of {constant} horizontal length, and {constant} vertical height.
This will leave some freedom in the choice of the interior points
(that is the grey ones in Figure \ref{fig:doublestaircases}).
We will make a specific choice later.
For example, in Figure \ref{fig:doublestaircases} the double staircases
$(a)$ and $(b)$ would be regular, $(c)$ would be irregular, 
and $(d)$ is excluded since it may lead to incompatibilities.
We remark that for the case of quad equations,
the problem of computing the entropy using irregular (single) staircases
was considered in \cite{Hietarintaetal2019}.

Given a positive integer $N$, a pair of coprime integers
${\mu,\nu}$, and a starting point $\boldsymbol{p}_{0}\in\Cn^{2}$, 
we define a restricted regular single staircase 
$\Delta_{[\mu,\nu]}^{(N)}\left( \boldsymbol{p}_{0} \right)$ to be the set of points:  
\begin{equation}
    \Delta_{[\mu,\nu]}^{(N)}\left( \boldsymbol{p}_{0} \right)=
    \boldsymbol{p}_{0}
    +\left\{\left(  k{\mu}+\sgn\mu j_{1},k\nu \right),  
        \left( (k+1)\mu, k{\nu}+\sgn\nu j_{2}\right)
    \right\}_{\substack{k=0,\dots,N-1\\j_{1}=0,\dots,\abs{\mu}\\j_{2}=0,\dots,\abs{\nu}}}.
    \label{eq:Deltadef}
\end{equation}
Note that for either $\mu=0$ or $\nu=0$, 
$\Delta_{[\mu,\nu]}^{(1)}\left( \boldsymbol{p}_{0} \right)$ collapses to a set of 
points on a straight line, %
and that $\Delta_{[0,0]}^{(1)}\left( \boldsymbol{p}_{0} \right)=\left\{ \boldsymbol{p}_{0} \right\}$. 

Then a \emph{restricted regular double staircase} denoted by 
$\Gamma_{[\mu,\nu]}^{(N)}\left( \boldsymbol{p}_{0} \right)$, can be defined in 
terms of regular single staircases as follows: 
\begin{equation}
    \Gamma_{[\mu,\nu]}^{(N)}\left( \boldsymbol{p}_{0} \right) = 
    2\Delta_{[\mu,\nu]}^{(N)}\left(\tfrac{\boldsymbol{p}_{0}}{2} \right)
    \cup
    2\Delta_{[\mu,\nu]}^{(N-1)}\left(\tfrac{\boldsymbol{p}_{1}}{2} \right)
    \cup
    2\Delta_{[\mu-\sgn\mu,\nu-\sgn\nu]}^{(1)}\left(\tfrac{\boldsymbol{p}_{2}}{2} \right),
    \label{eq:Gammaregdecomp}
\end{equation}
where:
\begin{subequations}
    \begin{align}
        \boldsymbol{p}_{1} &=
        \boldsymbol{p}_{0}+\left(\sgn\mu,\sgn\nu \right),
        \label{eq:m1n1}
        \\
        \boldsymbol{p}_{2} &=
        \boldsymbol{p}_{1} + 2\left( N-1 \right)(\mu,\nu).
        \label{eq:m2n2}
    \end{align}
    \label{eq:mng}%
\end{subequations}
Some examples of restricted regular double staircases are shown in 
Figure \ref{fig:regulardiagonals}. In \eqref{eq:Gammaregdecomp}, $2\Delta_{[\mu,\nu]}^{(N)}\left(\tfrac{\boldsymbol{p}_{0}}{2} \right)$ is a staircase corresponding to black points in Figure \ref{fig:regulardiagonals}, and $2\Delta_{[\mu,\nu]}^{(N-1)}\left(\tfrac{\boldsymbol{p}_{1}}{2} \right)\cup 2\Delta_{[\mu-\sgn\mu,\nu-\sgn\nu]}^{(1)}\left(\tfrac{\boldsymbol{p}_{2}}{2} \right)$ is a staircase corresponding to grey points in Figure \ref{fig:regulardiagonals}. 
The notation $\Gamma_{[\mu,\nu]}\left( \boldsymbol{p}_{0} \right)$ will be used 
when considering infinite (unrestricted) regular double staircases.

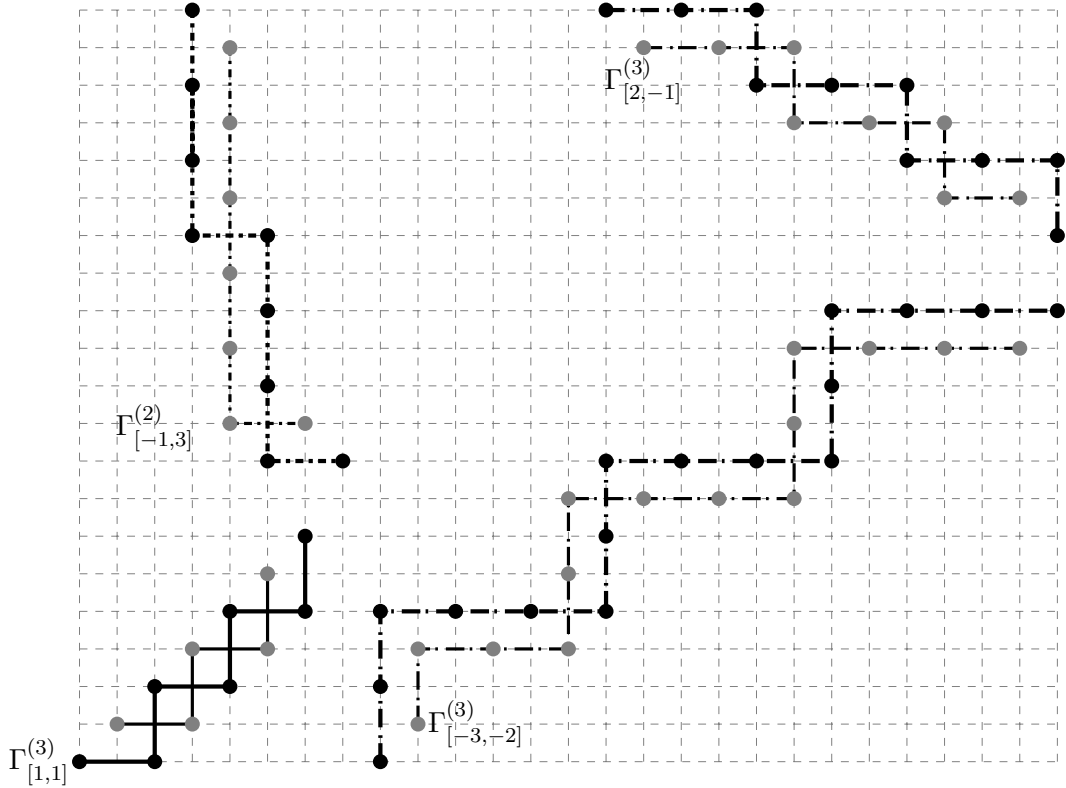
\begin{figure}[htb!]
   \centering
    \begin{tikzpicture}[scale=0.5]
         \draw[style=help lines,dashed] (0,0) grid[step=1cm] (26,20);
         \draw[ultra thick] (0,0) node[circle,fill,inner sep=2pt](){}--
            (2,0) node[circle,fill,inner sep=2pt](){}--
            (2,2) node[circle,fill,inner sep=2pt](){}--
            (4,2) node[circle,fill,inner sep=2pt](){}--
            (4,4) node[circle,fill,inner sep=2pt](){}--
            (6,4) node[circle,fill,inner sep=2pt](){}--
            (6,6) node[circle,fill,inner sep=2pt](){};
            \draw[very thick] (1,1) node[circle,fill,color=gray,inner sep=2pt](){}--
                (3,1) node[circle,fill,color=gray,inner sep=2pt](){}--
                (3,3) node[circle,fill,color=gray,inner sep=2pt](){}--
                (5,3) node[circle,fill,color=gray,inner sep=2pt](){}--
                (5,5) node[circle,fill,color=gray,inner sep=2pt](){};
         \node[left] at (0,0) {$\Gamma_{[1,1]}^{(3)}$};
         \draw[ultra thick, dash dot] (3,20) node[circle,fill,inner sep=2pt](){}--
            (3,16) node[circle,fill,inner sep=2pt](){}--
            (3,18) node[circle,fill,inner sep=2pt](){}--
            (3,14) node[circle,fill,inner sep=2pt](){}--
            (5,14) node[circle,fill,inner sep=2pt](){}--
            (5,12) node[circle,fill,inner sep=2pt](){}--
            (5,10) node[circle,fill,inner sep=2pt](){}--
            (5,8) node[circle,fill,inner sep=2pt](){}--
            (7,8) node[circle,fill,inner sep=2pt](){};
         \draw[very thick, dash dot] (4,19) node[circle,fill,color=gray,inner sep=2pt](){}--
            (4,17) node[circle,fill,color=gray,inner sep=2pt](){}--
            (4,15) node[circle,fill,color=gray,inner sep=2pt](){}--
            (4,13) node[circle,fill,color=gray,inner sep=2pt](){}--
            (4,11) node[circle,fill,color=gray,inner sep=2pt](){}--
            (4,9) node[circle,fill,color=gray,inner sep=2pt](){}--
            (6,9) node[circle,fill,color=gray,inner sep=2pt](){};
         \node[above] at (2,8) {$\Gamma^{(2)}_{[-1,3]}$};
         \draw[ultra thick, dash pattern={on 7pt off 2pt on 1pt off 3pt}] 
            (8,0) node[circle,fill,inner sep=2pt](){}--
            (8,2) node[circle,fill,inner sep=2pt](){}--
            (8,4) node[circle,fill,inner sep=2pt](){}--
            (10,4) node[circle,fill,inner sep=2pt](){}--
            (12,4) node[circle,fill,inner sep=2pt](){}--
            (14,4) node[circle,fill,inner sep=2pt](){}--
            (14,6) node[circle,fill,inner sep=2pt](){}--
            (14,8) node[circle,fill,inner sep=2pt](){}--
            (16,8) node[circle,fill,inner sep=2pt](){}--
            (18,8) node[circle,fill,inner sep=2pt](){}--
            (20,8) node[circle,fill,inner sep=2pt](){}--
            (20,10) node[circle,fill,inner sep=2pt](){}--
            (20,12) node[circle,fill,inner sep=2pt](){}--
            (22,12) node[circle,fill,inner sep=2pt](){}--
            (24,12) node[circle,fill,inner sep=2pt](){}--
            (26,12) node[circle,fill,inner sep=2pt](){};
         \draw[very thick, dash pattern={on 7pt off 2pt on 1pt off 3pt}] 
            (9,1) node[circle,fill,color=gray,inner sep=2pt](){}--
            (9,3) node[circle,fill,color=gray,inner sep=2pt](){}--
            (11,3) node[circle,fill,color=gray,inner sep=2pt](){}--
            (13,3) node[circle,fill,color=gray,inner sep=2pt](){}--
            (13,5) node[circle,fill,color=gray,inner sep=2pt](){}--
            (13,7) node[circle,fill,color=gray,inner sep=2pt](){}--
            (15,7) node[circle,fill,color=gray,inner sep=2pt](){}--
            (17,7) node[circle,fill,color=gray,inner sep=2pt](){}--
            (19,7) node[circle,fill,color=gray,inner sep=2pt](){}--
            (19,9) node[circle,fill,color=gray,inner sep=2pt](){}--
            (19,11) node[circle,fill,color=gray,inner sep=2pt](){}--
            (21,11) node[circle,fill,color=gray,inner sep=2pt](){}--
            (23,11) node[circle,fill,color=gray,inner sep=2pt](){}--
            (25,11) node[circle,fill,color=gray,inner sep=2pt](){};
         \node[right] at (9,1) {$\Gamma^{(3)}_{[-3,-2]}$};
         \draw[ultra thick, dash pattern={on 7pt off 3pt on 1pt off 2pt}] 
            (14,20) node[circle,fill,inner sep=2pt](){}--
            (16,20) node[circle,fill,inner sep=2pt](){}--
            (18,20) node[circle,fill,inner sep=2pt](){}--
            (18,18) node[circle,fill,inner sep=2pt](){}--
            (20,18) node[circle,fill,inner sep=2pt](){}--
            (22,18) node[circle,fill,inner sep=2pt](){}--
            (22,16) node[circle,fill,inner sep=2pt](){}--
            (24,16) node[circle,fill,inner sep=2pt](){}--
            (26,16) node[circle,fill,inner sep=2pt](){}--
            (26,14) node[circle,fill,inner sep=2pt](){};
         \draw[very thick, dash pattern={on 7pt off 3pt on 1pt off 2pt}] 
            (15,19) node[circle,fill,color=gray,inner sep=2pt](){}--
            (17,19) node[circle,fill,color=gray,inner sep=2pt](){}--
            (19,19) node[circle,fill,color=gray,inner sep=2pt](){}--
            (19,17) node[circle,fill,color=gray,inner sep=2pt](){}--
            (21,17) node[circle,fill,color=gray,inner sep=2pt](){}--
            (23,17) node[circle,fill,color=gray,inner sep=2pt](){}--
            (23,15) node[circle,fill,color=gray,inner sep=2pt](){}--
            (25,15) node[circle,fill,color=gray,inner sep=2pt](){};
         \node[below] at (15,19) {$\Gamma_{[2,-1]}^{(3)}$};
     \end{tikzpicture}   
    \caption{Examples of initial conditions on restricted regular double staircases $\Gamma_{[\mu,\nu]}^{(N)}$.}
    \label{fig:regulardiagonals}
\end{figure}

An important observation is that {if we want to investigate the 
    evolution of the equations for a finite number of iterations, 
    then we only need a double staircase of initial conditions with finite extent}.
More specifically, if we fix the initial conditions on
$\Gamma_{[\lambda,\mu]}^{(N)}\left( m_{0},n_{0} \right)$,
then the iterations may be computed over a rectangle of size $4N^{2} \vert \mu \vert\vert \nu \vert$. 
Starting from the double staircase, there is a certain direction that uses 
all initial values, and we will calculate the evolution only in that direction.
This amounts to iteratively computing the values of the dependent variable in the chosen direction.

The total number of initial points in a double staircase
is $M = 2N(|\mu|+|\nu|)$.  We consider the associated initial values in an appropriate 
compactification of $\Cn^{M}$.
In this paper we will consider two different compactifications
of $\Cn^{M}$, namely $\Pj^{M}$ and $\Pj^{\frac{M}{2}+1}\times\Pj^{\frac{M}{2}-1}$.
In the first case we consider the initial conditions to be
parametrised by a line in $\Pj^{M}$ in the following form:
\begin{equation}
    x_{m,n} = \frac{\alpha_{m,n} t +\beta_{m,n}}{\alpha t + \beta},
    \quad \left( m,n \right)\in\Gamma_{[\mu,\nu]}^{(N)},
    \label{eq:xlmini}
\end{equation}
where $\alpha_{m,n},\beta_{m,n},\alpha,\beta\in\Cn$ are fixed parameters.
In the second case we differentiate between the corner vertices and the face vertex of an equation, and parametrise the initial conditions with respective lines in $\Pj^{\frac{M}{2}+1}$
and $\Pj^{\frac{M}{2}-1}$:
\begin{subequations}
    \begin{align}
        x_{m,n} &= \frac{\alpha_{m,n} t +\beta_{m,n}}{\alpha_{1} t + \beta_{1}},
        \quad \left( m,n \right)\in 2\Delta_{[\mu,\nu]}^{(N)}\left( \tfrac{\boldsymbol{p}_{0}}{2} \right),
        \\
        x_{m,n} &= \frac{\alpha_{m,n} t +\beta_{m,n}}{\alpha_{2} t + \beta_{2}},
        \quad \left( m,n \right)\in 2\Delta_{[\mu,\nu]}^{(N-1)}\left( \tfrac{\boldsymbol{p}_{1}}{2} \right)
        \cup 2\Delta_{[\mu-\sgn\mu,\nu-\sgn\nu]}^{(1)}\left( \tfrac{\boldsymbol{p}_{2}}{2} \right),
    \end{align}
    \label{eq:xlmini2}%
\end{subequations}
where $\alpha_{m,n},\beta_{m,n},\alpha_{i},\beta_{i}\in\Cn$ are fixed parameters. 
We emphasise that these two are not the only possible compactifications
of the space $\Cn^{M}$.
However, it is expected that the final
result for algebraic entropy will be independent of the choice of the compactification,
and thus our two choices of initial conditions should be sufficient for our investigations of the algebraic entropy for face-centered quad equations.

\subsubsection{Growth of degrees and algebraic entropy}

Now we may calculate the iterates by
considering the initial values as given in \eqref{eq:xlmini} or \eqref{eq:xlmini2}.
The simplest choice is to apply this construction to the
restricted double staircases $\Gamma_{[\pm 1, \pm 1]}^{(N)}$.
We will call these the \emph{fundamental double staircases}
(the upper index $(N)$ will be omitted for infinite lines).  Furthermore, we 
denote by $\Gamma_{NE}^{(N)}$, $\Gamma_{NW}^{(N)}$, $\Gamma_{SW}^{(N)}$,
and $\Gamma_{SE}^{(N)}$, the fundamental double staircases where
the evolution is directed in the north-east, north-west,
south-west, and south-east directions, respectively. These fundamental double staircases have relatively simple
expressions given by
\begin{subequations}
    \begin{align}
        \Gamma_{NE}^{(N)} &=
        \left\{ \left( -n,n \right)\in\Z^{2} | 0\leq n \leq 2N \right\}
        \cup
        \left\{ \left( -n-2,n \right)\in\Z^{2} | 0\leq n \leq 2N-2 \right\},
        \label{eq:GammaNE}
        \\
        \Gamma_{NW}^{(N)} &=
        \left\{ \left( n,n \right)\in\Z^{2} | 0\leq n \leq 2N \right\}
        \cup
        \left\{ \left( n+2,n \right)\in\Z^{2} | 0\leq n \leq 2N-2 \right\},
        \label{eq:GammaNW}
        \\
        \Gamma_{SW}^{(N)} &=
        \left\{ \left( n,-n \right)\in\Z^{2} | 0\leq n \leq 2N \right\}
        \cup
        \left\{ \left( n+2,-n \right)\in\Z^{2} | 0\leq n \leq 2N-2 \right\},
        \label{eq:GammaSW}
        \\
        \Gamma_{SE}^{(N)} &=
        \left\{ \left( -n,-n \right)\in\Z^{2} | 0\leq n \leq 2N \right\}
        \cup
        \left\{ \left( -n-2,-n \right)\in\Z^{2} | 0\leq n \leq 2N-2 \right\}.
        \label{eq:GammaSE}
    \end{align}
    \label{eq:GammasDir}%
\end{subequations}
The above fundamental double staircases are shown in Figure \ref{fig:princdiag}.
Note that in the case of fundamental regular staircase, there is no
ambiguity in the definition of the inner sequence of points.

\begin{figure}[hbt]
    \centering
   \begin{tikzpicture}[scale=0.5]
        \draw[style=help lines,dashed] (0,0) grid[step=1cm] (24,24);
        \draw[ultra thick] (12,0) node[circle,fill,inner sep=2pt](){}--
            (10,0) node[circle,fill,inner sep=2pt](){}--
            (10,2) node[circle,fill,inner sep=2pt](){}--
            (8,2) node[circle,fill,inner sep=2pt](){}--
            (8,4) node[circle,fill,inner sep=2pt](){}--
            (6,4) node[circle,fill,inner sep=2pt](){}--
            (6,6) node[circle,fill,inner sep=2pt](){}--
            (4,6) node[circle,fill,inner sep=2pt](){}--
            (4,8) node[circle,fill,inner sep=2pt](){}--
            (2,8) node[circle,fill,inner sep=2pt](){}--
            (2,10) node[circle,fill,inner sep=2pt](){}--
            (0,10) node[circle,fill,inner sep=2pt](){}--
            (0,12) node[circle,fill,inner sep=2pt](){};
        \draw[very thick] (11,1) node[circle,fill,color=gray,inner sep=2pt](){}--
            (9,1) node[circle,fill,color=gray,inner sep=2pt](){}--
            (9,3) node[circle,fill,color=gray,inner sep=2pt](){}--
            (7,3) node[circle,fill,color=gray,inner sep=2pt](){}--
            (7,5) node[circle,fill,color=gray,inner sep=2pt](){}-- 
            (5,5) node[circle,fill,color=gray,inner sep=2pt](){}--
            (5,7) node[circle,fill,color=gray,inner sep=2pt](){}--
            (3,7) node[circle,fill,color=gray,inner sep=2pt](){}--
            (3,9) node[circle,fill,color=gray,inner sep=2pt](){}--
            (1,9) node[circle,fill,color=gray,inner sep=2pt](){}--
            (1,11) node[circle,fill,color=gray,inner sep=2pt](){};
        \node[above left] at (10,9) {$\Gamma_{NE}^{(N)}$};
        \draw[thick,->] (8,7) --(10,9);
        \draw[ultra thick, dash dot] (0,16) node[circle,fill,inner sep=2pt](){}--
            (0,18) node[circle,fill,inner sep=2pt](){}--
            (2,18) node[circle,fill,inner sep=2pt](){}--
            (2,20) node[circle,fill,inner sep=2pt](){}--
            (4,20) node[circle,fill,inner sep=2pt](){}--
            (4,22) node[circle,fill,inner sep=2pt](){}--
            (6,22) node[circle,fill,inner sep=2pt](){}--
            (6,24) node[circle,fill,inner sep=2pt](){}--
            (8,24) node[circle,fill,inner sep=2pt](){};
        \draw[very thick,dash dot] (1,17) node[circle,fill,color=gray,inner sep=2pt](){}--
            (1,19) node[circle,fill,color=gray,inner sep=2pt](){}--
            (3,19) node[circle,fill,color=gray,inner sep=2pt](){}--
            (3,21) node[circle,fill,color=gray,inner sep=2pt](){}--
            (5,21) node[circle,fill,color=gray,inner sep=2pt](){}--
            (5,23) node[circle,fill,color=gray,inner sep=2pt](){}--
            (7,23) node[circle,fill,color=gray,inner sep=2pt](){};
        \node[above right] at (6,16) {$\Gamma_{SE}^{(N)}$};
        \draw[thick,->] (4,18)--(6,16);
        \draw[ultra thick, dash pattern={on 7pt off 2pt on 1pt off 3pt}] 
            (12,24) node[circle,fill,inner sep=2pt](){}--
            (14,24) node[circle,fill,inner sep=2pt](){}--
            (14,22) node[circle,fill,inner sep=2pt](){}--
            (16,22) node[circle,fill,inner sep=2pt](){}--
            (16,20) node[circle,fill,inner sep=2pt](){}--
            (18,20) node[circle,fill,inner sep=2pt](){}--
            (18,18) node[circle,fill,inner sep=2pt](){}--
            (20,18) node[circle,fill,inner sep=2pt](){}--
            (20,16) node[circle,fill,inner sep=2pt](){}--
            (22,16) node[circle,fill,inner sep=2pt](){}--
            (22,14) node[circle,fill,inner sep=2pt](){}--
            (24,14) node[circle,fill,inner sep=2pt](){}--
            (24,12) node[circle,fill,inner sep=2pt](){};
        \draw[very thick, dash pattern={on 7pt off 2pt on 1pt off 3pt}] 
            (13,23) node[circle,fill,color=gray,inner sep=2pt](){}--
            (15,23) node[circle,fill,color=gray,inner sep=2pt](){}--
            (15,21) node[circle,fill,color=gray,inner sep=2pt](){}--
            (17,21) node[circle,fill,color=gray,inner sep=2pt](){}--
            (17,19) node[circle,fill,color=gray,inner sep=2pt](){}--
            (19,19) node[circle,fill,color=gray,inner sep=2pt](){}--
            (19,17) node[circle,fill,color=gray,inner sep=2pt](){}--
            (21,17) node[circle,fill,color=gray,inner sep=2pt](){}--
            (21,15) node[circle,fill,color=gray,inner sep=2pt](){}--
            (23,15) node[circle,fill,color=gray,inner sep=2pt](){}--
            (23,13) node[circle,fill,color=gray,inner sep=2pt](){};
        \draw[thick,->] (16,17)--(14,15);
        \node[below left] at (14,15) {$\Gamma_{SW}^{(N)}$};
        \draw[ultra thick, dash pattern={on 7pt off 3pt on 1pt off 2pt}] 
            (24,8) node[circle,fill,inner sep=2pt](){}--
            (24,6) node[circle,fill,inner sep=2pt](){}--
            (22,6) node[circle,fill,inner sep=2pt](){}--
            (22,4) node[circle,fill,inner sep=2pt](){}--
            (20,4) node[circle,fill,inner sep=2pt](){}--
            (20,2) node[circle,fill,inner sep=2pt](){}--
            (18,2) node[circle,fill,inner sep=2pt](){}--
            (18,0) node[circle,fill,inner sep=2pt](){}--
            (16,0) node[circle,fill,inner sep=2pt](){};
        \draw[very thick, dash pattern={on 7pt off 3pt on 1pt off 2pt}] 
            (23,7) node[circle,fill,color=gray,inner sep=2pt](){}--
            (23,5) node[circle,fill,color=gray,inner sep=2pt](){}--
            (21,5) node[circle,fill,color=gray,inner sep=2pt](){}--
            (21,3) node[circle,fill,color=gray,inner sep=2pt](){}--
            (19,3) node[circle,fill,color=gray,inner sep=2pt](){}--
            (19,1) node[circle,fill,color=gray,inner sep=2pt](){}--
            (17,1) node[circle,fill,color=gray,inner sep=2pt](){};
        \node[above right] at (16,6) {$\Gamma_{NW}^{(N)}$};
        \draw[thick,->] (18,4)--(16,6);
     \end{tikzpicture}
     \caption{The four fundamental regular double staircases (for fixed $N$).}
    \label{fig:princdiag}
\end{figure}
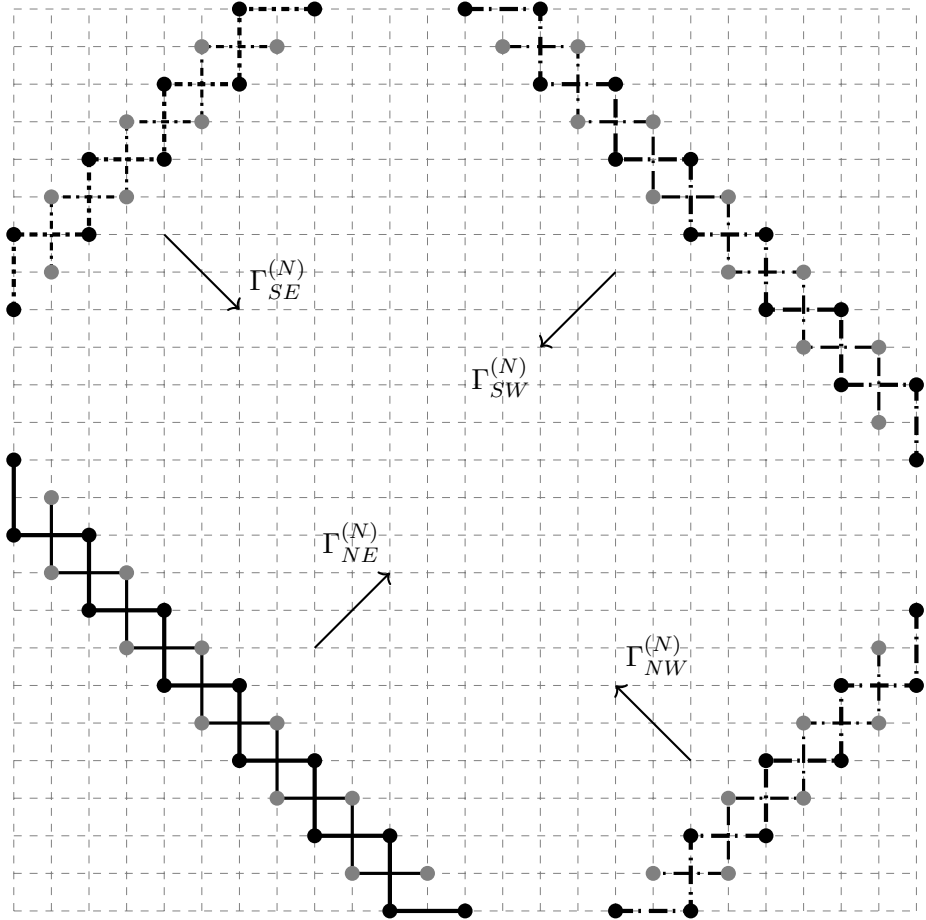

The number of points in the fundamental double staircases \eqref{eq:GammasDir} 
is  $4N$.  Thus the initial condition \eqref{eq:xlmini} on the fundamental 
double staircases is taken in $\Pj^{4N}$. 
Furthermore, according to \eqref{eq:Gammaregdecomp} the fundamental double staircases 
decompose into two regular single staircases which possess $2N+1$ and $2N-1$
points respectively.  Thus the initial condition \eqref{eq:xlmini2} on the fundamental double staircases is taken in $\Pj^{2N+1}\times\Pj^{2N-1}$
For convenience, in the remainder of the paper a special notation $\PTN$ and $\PNpPNm$ will be used to denote the respective spaces of initial conditions for fundamental double staircases as
\begin{align}\label{initialconditionsdef}
    \PTN=\Pj^{4N},\qquad \PNpPNm=\Pj^{2N+1}\times\Pj^{2N-1}.
\end{align}

Consider the evolution from the fundamental double staircase in the NE 
direction. 
From this evolution, sequences of degrees $d^{(i)}_k$, $k=1,\ldots,N$, $i=1,\ldots,N-k$, 
will be generated that take the shape shown in Figure \ref{fig:ourdegrees}.
The degree $d_{k}^{\left( i \right)}$ is computed as the maximum of
the degree of numerator and the denominator of the dependent variable
on the lattice.
In this example of the NE direction, we will call the sequence
\begin{equation}
    1, d_{1}^{(1)}, d_{2}^{(1)}, d_{3}^{(1)}, d_{4}^{(1)},\ldots
    \label{vialletseq}
\end{equation}
the \emph{primary sequence} of growth, while sequence as
\begin{equation}
    1, d_{1}^{(m)}, d_{2}^{(m)}, d_{3}^{(m)}, d_{4}^{(m)},\ldots
    \label{eq:ourseq}
\end{equation}
with $m=2,3,\dots$ will be \emph{secondary sequence} of growth, \emph{tertiary sequence}, 
and so on.  The sequences in the other directions of fundamental double staircases are defined similarly.
In Section \ref{sec:aecafcc}, it will be seen that
CAFCC equations can have up to {four} different growth patterns 
in each direction of evolution.

In general, we take as principal sequence the longest one, and then we 
move towards the direction with less elements.
That is, in the NE and SE direction we move left, while in the NW and
SW direction we move right.
Periodicity along the columns is found looking at repeating patterns
on different columns.

\begin{figure}[htb!]
\begin{equation}\nonumber
    \begin {array}{ccccccccccc} 
        1& &d_{1}^{(N-1)}& &d_{2}^{(N-2)}& &\dots& &d_{N-1}^{(3)}& &d_{N}^{(1)}
        \\ 
        &1& &d_{1}^{(N-2)}& &\dots& &\dots& &d_{N-1}^{(2)}& 
        \\ 
        1& &1& &\dots& &\dots& &\dots& &d_{N-1}^{(1)}
        \\ 
        &1& &1& &\dots& &\dots& &\dots& 
        \\ 
        & &\dots& &\dots& &\dots& &\dots& &\dots
        \\ 
        & & &1& &1& &\dots& &d_{2}^{(2)}& 
        \\ 
        & & & &1& &1& &d_{1}^{(3)}& &d_{2}^{(1)}
        \\ 
        & & & & &1& &1& &d_{1}^{(2)}& 
        \\ 
         & & & & & &1& &1& &d_{1}^{(1)}
        \\ 
         & & & & & & &1& &1& 
        \\ 
         & & & & & & & &1& &1
    \end {array} 
\end{equation}
\caption{Example of sequences of degrees for the fundamental double staircase $\Gamma_{NE}^{(N)}$.}
    \label{fig:ourdegrees}
\end{figure}

The fundamental algebraic entropies of a face-centered quad equation
are given in terms of the generated degrees by:
\begin{equation}
    S^{(i)} = \lim_{k\to\infty}\frac{1}{k}
        \log d_{k}^{(i)}.
    \label{eq:algentdefi}
\end{equation}
Furthermore, we define a \emph{maximal sequence of growth}:
\begin{equation}
    1, D_{1}, D_{2}, D_{3}, D_{4},D_{5},\ldots,
    \label{eq:maxseq}
\end{equation}
where $D_{k} = \max_{m=1,\dots,N-k} d_{k}^{(m)}$.
Associated to \eqref{eq:maxseq}, we define the maximal algebraic entropy of a face-centred 
quad equation:
\begin{equation}
    S_\text{Max} = \lim_{k\to\infty}\frac{1}{k}\log D_{k}.
    \label{eq:algentdefmax}
\end{equation}
The existence of the limits \eqref{eq:algentdefi} and \eqref{eq:algentdefmax} 
may be proven in an analogous way as for finite dimensional systems \cite{BellonViallet1999}.

We have the following general result on the growth of degrees for an 
\emph{autonomous} face-centered quad equation: 

\begin{prop}
    Assume we are given an autonomous face-centered quad equation,
    that is an equation of the form \eqref{afflin} where the polynomials
    $\kappa_{i}$ are \emph{independent of the variables $\left( m,n \right)$}.
    Then its maximal growth is asymptotically given by the solution of 
    the linear difference equation:
    \begin{equation}
        D_{k+1} = \left( K+2 \right) D_{k} + D_{k-1},
        \label{eq:asydef}
    \end{equation}
    that is:
    \begin{equation}
        D_{k} \sim 
        \left[\frac{1}{2}\left( K+2+\sqrt{K^2+4K+8} \right)\right]^{k},
        \quad
        k\to\infty,
        \label{eq:dnasy}
    \end{equation}
    where $K=\max_{i=1,\dots,16}\deg_{\xi} \kappa_{i}\left( \xi \right)$.
    So, for a fixed $K$ the maximal value of the algebraic entropy is:
    \begin{equation}
        S_{K} = \log \left[\frac{1}{2}\left( K+2+\sqrt{K^2+4K+8} \right)\right].
        \label{eq:aemax}
    \end{equation}
    \label{eq:generic}
\end{prop}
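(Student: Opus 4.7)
The plan is to bound the maximum $t$-degree produced by one step of evolution from a fundamental double staircase and then diagonalise the resulting linear recurrence. For concreteness I carry out the analysis for the NE direction; the other three cases are symmetric.

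First I rewrite the autonomous equation \eqref{afflin} by isolating the NE-most corner variable $x_b$, obtaining
\[
x_b = -\frac{V(x_a,x_c,x_d;\ccx)}{U(x_a,x_c,x_d;\ccx)},
\]
where $U$ and $V$ are multilinear in $(x_a,x_c,x_d)$ and polynomials in the face variable $\ccx$ of degree at most $K=\max_i\deg_\xi\kappa_i(\xi)$. Substituting initial data of the form \eqref{eq:xlmini} (or \eqref{eq:xlmini2}), I obtain the generic bound $\deg_t x_b \le \deg_t x_a + \deg_t x_c + \deg_t x_d + K\,\deg_t\ccx$. From the stencil of Figure \ref{fig-face}, the output $x_{m+1,n+1}$ lies on the anti-diagonal $m+n+2$, while the face $\ccx=x_{m,n}$ and the two input corners $x_{m-1,n+1},x_{m+1,n-1}$ all occupy $m+n$, and the remaining corner $x_{m-1,n-1}$ lies on $m+n-2$. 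Writing $D_k$ for the maximum $t$-degree reached on the $k$-th successive layer of the double staircase, the degree estimate specialises to $D_{k+1}\le (K+2)D_k+D_{k-1}$, which is \eqref{eq:asydef}.

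The main obstacle is to show that this inequality is asymptotically saturated for a generic autonomous equation, so that $D_k$ actually attains the claimed rate rather than being merely bounded by it. I would argue this by observing that a strict degree drop at some step forces $U$ and $V$ to develop a common factor in $t$, which in turn imposes a non-trivial polynomial identity on the constants $\kappa_i$; since the locus of coefficients satisfying such an identity is a closed subvariety of positive codimension in the parameter space of autonomous equations of the form \eqref{afflin}, it is avoided by a generic choice. For any specific equation the saturation can also be checked directly by symbolic computation of the first few iterates and by verifying that the measured degrees agree with the recurrence.

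Finally, the characteristic polynomial of \eqref{eq:asydef} is $\lambda^2-(K+2)\lambda-1=0$, whose roots $\lambda_\pm=\tfrac12(K+2\pm\sqrt{K^2+4K+8})$ satisfy $\lambda_+>1>|\lambda_-|$, so $D_k\sim\lambda_+^k$, which is \eqref{eq:dnasy}. Substituting this into the definition \eqref{eq:algentdefmax} yields the maximum algebraic entropy \eqref{eq:aemax}.
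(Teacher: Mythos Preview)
Your argument is correct and follows essentially the same route as the paper's proof: solve \eqref{afflin} for the outgoing corner, bound the degree of the resulting rational expression using multilinearity in the three input corners and degree $K$ in the face variable, and then identify the three ``layer $k$'' inputs and the single ``layer $k-1$'' input to obtain the recurrence \eqref{eq:asydef}. Your treatment is in fact more careful than the paper's: where the paper simply writes ``assuming that no factorisation occurs'' and passes directly to an equality, you correctly state the relation first as an inequality and then supply a genericity argument for saturation, and you also spell out the characteristic-equation computation that the paper leaves implicit.
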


\begin{proof}
    Consider equation \eqref{afflin} solved with respect to $x_{m+1,n+1}$.
    Assuming that no factorisation occurs, we can take the degrees 
    over the one-dimensional plaquette in the $(k+1)$-th iteration,
    given in Figure \ref{fig-face}.
    On the right-hand side we have $x_{m,n}$ at most at the $K$th power,
    while the equation is linear in $x_{m+1,n-1}$, $x_{m-1,n+1}$, and
    $x_{m-1,n-1}$.
    This means that at most we have:
    \begin{equation}
        \deg x_{m+1,n+1} =
        K\deg x_{m,n} + \deg x_{m+1,n-1} + \deg x_{m+1,n-1} + \deg x_{m-1,n-1}.
        \label{eq:xlmdeg}
    \end{equation}
    To obtain the estimate \eqref{eq:asydef} it is enough to note
    that we can assume that the maximal sequence $D_{k}$ is such
    that 
    \begin{subequations}
        \begin{gather}
        \deg x_{m+1,n+1} = D_{k+1},
        \\
        \deg x_{m,n} = \deg x_{m+1,n-1} = \deg x_{m+1,n-1} = D_{k},
        \\
        \deg x_{m-1,n-1}=D_{k-1}.
        \end{gather}
        \label{eq:xlmest}%
    \end{subequations}
    Substituting \eqref{eq:xlmest} into \eqref{eq:xlmdeg}, 
    we readily obtain formula \eqref{eq:asydef}.
\end{proof}

We note that for $K=0$, the formula \eqref{eq:asydef} reduces to:
\begin{equation}
    D_{k} \sim \left( 1+\sqrt{2}\right)^{k}.
    \label{eq:dkm0}
\end{equation}
As expected, this is the default growth for regular quad equations 
\cite{HietarintaBook,HietarintaViallet2007}.
This result implies that from the point of view of algebraic entropy
the structure of non-integrable face-centered quad equations is 
much richer than that of regular quad equations. 

Analogously to the case of regular quad equations \cite{HietarintaViallet2007}, according to the 
algebraic entropy and degree of iterates growth for the face-centered quad equations they may be classified as follows:
\begin{description}
    \item[Linear growth:] The equation is linearisable.
    \item[Polynomial growth:] The equation is integrable.
    \item[Exponential growth:] The equation is non-integrable.
\end{description}

\subsubsection{Generating functions of degree growths}

To compute the degree of iterates we adopt some simplification techniques that are used for the case of regular quad equations \cite{GubbiottiASIDE16,GrammaticosHalburdRamaniViallet2009}.
First, we choose all of the parameters involved in the equations to 
be integers.
Furthermore, to avoid potential factorisations that may inadvertently affect the results 
we choose these integers to be prime numbers. 
A final simplification to speed up the computations is given by considering the
factorisation of the iterates in some finite field $\K_{r}$, with $r$
a given prime number.
Using these rules we are able to avoid accidental cancellations
and produce a finite sequence of degrees:
\begin{equation}
    1,d_{1},d_{2},d_{3},d_{4},d_{5},\dots,d_{N}.
    \label{eq:onedimdegreesfin}
\end{equation}
A standard way to extract the asymptotic behaviour from a finite sequence like
\eqref{eq:onedimdegreesfin} is to compute its \emph{generating function}.
A generating function is a function $g=g\left( z \right)$ such
that the coefficients of its Taylor series 
\begin{equation}
    g(z) = \sum_{k=0}^{\infty} d_{k} z^{k}
    \label{eq:genfunc}
\end{equation}
up to order $N$ coincides with the finite sequence \eqref{eq:onedimdegreesfin}.
The adoption of this heuristic method is justified by the strong algebraic geometry 
structure behind the definition of algebraic entropy, see \cite{BellonViallet1999}.
Up to present, in all known examples the method is known to work except,
in some pathological cases, see \cite{Hasselblatt2007}.

If a generating function is rational, as it is safe to assume,
it can be computed exactly using a finite number of iterates
using the method of Pad\'e approximants \cite{Pade1892,BakerGraves1996}. 
An important property of rational generating functions is given in
the following theorem:

\begin{theorem}[\cite{Elaydi2005}]
    A sequence $\left\{ d_{k} \right\}_{k\in\N_{0}}$ admits a rational
    generating function $g\in\Cn\left( z \right)$ if and only if it 
    solves a \emph{linear difference equation with constant coefficients}.
    In such case if $g=P/Q$, with $P,Q\in\Cn\left[ z \right]$ and the
    polynomial $Q$ is monic, then the difference equation is given by 
    $Q\left( T_{k}^{-1} \right)\left( d_{k} \right)=0$, where $T_{k}d_{k}=d_{k+1}$
    is the translation operator.
    \label{th:rgf}
\end{theorem}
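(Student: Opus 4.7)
The plan is to prove both directions by directly comparing the Taylor expansion of $g(z)$ with the action of the shift operator, exploiting the elementary fact that multiplication by $z^j$ on a power series corresponds to the backward shift $T_k^{-j}$ on its coefficients.

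For the $(\Leftarrow)$ direction, suppose $\{d_k\}$ satisfies a linear recurrence with constant coefficients. Writing this as $\sum_{j=0}^{m} q_j \, d_{k-j}=0$ for all $k\geq m$, set $Q(z)=\sum_{j=0}^{m}q_j z^j$, renormalising so that $Q$ is monic, and form the formal series $g(z)=\sum_{k\geq 0}d_k z^k$. Multiplying out gives
\begin{equation*}
Q(z)\,g(z)=\sum_{k\geq 0}\Bigl(\sum_{j=0}^{\min(k,m)} q_j\,d_{k-j}\Bigr)z^k.
\end{equation*}
For $k\geq m$ the inner sum vanishes by the recurrence, so only finitely many coefficients survive. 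Hence $Q(z)g(z)=P(z)$ is a polynomial of degree $<m$, and $g=P/Q\in\Cn(z)$.

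For the $(\Rightarrow)$ direction, assume $g=P/Q$ with $Q$ monic and $Q(0)\neq 0$ (the latter being automatic because $g$ admits a power series at $0$). Writing $Q(z)=\sum_{j=0}^{m}q_j z^j$ and repeating the calculation above, the coefficient of $z^k$ in $P(z)=Q(z)g(z)$ is exactly $\sum_{j=0}^{\min(k,m)}q_j\,d_{k-j}$. Since $P$ has only finitely many nonzero coefficients, we obtain $\sum_{j=0}^{m}q_j\,d_{k-j}=0$ for every $k>\max(\deg P,m-1)$. Rewriting $\sum_{j=0}^{m}q_j T_k^{-j}=Q(T_k^{-1})$ yields precisely $Q(T_k^{-1})(d_k)=0$, which is the advertised difference equation. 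Moreover, the finite number of ``exceptional'' indices $k$ where the recurrence fails are exactly the terms encoded in the numerator $P$, confirming that the correspondence between $(P,Q)$ and (recurrence, initial data) is essentially tight.

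The argument is essentially bookkeeping; the only subtle point is the relation between normalisation conventions. Calling $Q$ \emph{monic} fixes it up to the single ambiguity of a global scalar, and the requirement $Q(0)\neq 0$ (needed for expansion at the origin) together with the matching of initial coefficients via $P$ guarantees uniqueness of the recurrence associated to $g$. No nontrivial algebraic-geometry input is required, so the statement can be viewed as a purely formal consequence of the Cauchy product for power series; the main (minor) obstacle is simply to keep track of which finite set of initial indices is absorbed by $P$ versus which indices are governed by the recurrence $Q(T_k^{-1})d_k=0$.
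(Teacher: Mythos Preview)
The paper does not actually supply a proof of this theorem: it is stated with a citation to \cite{Elaydi2005} and used as a black box. Your argument is the standard one and is correct; there is nothing to compare it against in the paper itself.
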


From Theorem \ref{th:rgf} we have that the order of the difference
equation solved by the sequence $\left\{ d_{k} \right\}_{k\in\N_{0}}$
is equal to the degree of the denominator of the generating function.
Moreover, we see that the numerator of the generating function
acts as a sort of ``noise'' linked to the initial conditions of the
sequence.
For instance, this implies that if two sequence have the same denominator
but different numerators, they will satisfy the same difference equation with
different initial conditions.
Several examples of this kind of occurrence will appear in Section
\ref{sec:aecafcc}.

Once obtained a generating function is a \emph{predictive} tool.
Indeed one can readily compute the successive terms
in the Taylor expansion for \eqref{eq:genfunc} and confront them with
the degrees calculated with the iterations. 
This means that the assumption that the value of the algebraic entropy is given 
by the approximate method is in fact very strong and it is very unlikely that the 
real value will differ from it. 

Having a rational generating function will also yield the value
of the algebraic entropy from the modulus of the
smallest pole of the generating function:
\begin{equation}
    S = 
    \log \min\left\{|z|\in\R^{+} \,\,\middle|\,\, \frac{1}{g\left( z \right)}=0\right\}^{-1}.
    \label{eq:algentgenfunc}
\end{equation}
Let us underline that while the value of the algebraic entropy obtained
through \eqref{eq:algentgenfunc} is canonical as it depends only on the 
equation itself, the sequence of degrees and its generating function 
are not canonical quantities and might
change under appropriate invertible transformations.

From the generating function one can also find an asymptotic fit for 
the degrees \eqref{eq:onedimdegreesfin}. 
This can be done by using the inverse $\mathcal{Z}$-transform \cite{Elaydi2005,Jury1964}.
Assume we are given a function $f=f\left( \zeta \right)$ of 
a complex variable $\zeta\in \Cn$
analytic in a region $\abs{\zeta}>r$ for some $r\in\R^{+}$.
We define the inverse $\mathcal{Z}$-transform of such a function $f$ 
to be the sequence:
\begin{equation}
    \mathcal{Z}^{-1}\left[ f(\zeta) \right]_{k} \equiv
    \frac{1}{2\pi\imath} \oint_{C} f\left( \zeta \right) \zeta^{k-1} \ud \zeta,
    \quad k \in \N.
    \label{eq:invztransfdef}
\end{equation}
In equation \eqref{eq:invztransfdef} the contour $C\subset \Cn$
is a counterclockwise closed path enclosing the origin and entirely
in the region of convergence of $f$.
From the definition of inverse $\mathcal{Z}$-transform \eqref{eq:invztransfdef} 
it can be readily proved that the sequence $\left\{ d_{k} \right\}_{k\in\N}$ 
corresponding to the generating function \eqref{eq:genfunc} is given by:
\begin{equation}
    d_{k} = \mathcal{Z}^{-1}\left[g\left( \frac{1}{\zeta} \right)\right]_{k}.
    \label{eq:ztransform}
\end{equation}

We note that the  general asymptotic behaviour of the
sequence $\left\{ d_{k} \right\}_{l\in\N_{0}}$ can be obtained even without 
computing the inverse $\mathcal{Z}$-transform.
This is the content of the following proposition:
\begin{prop}[\cite{FlajoletOdlyzko1990}]
    Assume that a sequence $\left\{ d_{k} \right\}_{k\in\N_{0}}$
    possesses a generating function of radius of convergence
    $\rho>0$ and of the following form:
    \begin{equation}
        g = A\left( z \right) + B\left( z \right)\left( 1-\frac{z}{\rho} \right)^{-\beta},
        \quad \beta\in\R\setminus\left\{ -n \right\}_{n\in\N}.
        \label{eq:gasy}
    \end{equation}
    where $A$ and $B$ are analytic functions for $\abs{z}<r$ such that
    $B(\rho)\neq 0$.
    Then the asymptotic behaviour of the sequence $\left\{ d_{k} \right\}_{k\in\N_{0}}$
    as $k\to\infty$ is given by:
    \begin{equation}
        d_{k} \sim \frac{B\left( \rho \right)}{\Gamma\left( \beta \right)}
        k^{\beta-1}\rho^{-k},
        \quad
        k \to \infty,
        \label{eq:dasy}
    \end{equation}
    where $\Gamma(z)$ is the Euler Gamma function.
    If additionally $\rho\equiv1$, then
    \begin{equation}
        d_{k} \sim k^{\beta-1},
        \quad
        k \to \infty,
        \label{eq:dasyint}
    \end{equation}
    i.e. the growth is asymptotically polynomial of degree $\beta-1$.
    \label{rem:asydeg}
\end{prop}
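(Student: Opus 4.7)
The plan is to prove this via singularity analysis in the spirit of Flajolet and Odlyzko, extracting $d_{k}$ from Cauchy's coefficient formula and isolating the contribution from the dominant singularity $z=\rho$ of $g$. The hypotheses pin the singular behaviour of $g$ to the algebraic factor $(1-z/\rho)^{-\beta}$, and the whole proof is really a reduction to the known Taylor expansion of this one elementary function.

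First I would dispose of the analytic part. Since $A(z)$ is analytic in a disk strictly containing $\{\abs{z}\le\rho\}$, its Taylor coefficients decay at some rate $(r')^{-k}$ with $r'>\rho$, so they are $o(\rho^{-k})$ and contribute only a subdominant error. Hence it suffices to analyse the coefficients of $B(z)(1-z/\rho)^{-\beta}$. I would then split $B(z) = B(\rho) + (B(z)-B(\rho))$. The first piece expands as a generalised binomial series
\begin{equation*}
B(\rho)(1-z/\rho)^{-\beta} = B(\rho)\sum_{k=0}^{\infty}\binom{\beta+k-1}{k}\rho^{-k}z^{k},
\end{equation*}
and Stirling's formula gives $\binom{\beta+k-1}{k} = \Gamma(k+\beta)/[\Gamma(\beta)\Gamma(k+1)] \sim k^{\beta-1}/\Gamma(\beta)$ as $k\to\infty$, which already produces the claimed leading behaviour $d_{k}\sim \frac{B(\rho)}{\Gamma(\beta)}k^{\beta-1}\rho^{-k}$. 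For the second piece, analyticity of $B$ at $\rho$ lets me write $B(z)-B(\rho)=(z-\rho)\tilde{B}(z)$ with $\tilde{B}$ analytic near $\rho$; this drops the singularity exponent to $\beta-1$ and so, applying the same argument recursively, its coefficients are $O(k^{\beta-2}\rho^{-k})$, which is subdominant.

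The main obstacle is making this rigorous for non-integer $\beta$, where the formal binomial expansion has to be justified by contour deformation. I would handle this by writing
\begin{equation*}
d_{k} = \frac{1}{2\pi\imath}\oint_{\abs{z}=\rho-\varepsilon} g(z)\,z^{-k-1}\,\ud z
\end{equation*}
and deforming the contour into a Hankel-type path $\mathcal{H}_{k}$ that hugs the segment $[\rho,\rho+c]$ at distance $1/k$, closing in an arc of radius $\rho+c$ where the analytic extensions of $B$ and $(1-z/\rho)^{-\beta}$ (on a cut plane) are bounded. The change of variables $z=\rho(1+t/k)$ rescales $\mathcal{H}_{k}$ to a fixed Hankel contour and, using the classical identity $\frac{1}{\Gamma(\beta)}=\frac{1}{2\pi\imath}\int_{\mathcal{H}}e^{t}t^{-\beta}\,\ud t$, reduces the integral in the limit $k\to\infty$ to exactly the leading term above. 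The condition $\beta\notin\{0,-1,-2,\dots\}$ is precisely what guarantees that $1/\Gamma(\beta)\neq 0$, so the leading coefficient does not accidentally vanish. Verifying that the arc contribution is negligible and that the error term from $B(z)-B(\rho)$ is genuinely $o(k^{\beta-1}\rho^{-k})$ are the technical steps to carry out.

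Finally, the specialisation $\rho=1$ is immediate: $\rho^{-k}=1$, so $d_{k}\sim \frac{B(1)}{\Gamma(\beta)}k^{\beta-1}$, which is polynomial growth of degree $\beta-1$ as claimed in \eqref{eq:dasyint}.
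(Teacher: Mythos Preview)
The paper does not supply its own proof of this proposition: it is quoted from \cite{FlajoletOdlyzko1990} and used as a black-box tool throughout Section~\ref{sec:aecafcc}. Your sketch is the standard singularity-analysis argument from that reference --- split off the analytic part, isolate the leading constant $B(\rho)$ via the generalised binomial expansion with Stirling, and justify the non-integer case by a Hankel contour around the algebraic singularity at $z=\rho$ --- so there is nothing to compare against and your outline is correct.

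One small remark: you assert that $A$ is analytic in a disk strictly containing $\{\abs{z}\le\rho\}$, which is what the argument needs, but the statement as written only says ``analytic for $\abs{z}<r$'' without specifying $r>\rho$. The hypothesis $B(\rho)\neq0$ only makes sense if $r>\rho$, so this is implicit, but in a fully rigorous write-up you would want to make that assumption explicit before invoking the exponential decay of the coefficients of $A$.
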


Notice that when the generating function is polynomial and the radius of 
convergence is $\rho=1$, usually the denominator can be factorised as follows:
\begin{equation}
    Q\left( z \right) = \left( 1-z \right)^{\beta_{0}} 
    \prod_{i=1}^{K} \left( 1-z^{\beta_{i}} \right).
    \label{eq:factor}
\end{equation}
That is, the polynomial $Q$ is the product of the term $\left( 1-z \right)^{\beta_{0}}$
with some cyclotomic polynomials $p_{i}\left( z \right)=1-z^{\beta_{i}}$.
From Proposition \eqref{rem:asydeg}, using the factorisation properties of
cyclotomic polynomials, we get that $d_{k} \sim k^{\beta}$ as $k\to\infty$,
where $\beta=\beta_{0}+K$.
We will make constant use of this observation to estimate the asymptotic
growth of degrees of the face-centered quad equations in Section \ref{sec:aecafcc}.

\subsubsection{Examples}

We conclude this section with some illustrative examples of the computation
of algebraic entropy that has been formulated above for face-centered quad equations.

\begin{example}
    Consider the following face-centered quad equation:
    \begin{equation}
        Ax_{m+1, n+1} x_{m-1, n-1}+B x_{m-1, n+1}x_{m+1, n-1}
        =
        C x_{m, n}^2,
        \label{eq:example1}
    \end{equation}
    where $A$, $B$, and $C$ are arbitrary constants.

    Considering initial conditions in $\PTN$ we have a single
    sequence of degrees of iterates:
    \begin{equation}
        1, 2, 6, 14, 26, 42, 62, 86, 114, 146, 182, 222, 266\dots
        \label{eq:ex1degmy}
    \end{equation}
    The corresponding generating function is:
    \begin{equation}
        g_{\PTN}\left( z \right) = \frac{z^3+3 z^2-z+1}{(1-z)^3}.
        \label{eq:ex1gfmy}
    \end{equation}
    From this, we infer that the algebraic entropy is zero
    (all the singularities of $g_{\PTN}$ lie on the unit circle)
    and from Proposition \ref{rem:asydeg} that the growth is
    quadratic.
    
    On the other hand,
    considering initial conditions in $\PNpPNm$ we have the single
    sequence of degrees of iterates:
    \begin{equation}
        1, 4, 12, 26, 46, 72, 104, 142, 186, 236, 292, 354, 422\dots
        \label{eq:ex1degkels}
    \end{equation}
    The corresponding generating function is:
    \begin{equation}
        g_{\PNpPNm} \left( z \right) =
        \frac{z^3+3 z^2+z+1}{(1-z)^3}.
        \label{eq:ex1gfkels}
    \end{equation}
    Again, from Proposition \ref{rem:asydeg}, we have integrability
    and quadratic growth.
    \hfill
    $\square$
\end{example}

\begin{example}
    Consider the following deformation of face-centered quad equation
    \eqref{eq:example1}:
    \begin{equation}
        Ax_{m+1, n+1} x_{m-1, n-1}+B x_{m-1, n+1}x_{m+1, n-1}
        =
        C x_{m, n}^2+D x_{m, n},
        \label{eq:example2}
    \end{equation}
    where $A$, $B$, $C$, and $D$ are arbitrary constants.
    Indeed, if $D=0$ then \eqref{eq:example2} reduces to 
    \eqref{eq:example1}.

    Considering initial conditions in $\PTN$ we have a single
    sequence of degrees of iterates:
    \begin{equation}
        1, 2, 6, 16, 40, 100, 252, 636, 1604, 4044, 10196\dots
        \label{eq:ex2degmy}
    \end{equation}
    The corresponding generating function is:
    \begin{equation}
        g_{\PTN}\left( z \right) =
        -\frac{2 z^2-z+1}{2 z^3-2 z^2+3 z-1}.
        \label{eq:ex2gfmy}
    \end{equation}
    From this, we infer that the algebraic entropy is positive.
    Indeed, the smallest singularity of $g_{\PTN}$ in absolute value is:
    \begin{equation}
        z_{0} =\frac{1}{3}+ \frac{1}{6} \left(8+6 \sqrt{78}\right)^{1/3}
        -\frac{7}{3 \left(8+6 \sqrt{78}\right)^{1/3}}
        \simeq 0.3966082528\dots < 1
        \label{eq:ex2z0}
    \end{equation}
    The value of the entropy is then 
    $S = \log z_{0}^{-1}\simeq \log\left( 2.521379706\dots \right)$.
    Notice that despite the growth being exponential, it is not
    maximal.
    Indeed, since \eqref{eq:example2} has degree 2 in $x_{l,m}$, from
    equation \eqref{eq:aemax} the maximal entropy is
    $S_{2} = \log\left( 2 + \sqrt{5} \right)$, which is clearly greater
    than $S$.
    
    If we fix our initial conditions in $\PNpPNm$ we have the single
    sequence of degrees of iterates:
    \begin{equation}
        1, 4, 12, 30, 74, 186, 470, 1186, 2990, 7538, 19006\dots,
        \label{eq:ex2degkels}
    \end{equation}
    fitted by the generating function:
    \begin{equation}
        g_{\PNpPNm} \left( z \right) =
        -\frac{2 z^2+z+1}{2 z^3-2 z^2+3 z-1}.
        \label{eq:ex2gfkels}
    \end{equation}
    Since the denominator of $g_{\PNpPNm}$ is the same as $g_{\PTN}$,
    we obtain, as expected, the same value of the algebraic entropy.
    \hfill
    $\square$
\end{example}

\begin{example}
    Consider the following face-centered quad equation:
    \begin{equation}
        \begin{gathered}
            P_{0}\left( x_{m,n} \right) x_{m-1,n+1} x_{m+1,n+1} x_{m-1,n-1} x_{m+1,n-1}
            \\
            +P_{1}\left( x_{m,n} \right)
            \left(x_{m-1,n+1}-x_{m+1,n+1}\right) 
            \left(x_{m-1,n-1}-x_{m+1,n-1}\right)
            =P_{2}\left( x_{m,n} \right).
        \end{gathered}
        \label{eq:example3}
    \end{equation}
    where $P_{i}\left( \xi \right)\in\Cn\left[ \xi \right]$ and
    $\deg P_{i} =2$.

    Considering initial conditions either in $\PTN$ or in $\PNpPNm$ we 
    have a single sequence of degrees of iterates:
    \begin{equation}
        1, 5, 21, 89, 377, 1597, 6765, 28657\dots
        \label{eq:ex3degmy}
    \end{equation}
    The corresponding generating function is:
    \begin{equation}
        g\left( z \right) =
        -\frac{1+z}{z^2+4 z-1}.
        \label{eq:ex3gfmy}
    \end{equation}
    So, we infer that the algebraic entropy is positive and \emph{maximal}.
    Indeed, the smallest singularity of $g$ in absolute value is
    $z_{0} = -2+\sqrt{2}$.
    This gives value of the entropy as 
    $S_{2} = \log z_{0}^{-1} = \log\left( 2 + \sqrt{5} \right)$,
    that is the maximal value of the entropy for $M=2$ in formula \eqref{eq:aemax}.
    \hfill
    $\square$
\end{example}

\section{Arrangements of CAFCC equations in the lattice and algebraic entropy}
\label{sec:aecafcc}

In this section we present the results of the heuristic computations
of the algebraic entropy on the three types of CAFCC equations.
The results of these computations are summarised as follows:
\begin{itemize}
    \item Type-A CAFCC equations possess quadratic growth.%
    \item Type-B CAFCC equations possess linear growth.%
    \item Type-C CAFCC equations possess quadratic growth.%
\end{itemize}

In the following subsections we will give the details of arrangements
of equations in the lattice and the
explicit growth patterns for each CAFCC equation found from
\cite{Kels2020cafcc} under the two types of initial conditions
\eqref{initialconditionsdef}.  In particular, it is sometimes the case
that individual type-B or type-C equations have exponential growth
patterns, but certain pairs of such equations together will give
sub-exponential growth patterns.  These cases of pairs of equations
possess multiple growth patterns, each with the same asymptotic
behaviour.  Furthermore, for one particular type-C equation a new
type-C CAFCC equation was needed to achieve quadratic degree growth
($C1_{(\delta_1=1)}$ in \eqref{c1d}), and this equation was not found from
the original method used to obtain CAFCC equations \cite{Kels2020cafcc}.

To describe the arrangements of equations and initial conditions,
it is useful to colour the vertices of the rotated bipartite square
lattice such that black vertices are connected only to white vertices
(and vice versa).  The convention will be that black vertices are
associated to variables $\ccx$ and the white vertices are associated
to variables $\ccy$.  Such a black and white colouring of the lattice
is shown in Figure \ref{fig:lattice-A}.  In the example of Figure
\ref{fig:lattice-A}, the initial conditions are indicated by the square
vertices, and the evolution of the system of face-centered quad equations
would be in the north-east direction.  The equations in the lattice can
be distinguished according to the numbers of single- and double-line
edges that are connected to a vertex.    For example, according to the
graphical representation of face-centered quad equations shown in Figures
\ref{fig-face} and \ref{fig:3fig4quad}, Figure \ref{fig:lattice-A}
involves all type-A equations, Figure \ref{fig:lattice-Bsym} involves
all type-B equations, and Figure \ref{fig:lattice-Csym} involves all
type-C equations.  There are more complicated arrangements of three or
more different equations in the lattice that are also expected to give a
polynomial degree growth, but for this paper only the minimal number of
different equations (either individual or pairs) to achieve polynomial
degree growth are considered.  %

\begin{figure}[htb!]
\centering
\begin{tikzpicture}[scale=0.85]

\draw[-] (-4,2)--(-2,4);
\draw[-] (-3,1)--(0,4);
\draw[-] (-2,0)--(2,4);
\draw[-] (-1,-1)--(4,4);
\draw[-] (0,-2)--(4,2);
\draw[-] (1,-3)--(4,0);
\draw[-] (2,-4)--(4,-2);

\draw[-] (-4,4)--(4,-4);
\draw[-] (-2,4)--(4,-2);
\draw[-] (0,4)--(4,0);
\draw[-] (2,4)--(4,2);

\foreach \x in {-4,-2,...,4}{
 \foreach \y in {-4,-2,...,4}{
  \fill[black] (\x,\y) circle (3.0pt);}}

\foreach \x in {-3,-1,...,3}{
 \foreach \y in {-3,-1,...,3}{
  \filldraw[fill=white,draw=black] (\x,\y) circle (3.0pt);}}

\foreach \x in {-4,-2,...,4}{
\filldraw[fill=black,draw=black] (\x,-\x) \Square{3.0pt};}
\foreach \x in {-3,-1,...,3}{
\filldraw[fill=white,draw=black] (\x,-\x) \Square{3.0pt};}

\foreach \x in {-3,-1,...,3}{
\filldraw[fill=black,draw=black] (\x-1,-\x-1) \Square{3.0pt};
}%
\foreach \x in {-2,0,...,2}{
\filldraw[fill=white,draw=black] (\x-1,-\x-1) \Square{3.0pt};
}%

\filldraw[fill=white,draw=white] (-4.5,-4.5)--(1.0,-4.5)--(-4.5,1.0)--(-4.5,4.5);

\foreach \x in {-4,...,3}{
 \draw[-,dotted] (\x+0.5,-4.5)--(\x+0.5,4.5);
 \draw[-,dotted] (-4.5,\x+0.5)--(4.5,\x+0.5);}

\foreach \x in {-4,-2,...,3}{
 \fill (\x+0.5,-4.5) circle (0.01pt)
 node[below=2pt]{\small $\ccqa$};
 \fill (\x+1.5,-4.5) circle (0.01pt)
 node[below=2pt]{\small $\ccqb$};
 \fill (\x+0.5,4.5) circle (0.01pt)
 node[above=2pt]{\small $\ccqa$};
 \fill (\x+1.5,4.5) circle (0.01pt)
 node[above=2pt]{\small $\ccqb$};
 \fill (-4.5,\x+0.5) circle (0.01pt)
 node[left=2pt]{\small $\ccpa$};
 \fill (-4.5,\x+1.5) circle (0.01pt)
 node[left=2pt]{\small $\ccpb$};
 \fill (4.5,\x+0.5) circle (0.01pt)
 node[right=2pt]{\small $\ccpa$};
 \fill (4.5,\x+1.5) circle (0.01pt)
 node[right=2pt]{\small $\ccpb$};}

\end{tikzpicture}
\caption{Arrangement of type-A equations in the lattice with initial conditions (square vertices) for evolution in the NE direction.  Type-A equations \eqref{Awhite} are centered at white vertices labelled $\ccy$ and type-A equations \eqref{Ablack} are centered at black vertices labelled $\ccx$.}
\label{fig:lattice-A}
\end{figure}
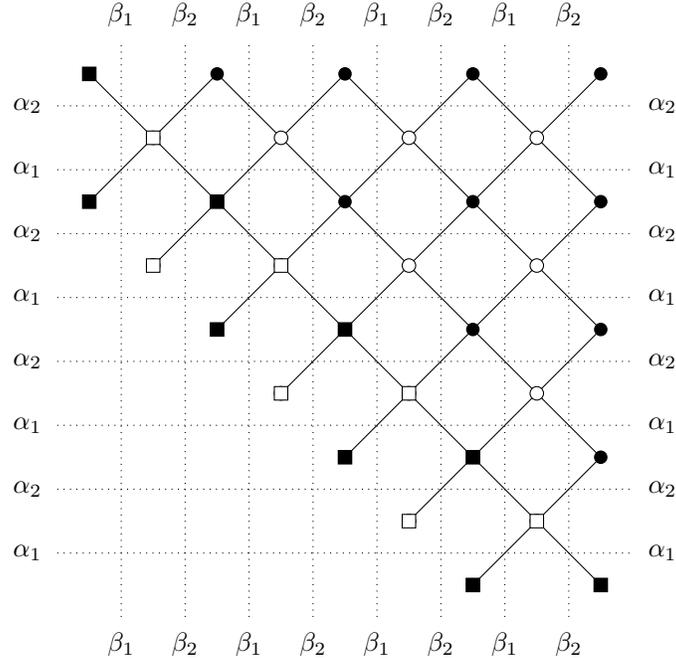

In the following, $\hat{\ccpp}$ and $\hat{\ccqq}$ are used to denote %
\begin{align}
\hat{\ccpp}=(\ccpb,\ccpa),\qquad \hat{\ccqq}=(\ccqb,\ccqa).
\end{align}

\subsection{Type-A}

\subsubsection{Arrangement of equations in the lattice}

The arrangement of type-A equations in the lattice is the most straightforward and is shown in the example of Figure \ref{fig:lattice-A}.  The type-A equations centered at white vertices are
\begin{align}\label{Awhite}
\A{\ccy}{\cca}{\ccb}{\ccc}{\ccd}{\ccpp}{\ccqq}=0,
\end{align}
and the type-A equations centered at black vertices are 
\begin{align}\label{Ablack}
\A{\ccx}{\ccya}{\ccyb}{\ccyc}{\ccyd}{\hat{\ccpp}}{\hat{\ccqq}}=0.
\end{align}

\subsubsection{Degree growths for type-A equations}

The two type-A equations $A3_{\left( \delta \right)}$ and $A2_{\left( \delta_1;\,\delta_2 \right)}$ are given in equations \eqref{a3d} and \eqref{a2dd}, respectively.  For values $\delta=0,1$, and $(\delta_1,\delta_2)=(0,0),(1,0),(1,1)$, these equations each share the same growth patterns when iterated in the lattice of Figure \ref{fig:lattice-A}.
In $\PTN$ the growth pattern in all directions is:
\begin{equation}
    1, 3, 7, 13, 21, 31, 43, 57, 73, 91, 111, 133, 157\dots,
    \label{eq:Admy}
\end{equation}
while in $\PNpPNm$  the growth in all directions is:
\begin{equation}
    1, 5, 13, 25, 41, 61, 85, 113, 145, 181, 221, 265, 313\dots.
    \label{eq:Akels}
\end{equation}
The generating functions for these patterns are respectively given by
\begin{equation}
    g_{\PTN}(z) = \frac{z^2+1}{(1-z)^3},
    \qquad
    g_{\PNpPNm}(z)=\frac{(z+1)^2}{(1-z)^3}.
    \label{eq:Agf}
\end{equation}
Since all the zeroes of the generating functions lie
on the unit circle  for each type-A equation 
the algebraic entropy vanishes.
Moreover, due to the presence of
$\left( 1-z \right)^{3}$ in the denominator we have from Proposition \ref{rem:asydeg} that the growth 
is \emph{quadratic}.

We emphasise two important facts on the growth pattern of
the type-A equations.
The first one is that the growth is the same in all directions
because type-A equations on the lattice are symmetric with
respect to the exchange $\left( m,n \right)\leftrightarrow\left( n,m \right)$.
The second is that the growth stays the same regardless of the value
of the parameter $\delta$ in $A3_{(\delta)}$, or the  value of the parameter $\delta_1$ in $A2_{(\delta_1;0)}$.
For $\delta\neq0$ or $\delta_1\neq0$, this follows from the fact that these parameters
can be scaled to $1$.  
On the other hand, when $\delta=0$ or $\delta_1=0$ there is no difference in
the growth of the degrees for the respective equations, so it can be seen that their behaviour does not depend on these parameters.

As observed in \cite{Kels2020cafcc}, the type-A CAFCC equations are equivalent 
to discrete Laplace-type equations that are associated to type-Q ABS equations \cite{ABS2003}.
Thus it is not surprising to find that their growth pattern in $\PTN$ 
is the same as the most general type-Q equation: the $Q_V$ equation
\cite{Viallet2009}. 
The growth pattern of $Q_{V}$ has been proven rigorously
in independent works \cite{Viallet2015,RobertsTran2017},
where the proof of \cite{RobertsTran2017} used the
$\gcd$-factorisation method that also worked for the case of a two-periodic 
extension of the $Q_{V}$ equation \cite{GSL_QV}.

\subsection{Type-B (individual)}

In the following the algebraic entropy for individual type-B equations in the lattice will be considered.  However, some examples of type-B equations have exponential growth when considered individually in the lattice.  Only the cases of individual type-B equations that achieve sub-exponential growth will be considered here, while the remaining type-B equations will be treated in Section \ref{sec:typeBpairs}.

\subsubsection{Arrangement of equations in the lattice}

The arrangement of type-B equations in the lattice is shown in Figure \ref{fig:lattice-Bsym}.  The type-B equations centered at white vertices are
\begin{align}\label{Bwhitesym}
\B{\ccy}{\cca}{\ccb}{\ccc}{\ccd}{\ccpp}{\ccqq}=0,
\end{align}
and type-B equations centered at black vertices are 
\begin{align}\label{Bblacksym}
\B{\ccx}{\ccya}{\ccyb}{\ccyc}{\ccyd}{\hat{\ccpp}}{\hat{\ccqq}}=0.
\end{align}
The arrangement of equations in Figure \ref{fig:lattice-Bsym} is equivalent that of Figure \ref{fig:lattice-A}, but with double edges instead of single edges that are used to represent type-B equations rather than type-A equations.  Because the same type-B equation is centered at both the black vertices and white vertices, these systems of equations will be referred to as $B(\x;\y)$ systems.

\begin{figure}[htb!]
\centering
\begin{tikzpicture}[scale=0.85]

\draw[-,double,thick] (-4,2)--(-2,4);
\draw[-,double,thick] (-3,1)--(0,4);
\draw[-,double,thick] (-2,0)--(2,4);
\draw[-,double,thick] (-1,-1)--(4,4);
\draw[-,double,thick] (0,-2)--(4,2);
\draw[-,double,thick] (1,-3)--(4,0);
\draw[-,double,thick] (2,-4)--(4,-2);

\draw[-,double,thick] (-4,4)--(4,-4);
\draw[-,double,thick] (-2,4)--(4,-2);
\draw[-,double,thick] (0,4)--(4,0);
\draw[-,double,thick] (2,4)--(4,2);

\foreach \x in {-4,-2,...,4}{
 \foreach \y in {-4,-2,...,4}{
  \fill[black] (\x,\y) circle (3.0pt);}}

\foreach \x in {-3,-1,...,3}{
 \foreach \y in {-3,-1,...,3}{
  \filldraw[fill=white,draw=black] (\x,\y) circle (3.0pt);}}

\foreach \x in {-4,-2,...,4}{
\filldraw[fill=black,draw=black] (\x,-\x) \Square{3.0pt};}
\foreach \x in {-3,-1,...,3}{
\filldraw[fill=white,draw=black] (\x,-\x) \Square{3.0pt};}

\foreach \x in {-3,-1,...,3}{
\filldraw[fill=black,draw=black] (\x-1,-\x-1) \Square{3.0pt};
}%
\foreach \x in {-2,0,...,2}{
\filldraw[fill=white,draw=black] (\x-1,-\x-1) \Square{3.0pt};
}%

\filldraw[fill=white,draw=white] (-4.5,-4.5)--(1.0,-4.5)--(-4.5,1.0)--(-4.5,4.5);

\foreach \x in {-4,...,3}{
 \draw[-,dotted] (\x+0.5,-4.5)--(\x+0.5,4.5);
 \draw[-,dotted] (-4.5,\x+0.5)--(4.5,\x+0.5);}

\foreach \x in {-4,-2,...,3}{
 \fill (\x+0.5,-4.5) circle (0.01pt)
 node[below=2pt]{\small $\ccqa$};
 \fill (\x+1.5,-4.5) circle (0.01pt)
 node[below=2pt]{\small $\ccqb$};
 \fill (\x+0.5,4.5) circle (0.01pt)
 node[above=2pt]{\small $\ccqa$};
 \fill (\x+1.5,4.5) circle (0.01pt)
 node[above=2pt]{\small $\ccqb$};
 \fill (-4.5,\x+0.5) circle (0.01pt)
 node[left=2pt]{\small $\ccpa$};
 \fill (-4.5,\x+1.5) circle (0.01pt)
 node[left=2pt]{\small $\ccpb$};
 \fill (4.5,\x+0.5) circle (0.01pt)
 node[right=2pt]{\small $\ccpa$};
 \fill (4.5,\x+1.5) circle (0.01pt)
 node[right=2pt]{\small $\ccpb$};}

\end{tikzpicture}
\caption{Arrangement of type-B equations in the lattice.  This is equivalent to the arrangement of Figure \ref{fig:lattice-A} but with double edges to indicate type-B equations instead of type-A equations.}
\label{fig:lattice-Bsym}
\end{figure}
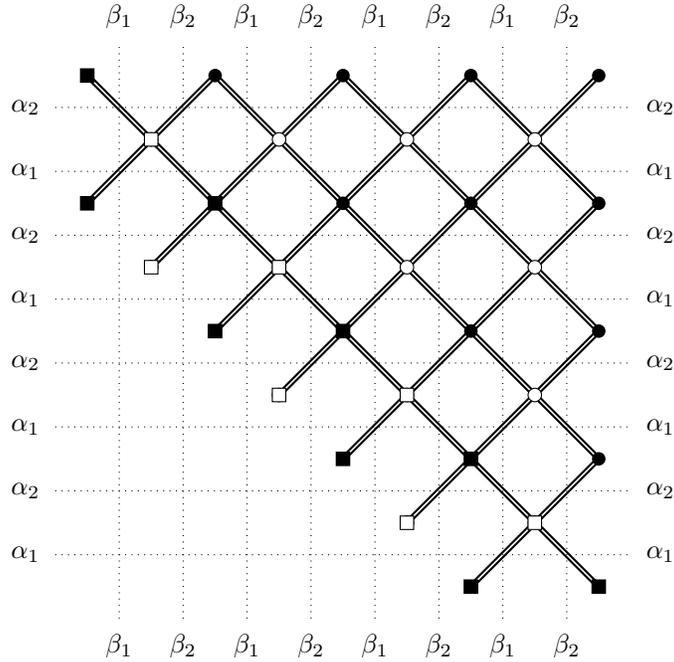

Note that there are two degenerate type-B equations given by
\begin{equation}
    B2_{\left( 0;\,0;\,0 \right)}=
    B3_{\left( 0;\,0;\,0 \right)}=
    y_{l+1, m+1} y_{l-1, m-1}-y_{l-1, m+1} y_{l+1, m-1}=0.
    \label{eq:triv}
\end{equation}
This equation \eqref{eq:triv} is not a true face-centered quad equation because it has no dependence on the central face variable.
Considering the double pass lattice
$\left( m,n \right)\rightarrow \left( 2M,2N \right)$
equation \eqref{eq:triv} can be reduced to the quad equation
\begin{equation}
    y_{M+1,N+1}y_{M,N}-y_{M+1,N}y_{M,N+1}=0.
    \label{eq:expwave}
\end{equation}
This equation is a well-known Darboux integrable quad equation
\cite{AdlerStartsev1999} with isotropic linear growth.  It is also a special case of the $D4$ quad equation that appeared in \cite{Boll2011}.

\subsubsection{$B2_{\left( 1;\,0;\,0 \right)}(\x;\y)$ system}

The equation $B2_{(\delta_1;\, \delta_2;\, \delta_3)}$ is given in \eqref{b2ddd}.
For the case of $(\delta_1,\delta_2,\delta_3)=(1,0,0)$,
This equation exhibits a single growth pattern in 
$\PTN$ and in $\PNpPNm$, respectively.

In $\PTN$ the growth pattern is the following:
\begin{equation}
    1, 2, 3, 4, 5, 6, 7, 8, 9, 10, 11, 12, 13\dots
    \label{eq:B2100}
\end{equation}
This growth is clearly linear and is fitted by the following
generating function:
\begin{equation}
    g_{\PTN}(z)%
    =\frac{1}{(1-z)^2}.
    \label{eq:gfB2100}
\end{equation}
The linearity of the growth \eqref{eq:B2100} could also be confirmed
using Proposition \ref{rem:asydeg}, because of the presence of
the term $\left( 1-z \right)^{2}$ in the denominator of
\eqref{eq:gfB2100}.

In $\PNpPNm$ the growth pattern is:
\begin{equation}
    1, 3, 5, 7, 9, 11, 13, 15, 17, 19, 21, 23, 25\dots
    \label{eq:B3100}
\end{equation}
This growth is clearly linear and is fitted by the following
generating function:
\begin{equation}
    g_{\PNpPNm}(z)%
    =\frac{1+z}{(1-z)^2}.
    \label{eq:gfB3100}
\end{equation}
The linearity  of the growth \eqref{eq:B3100} could also be inferred
from Proposition \ref{rem:asydeg}, because of the presence of
the term $\left( 1-z \right)^{2}$ in the denominator of
\eqref{eq:gfB3100}.

As expected, the asymptotic behaviour in $\PTN$ and $\PNpPNm$
is the same.
The above findings imply that the $B2_{\left( 1;\,0;\,0 \right)}$ equation
is expected to be linearisable.

\subsubsection{$B3_{\left( 1;\,0;\,0 \right)}(\x;\y)$ system}

The equation $B3_{(\delta_1;\, \delta_2;\, \delta_3)}$ is given in \eqref{b3ddd}.
For the case of $(\delta_1,\delta_2,\delta_3)=(1,0,0)$, this equation exhibits a single growth pattern both in 
$\PTN$ and in $\PNpPNm$.  These two growth patterns coincide and are given
by \eqref{eq:B3100} with generating function \eqref{eq:gfB3100}.  This implies that the $B3_{\left( 1;\,0;\,0 \right)}$ equation
is expected to be linearisable.

\subsection{Type-C (individual)}

In the following the algebraic entropy for individual type-C equations in the lattice will be considered. However, similarly to the previous case for type-B equations,  some examples of type-C equations only have exponential growth when considered individually in the lattice.  Only the cases of type-C equations that have sub-exponential growth will be considered here, and the remaining type-C equations will be treated in Section \ref{sec:typeCpairs}.

In the following, the integers $n\;(\textrm{mod }2)$ are taken to be elements of $\{1,2\}$.

\subsubsection{Arrangement of equations in the lattice}

The arrangement of type-C equations in the lattice is shown in Figure \ref{fig:lattice-Csym}.  
The type-C equations centered at white vertices are
\begin{align}
\C{\ccy}{\cca}{\ccb}{\ccc}{\ccd}{\ccpp}{\ccqq}=0.
\end{align}
According to Figure \ref{fig:3fig4quad}, the type-C equations centered at black vertices of Figure \ref{fig:lattice-Csym} are rotated by $180^\circ$, which is given by
\begin{align}
\C{\ccx}{\ccyd}{\ccyc}{\ccyb}{\ccya}{\ccpp}{\ccqq}=0.
\end{align}
Because the same type-C equation is centered at both the black vertices and white vertices, these systems of equations will be referred to as $C(\x;\y)$ systems.

\begin{figure}[htb!]
\centering
\begin{tikzpicture}[scale=0.85]

\foreach \x in {-4,-2,...,3}{
 \draw[-] (\x,4)--(\x+1,3);
 \draw[-] (\x+2,4)--(\x+1,3);
 \draw[-,double,thick] (\x,2)--(\x+1,3);
 \draw[-,double,thick] (\x+2,2)--(\x+1,3);}
 
\foreach \x in {-2,0,...,3}{
 \draw[-] (\x,2)--(\x+1,1);
 \draw[-] (\x+2,2)--(\x+1,1);
 \draw[-,double,thick] (\x,0)--(\x+1,1);
 \draw[-,double,thick] (\x+2,0)--(\x+1,1);}
 
\foreach \x in {0,2,...,3}{
 \draw[-] (\x,0)--(\x+1,-1);
 \draw[-] (\x+2,0)--(\x+1,-1);
 \draw[-,double,thick] (\x,-2)--(\x+1,-1);
 \draw[-,double,thick] (\x+2,-2)--(\x+1,-1);} 
 
\foreach \x in {2}{
 \draw[-] (\x,-2)--(\x+1,-3);
 \draw[-] (\x+2,-2)--(\x+1,-3);
 \draw[-,double,thick] (\x,-4)--(\x+1,-3);
 \draw[-,double,thick] (\x+2,-4)--(\x+1,-3);} 
 
 \draw[-] (-2,2)--(-3,1);
 \draw[-] (0,0)--(-1,-1);
 \draw[-] (2,-2)--(1,-3);

\foreach \x in {-4,-2,...,4}{
 \foreach \y in {-4,-2,...,4}{
  \fill[black] (\x,\y) circle (3.0pt);}}

\foreach \x in {-3,-1,...,3}{
 \foreach \y in {-3,-1,...,3}{
  \filldraw[fill=white,draw=black] (\x,\y) circle (3.0pt);}}
  
\foreach \x in {-4,-2,...,4}{
\filldraw[fill=black,draw=black] (\x,-\x) \Square{3.0pt};}
\foreach \x in {-3,-1,...,3}{
\filldraw[fill=white,draw=black] (\x,-\x) \Square{3.0pt};}

\foreach \x in {-3,-1,...,3}{
\filldraw[fill=black,draw=black] (\x-1,-\x-1) \Square{3.0pt};
}%
\foreach \x in {-2,0,...,2}{
\filldraw[fill=white,draw=black] (\x-1,-\x-1) \Square{3.0pt};
}%

\filldraw[fill=white,draw=white] (-4.5,-4.5)--(1.0,-4.5)--(-4.5,1.0)--(-4.5,4.5);

\foreach \x in {-4,...,3}{
 \draw[-,dotted] (\x+0.5,-4.5)--(\x+0.5,4.5);
 \draw[-,dotted] (-4.5,\x+0.5)--(4.5,\x+0.5);}

\foreach \x in {-4,-2,...,3}{
 \fill (\x+0.5,-4.5) circle (0.01pt)
 node[below=2pt]{\small $\ccqa$};
 \fill (\x+1.5,-4.5) circle (0.01pt)
 node[below=2pt]{\small $\ccqb$};
 \fill (\x+0.5,4.5) circle (0.01pt)
 node[above=2pt]{\small $\ccqa$};
 \fill (\x+1.5,4.5) circle (0.01pt)
 node[above=2pt]{\small $\ccqb$};
 \fill (-4.5,\x+0.5) circle (0.01pt)
 node[left=2pt]{\small $\ccpa$};
 \fill (-4.5,\x+1.5) circle (0.01pt)
 node[left=2pt]{\small $\ccpb$};
 \fill (4.5,\x+0.5) circle (0.01pt)
 node[right=2pt]{\small $\ccpa$};
 \fill (4.5,\x+1.5) circle (0.01pt)
 node[right=2pt]{\small $\ccpb$};}

\end{tikzpicture}
\caption{Arrangement of type-C equations in the lattice.  Equations centered at black vertices are rotated $180^\circ$ relative to equations centered at white vertices.}
\label{fig:lattice-Csym}
\end{figure}
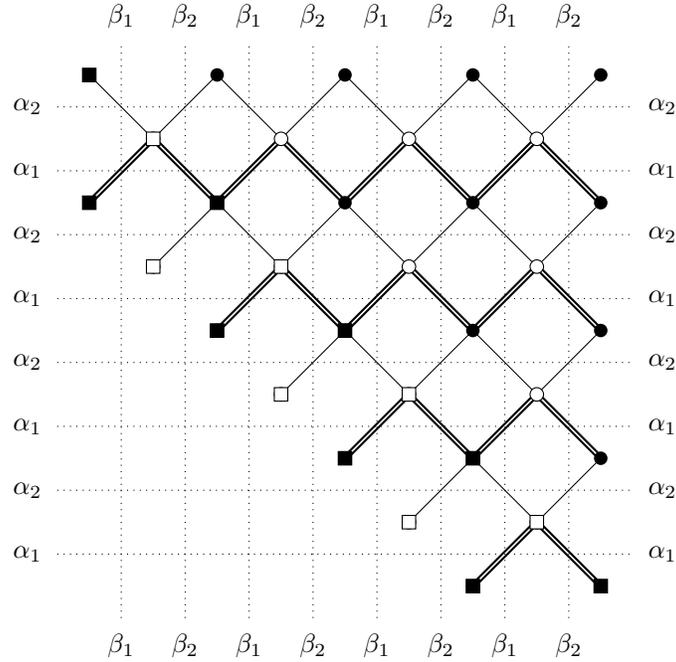

Type-C equations have the most varying behaviour in comparison to both
to type-A and type-B equations.
All type-C equations are found to possess more than one growth pattern, and these patterns may  differ greatly depending on the different 
directions of evolution.
Despite this behaviour, it is interesting to note that the maximal degree growth of type-C equations is isotropic.

\subsubsection{$C1_{(0)}(\x;\y)$ system}

The equation $C1_{(\delta_1)}$ is given in \eqref{c1d}.  
For the case of $\delta=0$, this equation exhibits two different  growth patterns in  both
$\PTN$ and in $\PNpPNm$, respectively.
The patterns for the different sets of initial conditions are different from each other.
In both cases the patterns appear in reverse order in the
evolution matrix in the SW and SE directions.

In $\PTN$ the two different degree patterns are given in \eqref{eq:C1x} and \eqref{eq:C1y}.  These two degree patterns are fitted by the two different generating functions defined by 
\begin{align}
g_{\PTN}^{(\ell)}(z)=\frac{h^{(\ell)}(z)}{(1-z^3)(1-z)^2}, \qquad \ell=1,2,
\end{align}
where
\begin{align}
h^{(\ell)}(z)=\left\{\begin{array}{rl}
z^4-z^3+2 z^2+z+1, & \ell=1, \\
(z^2-z+1) (z+1)^2, & \ell=2.
\end{array}\right.
\end{align}
By Proposition \ref{rem:asydeg}, the growth of degree patterns associated to each of $g_{\PTN}^{(\ell)}(z)$ ($\ell=1,2$) is quadratic.  The maximal growth coincides with the pattern \eqref{eq:C1x} fitted by $g_{\PTN}^{(\ell=1)}(z)$.

In $\PNpPNm$ the two different degree patterns are given in \eqref{eq:C1xkels} and \eqref{eq:C1ykels}.  These two degree patterns are fitted by the two different generating functions defined by 
\begin{align}
g_{\PNpPNm}^{(\ell)}(z)=\frac{h^{(\ell)}(z)}{(1-z^7)(1-z^3)(1-z)}, \qquad \ell=1,2,
\end{align}
where
\begin{align}
h^{(\ell)}(z)=\left\{\begin{array}{rl}
z^{14}-z^{11}+z^{10}+4 z^8+5 z^7+8 z^6+8 z^5+7 z^4+7 z^3+8 z^2+3 z+1, & \ell=1, \\
(z+1) (z^{11}-z^{10}+2 z^9+4 z^7+3 z^6+3 z^5+4 z^4+4 z^3+2 z^2+3 z+1), & \ell=2.
\end{array}\right.
\end{align}
By Proposition \ref{rem:asydeg}, the growth of degree patterns associated to each of $g_{\PNpPNm}^{(\ell)}(z)$ ($\ell=1,2$) is quadratic.  The maximal degree growth is given by \eqref{eq:C1maxkels}, and is fitted by the following generating function:
\begin{equation}
    G_{\PNpPNm}\left( z \right) =
    \frac{(z+1) (z^{13}-z^{12}+z^9+3 z^7+2 z^6+6 z^5+3 z^4+3 z^3+4 z^2+3 z+1)}{%
    (1-z^7)(1-z^3)(1-z)}.
    \label{eq:gfC1maxkels}
\end{equation}
Its asymptotic behaviour is quadratic.

As expected, the asymptotic behaviour in $\PTN$ and $\PNpPNm$
is the same.
The above findings imply that the $C1_{(0)}$ equation is expected to 
be integrable.

\subsubsection{$C2_{\left( 1;\,0;\,0 \right)}(\x;\y)$ system}

The equation $C2_{(\delta_1;\,\delta_2;\,\delta_3)}$ is given in \eqref{c2ddd}.
For the case of $(\delta_1,\delta_2,\delta_3)=(1,0,0)$, this equation exhibits two different growth patterns in both
$\PTN$ and in $\PNpPNm$.
The patterns for the different sets of initial conditions are different from each other.
In both cases the patterns appear in reverse order in the
evolution matrix in the SW and SE directions.

The two different degree patterns in $\PTN$ are given in \eqref{eq:C2100x} and \eqref{eq:C2100y}.  These two degree patterns are fitted by the two different generating functions defined by
\begin{align}%
g_{\PTN}^{(\ell)}(z)=\frac{h^{(\ell)}(z)}{(1-z^3)(1-z)^2}, \qquad \ell=1,2,
\end{align}
where
\begin{align}
h^{(\ell)}(z)=\left\{\begin{array}{rl}
-(z^2+1) (z^3-z^2-z-1), & \ell=1, \\
(z+1) (z^2+1), & \ell=2.
\end{array}\right.
\end{align}
By Proposition \ref{rem:asydeg}, the growth of degree patterns associated to each of $g_{\PTN}^{(\ell)}(z)$ ($\ell=1,2$) is quadratic.  The maximal growth coincides with the pattern \eqref{eq:C2100x} fitted by $g_{\PTN}^{(\ell=1)}(z)$.

The two different degree patterns in $\PNpPNm$ are given in \eqref{eq:C2100xkels} and \eqref{eq:C2100ykels}.  These two degree patterns fitted by the two different generating functions defined by
\begin{align}%
g_{\PNpPNm}^{(\ell)}(z)=\frac{h^{(\ell)}(z)}{(1+z^2)(1-z)^3}, \qquad \ell=1,2,
\end{align}
where
\begin{align}
h^{(\ell)}(z)=\left\{\begin{array}{rl}
z^{12}-2 z^{11}+2 z^{10}-2 z^9+z^8-2 z^5+3 z^4-2 z^3+4 z^2+z+1, & \ell=1, \\
z^{10}-2 z^9+2 z^8-2 z^7+z^6-z^4+3 z^3+2 z+1, & \ell=2.
\end{array}\right.
\end{align}

By Proposition \ref{rem:asydeg}, the growth of degree patterns associated to each of $g_{\PNpPNm}^{(\ell)}(z)$ ($\ell=1,2$) is quadratic.  The maximal degree growth is given by \eqref{eq:C2100maxkels}, and is fitted by the following generating function
\begin{equation}
    G_{\PNpPNm}\left( z \right) =
    \frac{z^{12}-2 z^{11}+2 z^{10}-2 z^9+z^8-z^6+z^5-z^4+2 z^3+z^2+2 z+1}{(1+z^2) (1-z)^3}.
\end{equation}
Its asymptotic behaviour is quadratic.

As expected, the asymptotic behaviour in $\PTN$ and $\PNpPNm$
is the same.
The above findings imply that the $C2_{\left( 1;\,0;\,0 \right)}$ equation 
is expected to be integrable.

\subsubsection{$C3_{\left( 0;\,0;\,0 \right)}(\x;\y)$ system}

The equation $C3_{(\delta_1;\,\delta_2;\,\delta_3)}$ is given in \eqref{c3ddd}.
For the case of $(\delta_1,\delta_2,\delta_3)=(0,0,0)$, this equation exhibits two different  growth patterns in both 
$\PTN$ and in $\PNpPNm$.
The patterns for the different sets of initial conditions are different from each other.
In both cases the patterns appear in reverse order in the
evolution matrix in the SW and SE directions.

The two different degree patterns in $\PTN$ are given in \eqref{eq:C3000x} and \eqref{eq:C3000y}. These two degree patterns  are fitted by the two different generating functions defined by
\begin{align}%
g_{\PTN}^{(\ell)}(z)=\frac{h^{(\ell)}(z)}{(1-z^3)(1-z)^2}, \qquad \ell=1,2,
\end{align}
where
\begin{align}
h^{(\ell)}(z)=\left\{\begin{array}{rl}
2 z^2+z+1, & \ell=1, \\
z^4+z^2+z+1, & \ell=2.
\end{array}\right.
\end{align}
By Proposition \ref{rem:asydeg}, the growth of degree patterns associated to each of $g_{\PTN}^{(\ell)}(z)$ ($\ell=1,2$) is quadratic.  The maximal growth coincides with the pattern \eqref{eq:C3000x} fitted by $g_{\PTN}^{(\ell=1)}(z)$.

The different degree patterns in $\PNpPNm$ are given in \eqref{eq:C3000xkels} and \eqref{eq:C3000ykels}. These two degree patterns  are  fitted by the two different generating functions defined by
\begin{align}%
g_{\PNpPNm}^{(\ell)}(z)=\frac{h^{(\ell)}(z)}{(1-z^4)(1-z^3)(1-z)}, \qquad \ell=1,2,
\end{align}
where
\begin{align}
h^{(\ell)}(z)=\left\{\begin{array}{rl}
-(z^8-5 z^5-6 z^4-8 z^3-8 z^2-3 z-1), & \ell=1, \\
z^7+z^6+4 z^5+7 z^4+7 z^3+5 z^2+4 z+1, & \ell=2.
\end{array}\right.
\end{align}
By Proposition \ref{rem:asydeg}, the growth of degree patterns associated to each of $g_{\PNpPNm}^{(\ell)}(z)$ ($\ell=1,2$) is quadratic.  The maximal degree growth is given by \eqref{eq:C3000maxkels}, and is fitted by the following generating function
\begin{equation}
    G_{\PNpPNm}\left( z \right) =
    \frac{z^9-z^6-5 z^5-5 z^4-8 z^3-7 z^2-4 z-1}{(z^4-1)(z^3-1)(z-1)}.
\end{equation}
Its asymptotic behaviour is quadratic.

As expected, the asymptotic behaviour in $\PTN$ and $\PNpPNm$
is the same.
The above findings imply that the $C3_{\left( 0;\,0;\,0 \right)}$ equation 
is expected to be integrable.

\subsubsection{$C3_{\left( 1;\,0;\,0 \right)}(\x;\y)$ system}

The equation $C3_{(\delta_1;\,\delta_2;\,\delta_3)}$ is given in \eqref{c3ddd}.
For the case of $(\delta_1,\delta_2,\delta_3)=(1,0,0)$, this equation exhibits two different  growth pattern both in 
$\PTN$ and in $\PNpPNm$.
The patterns for the different sets of initial conditions are different from each other.
In both cases the patterns appear in reverse order in the
evolution matrix in the SW and SE directions.

The different degree patterns in $\PTN$ are given in \eqref{eq:C3100x} and \eqref{eq:C3100y}. These two degree patterns  are  fitted by the two different generating functions defined by
\begin{align}%
g_{\PTN}^{(\ell)}(z)=\frac{h^{(\ell)}(z)}{(1-z^3)(1-z)^2}, \qquad \ell=1,2,
\end{align}
where
\begin{align}
h^{(\ell)}(z)=\left\{\begin{array}{rl}
-(z^5-z^3-2 z^2-z-1), & \ell=1, \\
2 z^2+z+1, & \ell=2.
\end{array}\right.
\end{align}
By Proposition \ref{rem:asydeg}, the growth of degree patterns associated to each of $g_{\PTN}^{(\ell)}(z)$ ($\ell=1,2$) is quadratic.  The maximal growth coincides with the pattern \eqref{eq:C3100x} fitted by $g_{\PTN}^{(\ell=1)}(z)$.

The different degree patterns in $\PNpPNm$  are given in \eqref{eq:C3100xkels} and \eqref{eq:C3100ykels}. These two degree patterns  are  fitted by the two different generating functions defined by
\begin{align}%
g_{\PNpPNm}^{(\ell)}(z)=\frac{h^{(\ell)}(z)}{(z^2+1) (z-1)^3}, \qquad \ell=1,2,
\end{align}
where
\begin{align}
h^{(\ell)}(z)=\left\{\begin{array}{rl}
z^8-2 z^7+2 z^6+z^5-3 z^4+2 z^3-4 z^2-z-1, & \ell=1, \\
z^6-2 z^5+3 z^4-4 z^3-2 z-1, & \ell=2.
\end{array}\right.
\end{align}
By Proposition \ref{rem:asydeg} the growth of degree patterns associated to each of $g_{\PNpPNm}^{(\ell)}(z)$ ($\ell=1,2$) is quadratic.  The maximal degree growth is given by \eqref{eq:C3100maxkels}, and is fitted by the following generating function
\begin{equation}
    G_{\PNpPNm}\left( z \right) =
    \frac{z^8-2 z^7+3 z^6-2 z^5+z^4-2 z^3-z^2-2 z-1}{(z^2+1) (z-1)^3}.
\end{equation}
Its asymptotic behaviour is quadratic.

As expected, the asymptotic behaviour in $\PTN$ and $\PNpPNm$
is the same.
The above findings imply that the $C3_{\left( 1;\,0;\,0 \right)}$ equation 
is expected to be integrable.

\subsection{Type-B (pairs)}\label{sec:typeBpairs}

There remains four type-B equations to be treated that give exponential degree growth in the lattice arrangement of Figure \ref{fig:lattice-Bsym}. These four type-B equations are $B2_{\left(1;\,0;\,1\right)}$, $B2_{\left(1;\,1;\,0\right)}$, $B3_{\left(\frac{1}{2};\,0;\,\frac{1}{2}\right)}$, and $B3_{\left(\frac{1}{2};\,\frac{1}{2};\,0\right)}$.  Using different type-B equations centered at black and white vertices respectively there can be found linear degree growths for certain pairs of type-B equations.  These pairs of type-B equations are investigated below.

\subsubsection{Arrangement of equations in the lattice}

This case involves a specific pair of type-B equations which will be denoted as $B$ and $\overline{B}$.  The two different type-B equations will be distinguished graphically by different orientations of directed edges, as shown in Figure \ref{fig:2typeB}.  Only the relative orientation matters, so if $B$ is associated to the left of Figure \ref{fig:2typeC}, then $\overline{B}$ is associated to the right of Figure \ref{fig:2typeB}, and vice versa.

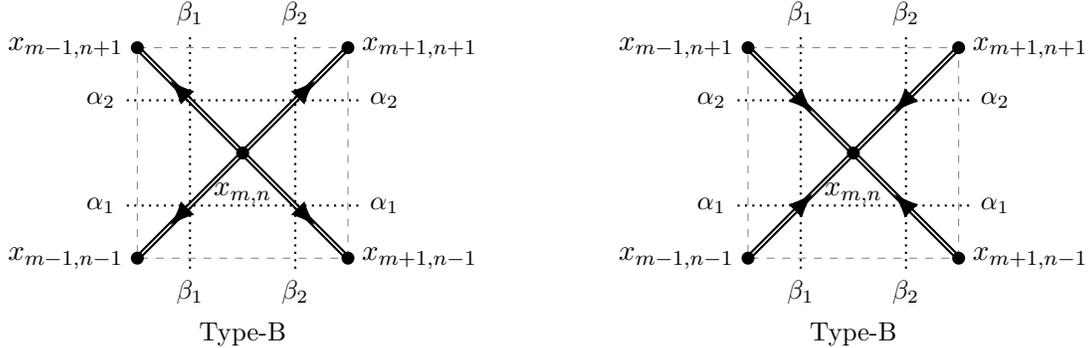
\begin{figure}[tbh]
\centering
\begin{tikzpicture}[scale=0.7]

\draw[-,gray,very thin,dashed] (5,-1)--(5,3)--(1,3)--(1,-1)--(5,-1);
\draw[->-,double,thick] (3,1)--(1,3);\draw[->-,double,thick] (3,1)--(5,3);
\draw[->-,double,thick] (3,1)--(1,-1);\draw[->-,double,thick] (3,1)--(5,-1);
\fill (0.8,-0.0) circle (0.1pt)
node[left=0.5pt]{\color{black}\small $\ccpa$};
\fill (5.2,-0.0) circle (0.1pt)
node[right=0.5pt]{\color{black}\small $\ccpa$};
\fill (4,-1.2) circle (0.1pt)
node[below=0.5pt]{\color{black}\small $\ccqb$};
\fill (4,3.2) circle (0.1pt)
node[above=0.5pt]{\color{black}\small $\ccqb$};
\fill (0.8,2.0) circle (0.1pt)
node[left=0.5pt]{\color{black}\small $\ccpb$};
\fill (5.2,2.0) circle (0.1pt)
node[right=0.5pt]{\color{black}\small $\ccpb$};
\fill (2,-1.2) circle (0.1pt)
node[below=0.5pt]{\color{black}\small $\ccqa$};
\fill (2,3.2) circle (0.1pt)
node[above=0.5pt]{\color{black}\small $\ccqa$};
\fill (3,1) circle (3.5pt)
node[below=7.5pt]{\color{black} $\ccx$};
\fill (1,-1) circle (3.5pt)
node[left=1.5pt]{\color{black} $\ccc$};
\fill (1,3) circle (3.5pt)
node[left=1.5pt]{\color{black} $\cca$};
\fill (5,3) circle (3.5pt)
node[right=1.5pt]{\color{black} $\ccb$};
\fill (5,-1) circle (3.5pt)
node[right=1.5pt]{\color{black} $\ccd$};

\draw[-,dotted,thick] (2,-1.2)--(2,3.2);\draw[-,dotted,thick] (4,-1.2)--(4,3.2);
\draw[-,dotted,thick] (0.8,-0.0)--(5.2,-0.0);\draw[-,dotted,thick] (0.8,2.0)--(5.2,2.0);

\fill (3,-2) circle (0.01pt)
node[below=0.5pt]{\color{black}\small Type-B};

\begin{scope}[xshift=330pt]

\draw[-,gray,very thin,dashed] (5,-1)--(5,3)--(1,3)--(1,-1)--(5,-1);
\draw[-,thick] (5,3)--(3,1)--(1,3);
\draw[->-,double,thick] (1,3)--(3-0.2,1+0.2);\draw[-,double,thick](3-0.2,1+0.2)--(3,1);
\draw[->-,double,thick] (5,3)--(3+0.2,1+0.2);\draw[-,double,thick](3+0.2,1+0.2)--(3,1);
\draw[->-,double,thick] (1,-1)--(3-0.2,1-0.2);\draw[-,double,thick](3-0.2,1-0.2)--(3,1);
\draw[->-,double,thick] (5,-1)--(3+0.2,1-0.2);\draw[-,double,thick](3+0.2,1-0.2)--(3,1);
\fill (0.8,-0.0) circle (0.1pt)
node[left=0.5pt]{\color{black}\small $\ccpa$};
\fill (5.2,-0.0) circle (0.1pt)
node[right=0.5pt]{\color{black}\small $\ccpa$};
\fill (4,-1.2) circle (0.1pt)
node[below=0.5pt]{\color{black}\small $\ccqb$};
\fill (4,3.2) circle (0.1pt)
node[above=0.5pt]{\color{black}\small $\ccqb$};
\fill (0.8,2.0) circle (0.1pt)
node[left=0.5pt]{\color{black}\small $\ccpb$};
\fill (5.2,2.0) circle (0.1pt)
node[right=0.5pt]{\color{black}\small $\ccpb$};
\fill (2,-1.2) circle (0.1pt)
node[below=0.5pt]{\color{black}\small $\ccqa$};
\fill (2,3.2) circle (0.1pt)
node[above=0.5pt]{\color{black}\small $\ccqa$};
\fill (3,1) circle (3.5pt)
node[below=7.5pt]{\color{black} $\ccx$};
\fill (1,-1) circle (3.5pt)
node[left=1.5pt]{\color{black} $\ccc$};
\fill (1,3) circle (3.5pt)
node[left=1.5pt]{\color{black} $\cca$};
\fill (5,3) circle (3.5pt)
node[right=1.5pt]{\color{black} $\ccb$};
\fill (5,-1) circle (3.5pt)
node[right=1.5pt]{\color{black} $\ccd$};

\draw[-,dotted,thick] (2,-1.2)--(2,3.2);\draw[-,dotted,thick] (4,-1.2)--(4,3.2);
\draw[-,dotted,thick] (0.8,-0.0)--(5.2,-0.0);\draw[-,dotted,thick] (0.8,2.0)--(5.2,2.0);

\fill (3,-2) circle (0.01pt)
node[below=0.5pt]{\color{black}\small Type-B};

\end{scope}

\end{tikzpicture}
\caption{Two different type-B equations which are distinguished graphically by the orientations of double-line edges.}
\label{fig:2typeB}
\end{figure}

The arrangement of the pair of type-B equations of Figure \ref{fig:2typeB} in the lattice is indicated in Figure \ref{fig:lattice-Basym}. The type-B equations centered at white vertices are
\begin{align}
\Bf{\ccy}{\cca}{\ccb}{\ccc}{\ccd}{\ccpp}{\ccqq}=0,
\end{align}
and the type-B equations centered at black vertices are 
\begin{align}
\B{\ccx}{\ccya}{\ccyb}{\ccyc}{\ccyd}{\hat{\ccpp}}{\hat{\ccqq}}=0.
\end{align}
Since the two different type-B equations $B$ and $\overline{B}$ are respectively centered at black and white vertices of Figure \ref{fig:lattice-Basym}, these systems of equations will be referred to as  $B(\x)+\overline{B}(\y)$ systems.

\begin{figure}[htb!]
\centering
\begin{tikzpicture}[scale=0.85]

\foreach \x in {-4,-2,...,3}{
 \draw[->-,double,thick] (\x,4)--(\x+1,3);
 \draw[->-,double,thick] (\x+2,4)--(\x+1,3);
 \draw[->-,double,thick] (\x,2)--(\x+1,3);
 \draw[->-,double,thick] (\x+2,2)--(\x+1,3);}
 
\foreach \x in {-2,0,...,3}{
 \draw[->-,double,thick] (\x,2)--(\x+1,1);
 \draw[->-,double,thick] (\x+2,2)--(\x+1,1);
 \draw[->-,double,thick] (\x,0)--(\x+1,1);
 \draw[->-,double,thick] (\x+2,0)--(\x+1,1);}
 
\foreach \x in {0,2,...,3}{
 \draw[->-,double,thick] (\x,0)--(\x+1,-1);
 \draw[->-,double,thick] (\x+2,0)--(\x+1,-1);
 \draw[->-,double,thick] (\x,-2)--(\x+1,-1);
 \draw[->-,double,thick] (\x+2,-2)--(\x+1,-1);} 
 
\foreach \x in {2}{
 \draw[->-,double,thick] (\x,-2)--(\x+1,-3);
 \draw[->-,double,thick] (\x+2,-2)--(\x+1,-3);
 \draw[->-,double,thick] (\x,-4)--(\x+1,-3);
 \draw[->-,double,thick] (\x+2,-4)--(\x+1,-3);} 
 
 \draw[->-,double,thick] (-2,2)--(-3,1);
 \draw[->-,double,thick] (0,0)--(-1,-1);
 \draw[->-,double,thick] (2,-2)--(1,-3);

\foreach \x in {-4,-2,...,4}{
 \foreach \y in {-4,-2,...,4}{
  \fill[black] (\x,\y) circle (3.0pt);}}

\foreach \x in {-3,-1,...,3}{
 \foreach \y in {-3,-1,...,3}{
  \filldraw[fill=white,draw=black] (\x,\y) circle (3.0pt);}}
  
\foreach \x in {-4,-2,...,4}{
\filldraw[fill=black,draw=black] (\x,-\x) \Square{3.0pt};}
\foreach \x in {-3,-1,...,3}{
\filldraw[fill=white,draw=black] (\x,-\x) \Square{3.0pt};}

\foreach \x in {-3,-1,...,3}{
\filldraw[fill=black,draw=black] (\x-1,-\x-1) \Square{3.0pt};
}%
\foreach \x in {-2,0,...,2}{
\filldraw[fill=white,draw=black] (\x-1,-\x-1) \Square{3.0pt};
}%

\filldraw[fill=white,draw=white] (-4.5,-4.5)--(1.0,-4.5)--(-4.5,1.0)--(-4.5,4.5);

\foreach \x in {-4,...,3}{
 \draw[-,dotted] (\x+0.5,-4.5)--(\x+0.5,4.5);
 \draw[-,dotted] (-4.5,\x+0.5)--(4.5,\x+0.5);}

\foreach \x in {-4,-2,...,3}{
 \fill (\x+0.5,-4.5) circle (0.01pt)
 node[below=2pt]{\small $\ccqa$};
 \fill (\x+1.5,-4.5) circle (0.01pt)
 node[below=2pt]{\small $\ccqb$};
 \fill (\x+0.5,4.5) circle (0.01pt)
 node[above=2pt]{\small $\ccqa$};
 \fill (\x+1.5,4.5) circle (0.01pt)
 node[above=2pt]{\small $\ccqb$};
 \fill (-4.5,\x+0.5) circle (0.01pt)
 node[left=2pt]{\small $\ccpa$};
 \fill (-4.5,\x+1.5) circle (0.01pt)
 node[left=2pt]{\small $\ccpb$};
 \fill (4.5,\x+0.5) circle (0.01pt)
 node[right=2pt]{\small $\ccpa$};
 \fill (4.5,\x+1.5) circle (0.01pt)
 node[right=2pt]{\small $\ccpb$};}

\end{tikzpicture}
\caption{Arrangement of a pair of type-B equations in the lattice.  This is equivalent to the type-B lattice arrangement of Figure \ref{fig:lattice-Bsym}, but with directed edges used to distinguish the two different type-B equations indicated in Figure \ref{fig:2typeB}.}
\label{fig:lattice-Basym}
\end{figure}
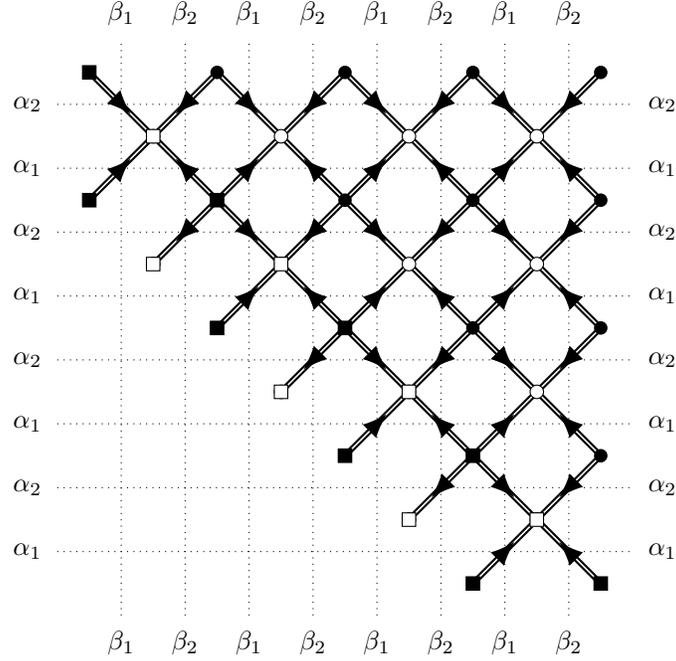

\subsubsection{$B2_{\left(1;\,0;\,1\right)}(\x)+B2_{\left(1;\,1;\,0\right)}(\y)$ system}
\label{ssub:B3101B3110}

The equation $B2_{(\delta_1;\,\delta_2;\,\delta_3)}$ is given in \eqref{b2ddd}.
The $B2_{\left(1;\,0;\,1\right)}(\x)+B2_{\left(1;\,1;\,0\right)}(\y)$ system exhibits two different  growth pattern both in 
$\PTN$ and in $\PNpPNm$.
The patterns for the different sets of initial conditions are different from each other.

In $\PTN$ the first growth pattern (of equations centered at $\x$ vertices)
is the same as given in \eqref{eq:B3100}.
Thus the growth is linear with generating function given by \eqref{eq:gfB3100}.
The second growth pattern (of equations centered at $\y$ vertices)
is instead:
\begin{equation}
    1, 3, 7, 11, 15, 19, 23, 27, 31, 35, 39, 43\dots.
    \label{eq:B3101B3110y}
\end{equation}
This growth is clearly linear and is fitted by the following
generating function:
\begin{equation}
    g_{\PTN}(z)%
    =\frac{2z^2+z+1}{(1-z)^2}.
    \label{eq:gfB3101B3110y}
\end{equation}
The linearity of the growth \eqref{eq:B3101B3110y} could also be confirmed
using Proposition \ref{rem:asydeg}.
The maximal growth coincides with the pattern \eqref{eq:B3101B3110y}.

In $\PNpPNm$ the first growth pattern (of equations centered in $\x$ vertices)
is:
\begin{equation}
    1, 4, 7, 10, 13, 16, 19, 22, 25, 28, 31, 34, 37\dots
    \label{eq:B3101B3110xkels}
\end{equation}
This growth is clearly linear and is fitted by the following
generating function:
\begin{equation}
    g_{\PNpPNm}(z)%
    =\frac{1+2z}{(1-z)^2}.
    \label{eq:gfB3101B3110xkels}
\end{equation}
The linearity  of the growth \eqref{eq:B3101B3110xkels} follows
from Proposition \ref{rem:asydeg}.
The second growth pattern (of equations centered in $\y$ vertices)
is instead:
\begin{equation}
    1, 4, 10, 16, 22, 28, 34, 40, 46, 52, 58, 64\dots
    \label{eq:B3101B3110ykels}
\end{equation}
This growth is clearly linear and is fitted by the following
generating function:
\begin{equation}
    g_{\PNpPNm}(z)%
    =\frac{3z^2+2z+1}{(1-z)^2}.
    \label{eq:gfB3101B3110ykels}
\end{equation}
The linearity  of the growth \eqref{eq:B3101B3110ykels} also follows 
from Proposition \ref{rem:asydeg}.
The maximal growth coincides with the pattern \eqref{eq:B3101B3110ykels}.

As expected, the asymptotic behaviour in $\PTN$ and $\PNpPNm$
is the same.
The above findings imply that the 
$B2_{\left(1;\,0;\,1\right)}(\x)+B2_{\left(1;\,1;\,0\right)}(\y)$ system
is expected to be linearisable.

\subsubsection{$B3_{\left(\frac{1}{2};\,0;\,\frac{1}{2}\right)}(\x)+B3_{\left(\frac{1}{2};\,\frac{1}{2};\,0\right)}(\y)$ system}

The equation $B3_{(\delta_1;\,\delta_2;\,\delta_3)}$ is given in \eqref{b3ddd}. The $B3_{\left(\frac{1}{2};\,0;\,\frac{1}{2}\right)}(\x)+B3_{\left(\frac{1}{2};\,\frac{1}{2};\,0\right)}(\y)$ system exhibits two different  growth pattern both in 
$\PTN$ and in $\PNpPNm$.
The patterns for the different sets of initial conditions are different from each other,
and coincide with those of the $B2_{\left(1;\,0;\,1\right)}(\x)+B2_{\left(1;\,1;\,0\right)}(\y)$ system.
Thus we refer back to subsection \ref{ssub:B3101B3110} for a complete description
of each of these patterns.
The above findings imply that the 
$B3_{\left(\frac{1}{2};\,0;\,\frac{1}{2}\right)}(\x)+B3_{\left(\frac{1}{2};\,\frac{1}{2};\,0\right)}(\y)$ system
is expected to be linearisable.

\subsection{Type-C (pairs)}\label{sec:typeCpairs}

There remains six type-C equations to be treated that give exponential degree growth in the lattice arrangement of Figure \ref{fig:lattice-Csym}. These six type-C equations are $C1_{(1)}$, $C2_{(0;\,0;\,0)}$, $C2_{\left(1;\,0;\,1\right)}$, $C2_{\left(1;\,1;\,0\right)}$, $C3_{\left(\frac{1}{2};\,0;\,\frac{1}{2}\right)}$, and $C3_{\left(\frac{1}{2};\,\frac{1}{2};\,0\right)}$.  Similarly to the type-B equations that were treated in Section \ref{sec:typeBpairs}, the remaining type-C equations can have a quadratic degree growth when combining certain pairs of the equations.  In fact, the equation $C1_{(1)}$ in \eqref{c1d} is a new equation that was found in this paper, motivated by the expectation there should be some equation to pair with $C2_{(0;\,0;\,0)}$ to achieve the quadratic degree growth.  The different pairs of type-C equations are investigated below.

In the following, the integers $n\;(\textrm{mod }4)$ are taken to be elements of $\{1,2,3,4\}$.

\subsubsection{Arrangement of equations in the lattice}

This case involves a specific pair of type-C equations which will be denoted as $C$ and $\overline{C}$.  The two different type-C equations are distinguished graphically by different orientations of directed edges, as shown in Figure \ref{fig:2typeC}.  Only the relative orientation matters, so if $C$ is associated to the left of Figure \ref{fig:2typeC}, then $\overline{C}$ is associated to the right of Figure \ref{fig:2typeC}, and vice versa.

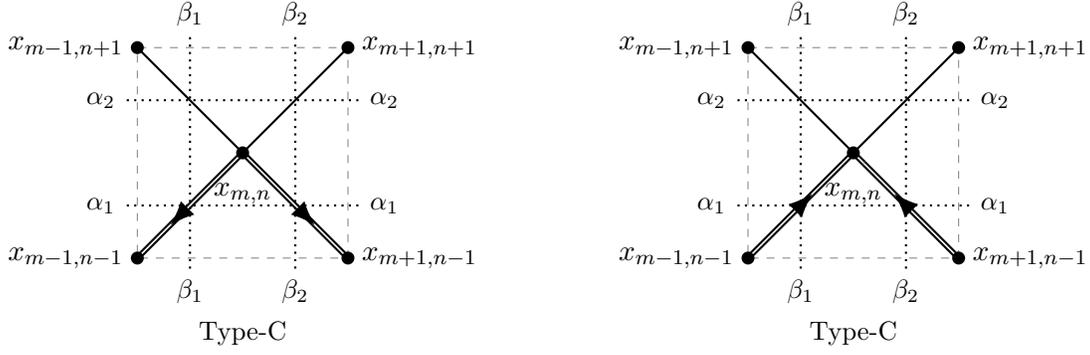
\begin{figure}[tbh]
\centering
\begin{tikzpicture}[scale=0.7]

\draw[-,gray,very thin,dashed] (5,-1)--(5,3)--(1,3)--(1,-1)--(5,-1);
\draw[-,thick] (5,3)--(3,1)--(1,3);
\draw[->-,double,thick] (3,1)--(1,-1);\draw[->-,double,thick] (3,1)--(5,-1);
\fill (0.8,-0.0) circle (0.1pt)
node[left=0.5pt]{\color{black}\small $\ccpa$};
\fill (5.2,-0.0) circle (0.1pt)
node[right=0.5pt]{\color{black}\small $\ccpa$};
\fill (4,-1.2) circle (0.1pt)
node[below=0.5pt]{\color{black}\small $\ccqb$};
\fill (4,3.2) circle (0.1pt)
node[above=0.5pt]{\color{black}\small $\ccqb$};
\fill (0.8,2.0) circle (0.1pt)
node[left=0.5pt]{\color{black}\small $\ccpb$};
\fill (5.2,2.0) circle (0.1pt)
node[right=0.5pt]{\color{black}\small $\ccpb$};
\fill (2,-1.2) circle (0.1pt)
node[below=0.5pt]{\color{black}\small $\ccqa$};
\fill (2,3.2) circle (0.1pt)
node[above=0.5pt]{\color{black}\small $\ccqa$};
\fill (3,1) circle (3.5pt)
node[below=7.5pt]{\color{black} $\ccx$};
\fill (1,-1) circle (3.5pt)
node[left=1.5pt]{\color{black} $\ccc$};
\fill (1,3) circle (3.5pt)
node[left=1.5pt]{\color{black} $\cca$};
\fill (5,3) circle (3.5pt)
node[right=1.5pt]{\color{black} $\ccb$};
\fill (5,-1) circle (3.5pt)
node[right=1.5pt]{\color{black} $\ccd$};

\draw[-,dotted,thick] (2,-1.2)--(2,3.2);\draw[-,dotted,thick] (4,-1.2)--(4,3.2);
\draw[-,dotted,thick] (0.8,-0.0)--(5.2,-0.0);\draw[-,dotted,thick] (0.8,2.0)--(5.2,2.0);

\fill (3,-2) circle (0.01pt)
node[below=0.5pt]{\color{black}\small Type-C};

\begin{scope}[xshift=330pt]

\draw[-,gray,very thin,dashed] (5,-1)--(5,3)--(1,3)--(1,-1)--(5,-1);
\draw[-,thick] (5,3)--(3,1)--(1,3);
\draw[->-,double,thick] (1,-1)--(3-0.2,1-0.2);\draw[-,double,thick](3-0.2,1-0.2)--(3,1);
\draw[->-,double,thick] (5,-1)--(3+0.2,1-0.2);\draw[-,double,thick](3+0.2,1-0.2)--(3,1);
\fill (0.8,-0.0) circle (0.1pt)
node[left=0.5pt]{\color{black}\small $\ccpa$};
\fill (5.2,-0.0) circle (0.1pt)
node[right=0.5pt]{\color{black}\small $\ccpa$};
\fill (4,-1.2) circle (0.1pt)
node[below=0.5pt]{\color{black}\small $\ccqb$};
\fill (4,3.2) circle (0.1pt)
node[above=0.5pt]{\color{black}\small $\ccqb$};
\fill (0.8,2.0) circle (0.1pt)
node[left=0.5pt]{\color{black}\small $\ccpb$};
\fill (5.2,2.0) circle (0.1pt)
node[right=0.5pt]{\color{black}\small $\ccpb$};
\fill (2,-1.2) circle (0.1pt)
node[below=0.5pt]{\color{black}\small $\ccqa$};
\fill (2,3.2) circle (0.1pt)
node[above=0.5pt]{\color{black}\small $\ccqa$};
\fill (3,1) circle (3.5pt)
node[below=7.5pt]{\color{black} $\ccx$};
\fill (1,-1) circle (3.5pt)
node[left=1.5pt]{\color{black} $\ccc$};
\fill (1,3) circle (3.5pt)
node[left=1.5pt]{\color{black} $\cca$};
\fill (5,3) circle (3.5pt)
node[right=1.5pt]{\color{black} $\ccb$};
\fill (5,-1) circle (3.5pt)
node[right=1.5pt]{\color{black} $\ccd$};

\draw[-,dotted,thick] (2,-1.2)--(2,3.2);\draw[-,dotted,thick] (4,-1.2)--(4,3.2);
\draw[-,dotted,thick] (0.8,-0.0)--(5.2,-0.0);\draw[-,dotted,thick] (0.8,2.0)--(5.2,2.0);

\fill (3,-2) circle (0.01pt)
node[below=0.5pt]{\color{black}\small Type-C};

\end{scope}

\end{tikzpicture}
\caption{Two different type-C equations which are distinguished graphically by the orientations of double-line edges.}
\label{fig:2typeC}
\end{figure}

The arrangement of the pair of type-C equations of Figure \ref{fig:2typeC} in the lattice is indicated in Figure \ref{fig:lattice-Casym}.  First, there are two types of white vertices in Figure \ref{fig:lattice-Casym} which are distinguished by the orientation of directed edges they are connected to.  The following two type-C equations are centered at the two respective types of white vertices
\begin{align}
\C{\ccy}{\cca}{\ccb}{\ccc}{\ccd}{\ccpp}{\ccqq}=0, \\
\Cf{\ccy}{\cca}{\ccb}{\ccc}{\ccd}{\ccpp}{\ccqq}=0.
\end{align}
Similarly, there are two types of black vertices in Figure \ref{fig:lattice-Casym} which are distinguished by the orientation of directed edges they are connected to.  The following two type-C equations are centered at the two respective types of black vertices
\begin{align}
\Cf{\ccx}{\ccyd}{\ccyc}{\ccyb}{\ccya}{\ccpp}{\ccqq}=0, \\
\C{\ccx}{\ccyd}{\ccyc}{\ccyb}{\ccya}{\ccpp}{\ccqq}=0.
\end{align}
The choice of which equation is centered at which type of black vertex, should be consistent with the orientation of directed edges from the choice made for equations centered at the white vertices.  Since the two different type-C equations $C$ and $\overline{C}$ each appear centered at both black and white vertices of Figure \ref{fig:lattice-Basym}, but with a different ordering, these systems of equations will be referred to as  $C(\x;\y)+\overline{C}(\y;\x)$ systems.

\begin{figure}[htb!]
\centering
\begin{tikzpicture}[scale=0.85]

\foreach \x in {-4,-2,...,3}{
 \draw[-] (\x,4)--(\x+1,3);
 \draw[-] (\x+2,4)--(\x+1,3);
 \draw[->-,double,thick] (\x,2)--(\x+1,3);
 \draw[->-,double,thick] (\x+2,2)--(\x+1,3);}
 
\foreach \x in {-2,0,...,3}{
 \draw[-] (\x,2)--(\x+1,1);
 \draw[-] (\x+2,2)--(\x+1,1);
 \draw[->-,double,thick] (\x+1,1)--(\x,0);
 \draw[->-,double,thick] (\x+1,1)--(\x+2,0);}
 
\foreach \x in {0,2,...,3}{
 \draw[-] (\x,0)--(\x+1,-1);
 \draw[-] (\x+2,0)--(\x+1,-1);
 \draw[->-,double,thick] (\x,-2)--(\x+1,-1);
 \draw[->-,double,thick] (\x+2,-2)--(\x+1,-1);} 
 
\foreach \x in {2}{
 \draw[-] (\x,-2)--(\x+1,-3);
 \draw[-] (\x+2,-2)--(\x+1,-3);
 \draw[->-,double,thick] (\x+1,-3)--(\x,-4);
 \draw[->-,double,thick] (\x+1,-3)--(\x+2,-4);} 
 
 \draw[-] (-2,2)--(-3,1);
 \draw[-] (0,0)--(-1,-1);
 \draw[-] (2,-2)--(1,-3);

\foreach \x in {-4,-2,...,4}{
 \foreach \y in {-4,-2,...,4}{
  \fill[black] (\x,\y) circle (3.0pt);}}

\foreach \x in {-3,-1,...,3}{
 \foreach \y in {-3,-1,...,3}{
  \filldraw[fill=white,draw=black] (\x,\y) circle (3.0pt);}}
  
\foreach \x in {-4,-2,...,4}{
\filldraw[fill=black,draw=black] (\x,-\x) \Square{3.0pt};}
\foreach \x in {-3,-1,...,3}{
\filldraw[fill=white,draw=black] (\x,-\x) \Square{3.0pt};}

\foreach \x in {-3,-1,...,3}{
\filldraw[fill=black,draw=black] (\x-1,-\x-1) \Square{3.0pt};
}%
\foreach \x in {-2,0,...,2}{
\filldraw[fill=white,draw=black] (\x-1,-\x-1) \Square{3.0pt};
}%

\filldraw[fill=white,draw=white] (-4.5,-4.5)--(1.0,-4.5)--(-4.5,1.0)--(-4.5,4.5);

\foreach \x in {-4,...,3}{
 \draw[-,dotted] (\x+0.5,-4.5)--(\x+0.5,4.5);
 \draw[-,dotted] (-4.5,\x+0.5)--(4.5,\x+0.5);}

\foreach \x in {-4,-2,...,3}{
 \fill (\x+0.5,-4.5) circle (0.01pt)
 node[below=2pt]{\small $\ccqa$};
 \fill (\x+1.5,-4.5) circle (0.01pt)
 node[below=2pt]{\small $\ccqb$};
 \fill (\x+0.5,4.5) circle (0.01pt)
 node[above=2pt]{\small $\ccqa$};
 \fill (\x+1.5,4.5) circle (0.01pt)
 node[above=2pt]{\small $\ccqb$};
 \fill (-4.5,\x+0.5) circle (0.01pt)
 node[left=2pt]{\small $\ccpa$};
 \fill (-4.5,\x+1.5) circle (0.01pt)
 node[left=2pt]{\small $\ccpb$};
 \fill (4.5,\x+0.5) circle (0.01pt)
 node[right=2pt]{\small $\ccpa$};
 \fill (4.5,\x+1.5) circle (0.01pt)
 node[right=2pt]{\small $\ccpb$};}

\end{tikzpicture}
\caption{Arrangement of a pair of type-C equations in the lattice.  This is equivalent to the type-C lattice arrangement of Figure \ref{fig:lattice-Csym}, but with directed edges used to distinguish the two different type-C equations indicated in Figure \ref{fig:2typeC}.}
\label{fig:lattice-Casym}
\end{figure}
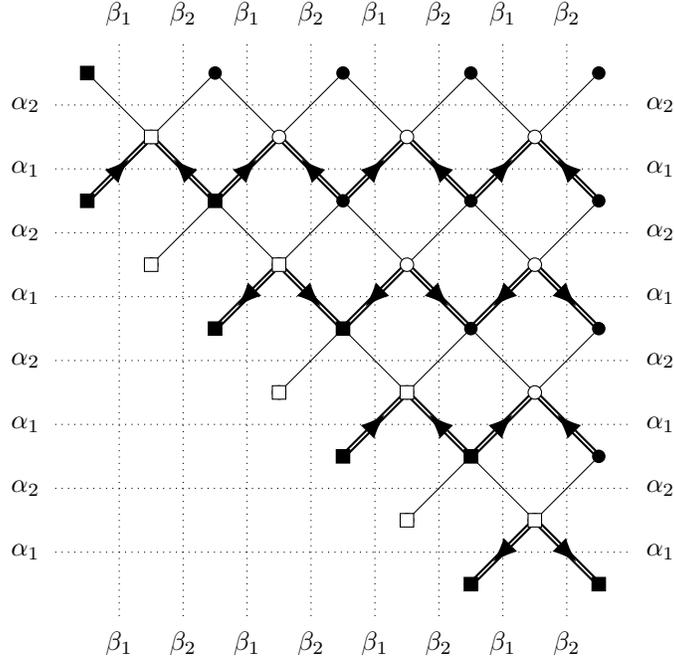

There is a general typical behaviour for the growth patterns for pairs of type-C equations in the different directions of the lattice.  It turns out that the generating functions for the growth patterns in the NE,SW,NW,SE directions may always be written as
\begin{align}\label{typeCdirections}
\begin{array}{llr}
NE:&g_{\mathcal{X}}^{(\ell,\ell)},& \qquad \ell=1,2,3,4, \\[0.2cm]
SW:&g_{\mathcal{X}}^{(\ell-1,\ell-1)},& \qquad \ell=1,2,3,4, \\[0.2cm]
NW:&g_{\mathcal{X}}^{(2-\ell,-\ell)},& \qquad \ell=1,2,3,4, \\[0.2cm]
SE:&g_{\mathcal{X}}^{(1-\ell,-1-\ell)},& \qquad \ell=1,2,3,4,
\end{array}
\end{align}
where $\mathcal{X}$ is either $\PTN$ or $\PNpPNm$, and $g_{\mathcal{X}}^{(\ell_1,\ell_2)}=zh_1^{(\ell_1)}(z^2)+h_2^{(\ell_2)}(z^2)$ is a generating function that is split into odd and even components, $h_1^{(\ell_1)}(z)$ and $h_2^{(\ell_2)}(z)$ respectively.  In the following the generating function $g_{\mathcal{X}}^{(\ell_1,\ell_2)}$ is given for each case, which was found by fitting the relevant degree growths listed in Appendix \ref{app:degrees}.

\subsubsection{$C1_{\left( 1 \right)}(\x;\y)+C2_{\left( 0;\,0;\,0 \right)}(\y;\x)$ system}

The equation $C1_{(\delta_1)}$ is given in \eqref{c1d}, 
and the equation $C2_{(\delta_1;\,\delta_2;\,\delta_3)}$ is given in \eqref{c2ddd}.  
The $C1_{\left( 1 \right)}(\x;\y)+C2_{\left( 0;\,0;\,0 \right)}(\y;\x)$ system 
of equations exhibits four different growth patterns both in 
$\PTN$ and in $\PNpPNm$.
The patterns for the different sets of initial conditions are different from each other.
The maximal growth is isotropic.

\paragraph{$\PTN$.}

The generating function $g_{\PTN}^{(\ell_1,\ell_2)}(z)$ is %
\begin{align}\label{c1plusc1000}
g_{\PTN}^{(\ell_1,\ell_2)}(z)=\frac{zh_1^{(\ell_1)}(z^2)+h_2^{(\ell_2)}(z^2)}{(1-z^6)(1-z^2)^2}, \qquad \ell_1,\ell_2=1,2,3,4 \;(\mathrm{mod}\,4),
\end{align}
where
\begin{align}
h_1^{(\ell)}(z)=\left\{\begin{array}{rl}
z^4+z^3+6z^2+5z+3, & \ell=1\;(\mathrm{mod}\,4), \\
z^4+3z^3+6z^2+3z+3, & \ell=2\;(\mathrm{mod}\,4), \\
2z^3+5z^2+6z+3, & \ell=3\;(\mathrm{mod}\,4), \\
z^4+2z^3+6z^2+4z+3, & \ell=4\;(\mathrm{mod}\,4),
\end{array}\right.
\end{align}
and
\begin{align}
h_2^{(\ell)}(z)=\left\{\begin{array}{rl}
z^4+5z^3+4z^2+5z+1, & \ell=1\;(\mathrm{mod}\,4), \\
3z^4+3z^3+6z^2+3z+1, & \ell=2\;(\mathrm{mod}\,4), \\
5z^3+5z^2+5z+1, & \ell=3\;(\mathrm{mod}\,4), \\
2z^4+4z^3+5z^2+4z+1, & \ell=4\;(\mathrm{mod}\,4).
\end{array}\right.
\end{align}

The four different degree patterns for each of the NE/SW and NW/SE directions 
are given in \eqref{eq:C11C2000NE1}--\eqref{eq:C11C2000NE4} and 
\eqref{eq:C11C2000NW1}--\eqref{eq:C11C2000NW4}, respectively.
The generating functions for the growth patterns in the NE, SW, NW, and SE 
directions are respectively given in terms of \eqref{c1plusc1000} as indicated in \eqref{typeCdirections}. 
By Proposition \ref{rem:asydeg} the growth of degree patterns associated 
to each of these generating functions is quadratic.

The maximal growth with initial conditions in $\PTN$ is
isotropic and given by the quadratic degree pattern \eqref{eq:C11C2000NE3} fitted by $g_{\PTN}^{(3,3)}(z)$.

\paragraph{$\PNpPNm$.}

The generating function $g_{\PNpPNm}^{(\ell_1,\ell_2)}(z)$ is %
\begin{align}\label{c1plusc1000kels}
g_{\PNpPNm}^{(\ell_1,\ell_2)}(z)=\frac{zh_1^{(\ell_1)}(z^2)+h_2^{(\ell_2)}(z^2)}{(1-z^{22})(1-z^6)(1-z^2)}, \qquad \ell_1,\ell_2=1,2,3,4 \;(\mathrm{mod}\,4),
\end{align}
where
\begin{align}
h_1^{(\ell)}(z)=\left\{\begin{array}{rl}
z^{16}+z^{14}+3z^{13}+15z^{12}+24z^{11}+32z^{10}+28z^9+32z^8 & \\
+28z^7+31z^6+28z^5+30z^4+29z^3+26z^2+16z+4, & \ell=1\;(\mathrm{mod}\,4), \\[0.2cm]
z^{15}+ z^{14} + 7 z^{13} + 16 z^{12}+ 27 z^{11}+ 28 z^{10} + 32 z^9  + 28 z^8  & \\
+ 31 z^7  + 29 z^6  + 30 z^5  + 29 z^4  + 28 z^3  + 24 z^2  + 12 z + 5, & \ell=2\;(\mathrm{mod}\,4), \\[0.2cm]
-z^{14}   + 4 z^{13}   + 13 z^{12}   + 25 z^{11}   + 31 z^{10}   + 28 z^9  + 32 z^8  & \\  
 + 28 z^7 + 32 z^6  + 28 z^5  + 31 z^4  + 30 z^3  + 26 z^2 + 17 z + 4, & \ell=3\;(\mathrm{mod}\,4), \\[0.2cm]
z^{14}   + 5 z^{13}   + 16 z^{12}   + 26 z^{11}   + 28 z^{10}   + 32 z ^9 + 28 z^8 & \\
+ 32 z^7  + 28 z^6  + 31 z^5  + 29 z^4  + 29 z^3  + 25 z^2  + 13 z + 5, & \ell=4\;(\mathrm{mod}\,4),
\end{array}\right.
\end{align}
and
\begin{align}
h_2^{(\ell)}(z)=\left\{\begin{array}{rl}
z^{16}+z^{14}+10z^{13}+20z^{12}+28z^{11}+30z^{10}+30z^9+29z^8 & \\
+31z^7+28z^6+31z^5+28z^4+31z^3+18z^2+11z+1, & \ell=1\;(\mathrm{mod}\,4),  \\[0.2cm]
z^{15}+4z^{14}+12z^{13}+20z^{12}+29z^{11}+30z^{10}+29z^9+31z^8 & \\
+28z^7+32z^6+28z^5+31z^4+24z^3+19z^2+9z+1, & \ell=2\;(\mathrm{mod}\,4), \\[0.2cm]
-z^{15}+9z^{13}+21z^{12}+27z^{11}+31z^{10}+29z^9+30z^8 & \\
+30z^7+29z^6+31z^5+29z^4+32z^3+19z^2+11z+1, & \ell=3\;(\mathrm{mod}\,4), \\[0.2cm]
2z^{14}+12z^{13}+19z^{12}+30z^{11}+29z^{10}+30z^9+30z^8 & \\
+29z^7+31z^6+28z^5+32z^4+26z^3+20z^2+9z+1, & \ell=4\;(\mathrm{mod}\,4).
\end{array}\right.
\end{align}

The four different degree patterns for each of the NE/SW and NW/SE directions are given in \eqref{eq:C11C2000NEkels1}--\eqref{eq:C11C2000NEkels4} and \eqref{eq:C11C2000NWkels1}--\eqref{eq:C11C2000NWkels4}, resepectively.  The generating functions for the growth patterns in the NE,SW,NW,SE directions are respectively given in terms of \eqref{c1plusc1000kels}  as indicated in \eqref{typeCdirections}.  By Proposition \ref{rem:asydeg} the growth of degree patterns associated to each of these generating functions is quadratic.

The maximal growth with initial conditions in $\PNpPNm$ is
isotropic and given by \eqref{eq:C11C2000NWkelsmax}.  
The maximal degree growth is fitted by the following generating function:
\begin{multline}
g_{\PNpPNm}(z)%
= \\
\frac{z^{16}-z^{13}-4 z^{12}-5 z^{11}-8 z^{10}-8 z^9-7 z^8-7 z^7-8 z^6-9 z^5-6 z^4-8 z^3-7 z^2-4 z-1}{%
        (z^{11}-1)(z^3-1)(z-1)}.
    \label{eq:gfC11C200NWkelsmax}
\end{multline}
By Proposition \ref{rem:asydeg} the growth of degree patterns associated 
to $G_{\PNpPNm}(z)$ is quadratic.

As expected, the asymptotic behaviour in $\PTN$ and $\PNpPNm$
is the same.
The above findings imply that the 
$C1_{\left(1\right)}(\x;\y)+C2_{\left(0;\,0;\,0\right)}(\y;\x)$ system
is expected to be integrable.

\subsubsection{$C2_{\left(1;\,0;\,1\right)}(\x;\y)+C2_{\left(1;\,1;\,0\right)}(\y;\x)$ system}\label{sss:C2101C2110}

The equation $C2_{(\delta_1;\,\delta_2;\,\delta_3)}$ is given in \eqref{c2ddd}.  
The $C2_{\left(1;\,0;\,1\right)}(\x;\y)+C2_{\left(1;\,1;\,0\right)}(\y;\x)$ 
system of equations exhibits four different growth patterns both in 
$\PTN$ and in $\PNpPNm$, respectively.
The patterns for the different sets of initial conditions are different from each other.
The maximal growth is isotropic.

\paragraph{$\PTN$.}

The generating function $g_{\PTN}^{(\ell_1,\ell_2)}(z)$ is %
\begin{align}\label{c101plusc110}
g_{\PTN}^{(\ell_1,\ell_2)}(z)=\frac{zh_1^{(\ell_1)}(z^2)+h_2^{(\ell_2)}(z^2)}{(z^6-1)(z^2-1)^2}, \qquad \ell_1,\ell_2=1,2,3,4 \;(\mathrm{mod}\,4),
\end{align}
where
\begin{align}
h_1^{(\ell)}(z)=\left\{\begin{array}{rl}
3 z^5  + z^4  - 2 z^3  - 8 z^2  - 7 z - 3, & \ell=1\;(\mathrm{mod}\,4), \\
z^5  + 2 z^4  - 2 z^3  - 7 z^2  - 7 z - 3, & \ell=2\;(\mathrm{mod}\,4), \\
2 z^4  - 3 z^3  - 5 z^2  - 7 z - 3, & \ell=3\;(\mathrm{mod}\,4), \\
-2 z^3  - 5 z^2  - 6 z - 3, & \ell=4\;(\mathrm{mod}\,4),
\end{array}\right.
\end{align}
and
\begin{align}
h_2^{(\ell)}(z)=\left\{\begin{array}{rl}
z^6  + 3 z^5  - 6 z^3  - 8 z^2  - 5 z - 1, & \ell=1\;(\mathrm{mod}\,4), \\
3 z^5  - z^4  - 4 z^3  - 8 z^2  - 5 z - 1, & \ell=2\;(\mathrm{mod}\,4), \\
z^5  - 4 z^3  - 7 z^2  - 5 z - 1, & \ell=3\;(\mathrm{mod}\,4), \\
-5 z^3  - 5 z^2  - 5 z - 1, & \ell=4\;(\mathrm{mod}\,4).
\end{array}\right.
\end{align}

The four different degree patterns for each of the NE/SW and NW/SE directions are given in \eqref{eq:C2101C2110NE1}--\eqref{eq:C2101C2110NE4} and \eqref{eq:C2101C2110NW1}--\eqref{eq:C2101C2110NW4}, resepectively.  The generating functions for the growth patterns in the NE,SW,NW,SE directions are respectively given in terms of \eqref{c101plusc110}   as indicated in \eqref{typeCdirections}.  By Proposition \ref{rem:asydeg} the growth of degree patterns associated to each of these generating functions is quadratic.

The maximal growth with initial conditions in $\PTN$ is
isotropic and given by the quadratic degree pattern \eqref{eq:C2101C2110NE1} fitted by $g_{\PTN}^{(1,1)}(z)$.%

\paragraph{$\PNpPNm$.}

The generating function $g_{\PNpPNm}^{(\ell_1,\ell_2)}(z)$ is %
\begin{align}\label{c101plusc1102}
g_{\PNpPNm}^{(\ell_1,\ell_2)}(z)=\frac{zh_1^{(\ell_1)}(z^2)+h_2^{(\ell_2)}(z^2)}{(1-z^{8})(1-z^6)(1-z^2)}, \qquad \ell_1,\ell_2=1,2,3,4 \;(\mathrm{mod}\,4),
\end{align}
where
\begin{align}
h_1^{(\ell)}(z)=\left\{\begin{array}{rl}
z^{12}   - z^9  - 7 z^8  - 7 z^7  - 3 z^6  + 14 z^5  & \\
+ 31 z^4  + 39 z^3  + 34 z^2  + 18 z + 5, & \ell=1\;(\mathrm{mod}\,4), \\[0.2cm]
z^{12}   - z^9  - 3 z^8  - 6 z^7  - z^6  + 13 z^5 & \\
+ 29 z^4  + 37 z^3  + 32 z^2  + 18 z + 5, & \ell=2\;(\mathrm{mod}\,4), \\[0.2cm]
z^{11}   - z^9  - z^8  - 5 z^7  + 3 z^6  + 13 z^5  & \\
+ 27 z^4  + 36 z^3  + 28 z^2  + 19 z + 4, & \ell=3\;(\mathrm{mod}\,4), \\[0.2cm]
z^{10}   - z^8 - 2 z^7  + 3 z^6  + 16 z^5   & \\
+ 27 z^4  + 33 z^3  + 26 z^2  + 16 z + 5, & \ell=4\;(\mathrm{mod}\,4),
\end{array}\right.
\end{align}
and
\begin{align}
h_2^{(\ell)}(z)=\left\{\begin{array}{rl}
z^{13}   - z^{10}   - 3 z^9  - 7 z^8  - 7 z^7  + 5 z^6  & \\
+ 22 z^5  + 37 z^4  + 39 z^3  + 26 z^2  + 11 z + 1, & \ell=1\;(\mathrm{mod}\,4),  \\[0.2cm]
z^{12}   - z^9  - 6 z^8  - 5 z^7  + 5 z^6  + 21 z^5 & \\
 + 36 z^4  + 35 z^3  + 26 z^2  + 11 z + 1, & \ell=2\;(\mathrm{mod}\,4), \\[0.2cm]
z^{12}   - 2 z^9  - 3 z^8  - 2 z^7  + 8 z^6  + 21 z^5 &\\
 + 32 z^4  + 33 z^3  + 24 z^2  + 11 z + 1, & \ell=3\;(\mathrm{mod}\,4), \\[0.2cm]
z^{10}   - z^9  - 2 z^7  + 10 z^6  + 21 z^5 &\\
  + 31 z^4  + 32 z^3  + 20 z^2  + 11 z + 1, & \ell=4\;(\mathrm{mod}\,4).
\end{array}\right.
\end{align}

The four different degree patterns for each of the NE/SW and NW/SE directions are given in \eqref{eq:C2101C2110NEkels1}--\eqref{eq:C2101C2110NEkels4} and \eqref{eq:C2101C2110NWkels1}--\eqref{eq:C2101C2110NWkels4}, resepectively.  The generating functions for the growth patterns in the NE,SW,NW,SE directions are respectively given in terms of \eqref{c101plusc1102}  as indicated in \eqref{typeCdirections}.  By Proposition \ref{rem:asydeg} the growth of degree patterns associated to each of these generating functions is quadratic.

The maximal growth with initial conditions in $\PNpPNm$ is
isotropic and given by the quadratic degree pattern \eqref{eq:C2101C2110NEkels1} fitted by $g_{\PNpPNm}^{(1,1)}(z)$.%

As expected, the asymptotic behaviour in $\PTN$ and $\PNpPNm$
is the same.
The above findings imply that the 
$C2_{\left(1;\,0;\,1\right)}(\x;\y)+C2_{\left(1;\,1;\,0\right)}(\y;\x)$ system
is expected to be integrable.

\subsubsection{$C3_{\left(\frac{1}{2};\,0;\,\frac{1}{2}\right)}(\x;\y)+C3_{\left(\frac{1}{2};\,\frac{1}{2};\,0\right)}(\y;\x)$ system}

The equation $C3_{(\delta_1;\,\delta_2;\,\delta_3)}$ is given in \eqref{c3ddd}.
The $C3_{\left(\frac{1}{2};\,0;\,\frac{1}{2}\right)}(\x;\y)+C3_{\left(\frac{1}{2};\,\frac{1}{2};\,0\right)}(\y;\x)$ system of equations exhibits four different growth patterns both in 
$\PTN$ and in $\PNpPNm$, respectively.
The patterns for the different sets of initial conditions are different from each other.
The maximal growth is isotropic.

\paragraph{$\PTN$.}

In $\PTN$ the growth patterns exactly coincide with those found for the
$C2_{\left(1;\,0;\,1\right)}(\x;\y)+C3_{\left(1;\,1;\,0\right)}(\y;\x)$ 
system given in Section \ref{sss:C2101C2110}.

\paragraph{$\PNpPNm$.}

The generating function $g_{\PNpPNm}^{(\ell_1,\ell_2)}(z)$ is %
\begin{align}\label{c101plusc1103}
g_{\PNpPNm}^{(\ell_1,\ell_2)}(z)=\frac{zh_1^{(\ell_1)}(z^2)+h_2^{(\ell_2)}(z^2)}{(z^{8}-1)(z^6-1)(z^2-1)}, \qquad \ell_1,\ell_2=1,2,3,4 \;(\mathrm{mod}\,4),
\end{align}
where
\begin{align}
h_1^{(\ell)}(z)=\left\{\begin{array}{rl}
7 z^8  + 8 z^7  + 3 z^6  - 14 z^5  - 32 z^4  - 39 z^3  - 34 z^2  - 18 z - 5, & \ell=1\;(\mathrm{mod}\,4), \\%
3 z^8  + 6 z^7  + 2 z^6  - 13 z^5  - 29 z^4  - 38 z^3  - 32 z^2  - 18 z - 5, & \ell=2\;(\mathrm{mod}\,4), \\%
z^9  + 5 z^7  - 3 z^6  - 12 z^5  - 27 z^4  - 36 z^3  - 29 z^2  - 19 z - 4, & \ell=3\;(\mathrm{mod}\,4), \\%
z^8  + z^7  - 3 z^6  - 15 z^5  - 27 z^4  - 33 z^3  - 27 z^2  - 16 z - 5, & \ell=4\;(\mathrm{mod}\,4),
\end{array}\right.
\end{align}
and
\begin{align}
h_2^{(\ell)}(z)=\left\{\begin{array}{rl}
3 z^9  + 8 z^8  + 7 z^7  - 5 z^6  - 23 z^5  - 37 z^4  - 39 z^3  - 26 z^2  - 11 z - 1, & \ell=1\;(\mathrm{mod}\,4),  \\
6 z^8 + 5 z^7  - 4 z^6  - 21 z^5  - 36 z^4  - 36 z^3  - 26 z^2  - 11 z - 1, & \ell=2\;(\mathrm{mod}\,4), \\%
z^9  + 3 z^8  + 2 z^7  - 7 z^6  - 21 z^5  - 32 z^4  - 34 z^3  - 24 z^2 - 11 z - 1, & \ell=3\;(\mathrm{mod}\,4), \\
z^9  + z^7  - 10 z^6  - 20 z^5  - 31 z^4  - 32 z^3  - 21 z^2  - 11 z - 1, & \ell=4\;(\mathrm{mod}\,4).
\end{array}\right.
\end{align}

The four different degree patterns for each of the NE/SW and NW/SE directions are given in \eqref{eq:C3101C3110NEkels1}--\eqref{eq:C3101C3110NEkels4} and \eqref{eq:C3101C3110NWkels1}--\eqref{eq:C3101C3110NWkels4}, resepectively.  The generating functions for the growth patterns in the NE,SW,NW,SE directions are respectively given in terms of \eqref{c101plusc1103}   as indicated in \eqref{typeCdirections}.  By Proposition \ref{rem:asydeg} the growth of degree patterns associated to each of these generating functions is quadratic.

The maximal growth with initial conditions in $\PNpPNm$ is
isotropic and  is given by the quadratic degree pattern \eqref{eq:C3101C3110NEkels1} fitted by $g_{\PNpPNm}^{(1,1)}(z)$.%

As expected, the asymptotic behaviour in $\PTN$ and $\PNpPNm$
is the same.
The above findings imply that the 
$C3_{\left(\frac{1}{2};\,0;\,\frac{1}{2}\right)}(\x;\y)+C3_{\left(\frac{1}{2};\,\frac{1}{2};\,0\right)}(\y;\x)$ system
is expected to be integrable.

\section{Summary and outlook}
\label{sec:concl}

In this paper we introduced the concept of algebraic entropy for
face-centered quad equations, which we have developed in analogy with
the concept of algebraic
entropy for regular quad equations
\cite{Tremblay2001,Viallet2006,Viallet2009}.
A key step was the identification of standard sets of initial conditions,
which we call the fundamental double staircases, which allows
us to build the sequence of iterates.
In particular, we use this concept to analyse the growth of
each of the known equations that satisfies the property of
consistency-around-a-face-centered-cube (CAFCC) \cite{Kels2020cafcc}.
The arrangement of CAFCC equations in the $\Z^{2}$ lattice is non-trivial
as they are non-autonomous equations.
For all cases the appropriate arrangements in the lattice were found
which gave
a quadratic growth of degrees for type-A and type-C equations,
and a linear growth of degrees for type-B equations.
Degree sequences where computed for two different sets of initial conditions,
namely the $\mathcal{X}_{1}$ and the $\mathcal{X}_{2}$ spaces.
Analysing the obtained sequences of degrees, we note that, in general, 
using initial conditions in  the space $\mathcal{X}_1$ the obtained sequences 
are less noisy.
This suggests that using initial conditions in the space $\mathcal{X}_{1}$
is more suitable from the computational point of view.

CAFCC equations are multidimensionally consistent, and this property
was used to obtain the Lax pairs of the equations \cite{Kels2020lax}.
Multidimensional consistency is a property usually associated to
Bianchi-like identities and B\"acklund transformations.
It is well known that Lax pairs and B\"acklund transformations are
associated to both linearisable and integrable equations.
While in the integrable case the solution of the Lax pair provides
genuine nontrivial solutions \cite{CalogeroDeGasperisIST_I}, in the
linearisable
case the Lax pair is fake
\cite{CalogeroNucci1991,HayButler2013,HayButler2015,GSL_Gallipoli15}.
Our growth analysis implies that type-A and type-C equations must be
genuinely integrable due to the quadratic growth.
On the other hand, since type-B equations have linear growth, this
suggests that these equations are linearisable and that their associated
Lax pairs are fake.
This is analogous to the results presented in \cite{GSL_general} about the
trapezoidal $H4$ and the $H6$ equations \cite{Boll2011,ABS2009}.
The linear behaviour of the trapezoidal $H4$ and the $H6$
equations was explained in \cite{GSY_DarbouxI,GSY_DarbouxII} through the
concept of Darboux integrability for quad equations
\cite{AdlerStartsev1999}.
We conjecture that type-B equations satisfy an analogous notion
for face-centered quad equations.
Such a notion has not yet been discussed, but it is an stimulating open
issue for future research.

In this paper only the minimal number of different equations (either
individual equations or pairs) were
used to achieve the vanishing algebraic entropy.
However, there are also more complicated arrangements of equations
which are also expected to lead to polynomial degree growth of the
equations.
For example, the type-C equations can be used as a link between type-A and
type-B equations (similarly to their use in CAFCC) to construct a system
consisting of each of type-A, -B, and -C equations in the lattice. This
arrangement of equations could also be reasonably expected
to pass the algebraic entropy test and it would be interesting to
explore further.
This could be considered as an analogue for the case of algebraic entropy
for non-standard
lattice arrangements of $H4$ and $H6$ quad equations
\cite{HietarintaViallet2012}.

\section*{Acknowledgements}

The authors thank Prof.~C-M.~Viallet for reading the manuscript
and providing helpful comments.  
We also thank the anonymous referee for their helpful comments in correcting
some errors in the first version of the paper.

GG has been supported by the Australian
Research Council through grant DP200100210 (Prof.~N.~Joshi and A/Prof.~M.~Radnovi\v{c})
and by Fondo Sociale Europeo del Friuli Venezia Giulia,
Programma operativo regionale 2014--2020 FP195673001 (A/Prof. T. Grava
and A/Prof. D. Guzzetti).

\begin{appendices}
\numberwithin{equation}{section}

\section{CAFCC equations}\label{app:equations}

\subsection*{\texorpdfstring{$A3_{(\delta)}$}{A3(delta)}, \texorpdfstring{$B3_{(\delta_1;\,\delta_2;\,\delta_3)}$}{B3(delta1;delta2;delta3)}, \texorpdfstring{$C3_{(\delta_1;\,\delta_2;\,\delta_3)}$}{C3(delta1;delta2;delta3)}}\label{app:b3}

\begin{multline}\label{a3d}
A3_{(\delta)}(x;x_a,x_b,x_c,x_d;\al,\bt)= \\ 
\begin{aligned}
& x\bigl((\tfrac{\beta_1}{\beta_2}-\tfrac{\beta_2}{\beta_1})(x_ax_b-x_cx_d) + (\tfrac{\alpha_1}{\alpha_2}-\tfrac{\alpha_2}{\alpha_1})(x_ax_c-x_bx_d) - (\tfrac{\alpha_1\alpha_2}{\beta_1\beta_2}-\tfrac{\beta_1\beta_2}{\alpha_1\alpha_2})(x_ax_d-x_bx_c)\bigr)  \\
&+(\tfrac{\alpha_2}{\beta_1}-\tfrac{\beta_1}{\alpha_2})(x_a x^2 - x_bx_cx_d) - (\tfrac{\alpha_2}{\beta_2}-\tfrac{\beta_2}{\alpha_2})(x_b x^2 - x_ax_cx_d)   
 - (\tfrac{\alpha_1}{\beta_1}-\tfrac{\beta_1}{\alpha_1})(x_c x^2 - x_ax_bx_d)   \\
 &+ (\tfrac{\alpha_1}{\beta_2}-\tfrac{\beta_2}{\alpha_1})(x_d x^2 - x_ax_bx_c) - \delta(\tfrac{\alpha_1}{\alpha_2}-\tfrac{\alpha_2}{\alpha_1})(\tfrac{\beta_1}{\beta_2}-\tfrac{\beta_2}{\beta_1})\bigl(\tfrac{\alpha_1\alpha_2}{\beta_1\beta_2}-\tfrac{\beta_1\beta_2}{\alpha_1\alpha_2}\bigr)x \\
& + \delta\Bigl((\tfrac{\alpha_1}{\beta_1}-\tfrac{\beta_1}{\alpha_1})(\tfrac{\alpha_2}{\beta_2}-\tfrac{\beta_2}{\alpha_2})\bigl((\tfrac{\alpha_1}{\beta_2}-\tfrac{\beta_2}{\alpha_1})x_a + (\tfrac{\alpha_2}{\beta_1}-\tfrac{\beta_1}{\alpha_2})x_d\bigr)
 \\
& \phantom{+\delta\Bigl(} - (\tfrac{\alpha_1}{\beta_2}-\tfrac{\beta_2}{\alpha_1})(\tfrac{\alpha_2}{\beta_1}-\tfrac{\beta_1}{\alpha_2})\bigl((\tfrac{\alpha_1}{\beta_1}-\tfrac{\beta_1}{\alpha_1})x_b + (\tfrac{\alpha_2}{\beta_2}-\tfrac{\beta_2}{\alpha_2})x_c\bigr) 
\Bigr) =0.
\end{aligned}
\end{multline}

\begin{multline}\label{b3ddd}
B3_{(\delta_1;\,\delta_2;\,\delta_3)}(x;x_a,x_b,x_c,x_d;\al,\bt)= \\ 
    \begin{aligned}
& x_b x_c-x_a x_d + \tfrac{\delta_2}{2}(\tfrac{\alpha_2}{\alpha_1}-\tfrac{\alpha_1}{\alpha_2})(\tfrac{\beta_1}{\beta_2}-\tfrac{\beta_2}{\beta_1}) + 
{\delta_2}(\tfrac{\alpha_1}{\beta_2} x_a - \tfrac{\alpha_1}{\beta_1} x_b - \tfrac{\alpha_2}{\beta_2} x_c + \tfrac{\alpha_2}{\beta_1} x_d)x \\
& +\frac{\delta_1}{x}(\tfrac{\beta_2}{\alpha_1} x_a - \tfrac{\beta_1}{\alpha_1} x_b - \tfrac{\beta_2}{\alpha_2} x_c + \tfrac{\beta_1}{\alpha_2} x_d)+ 
\frac{\delta_3}{x}\bigl(x_a x_b \alpha_2(\tfrac{x_d}{\beta_2}-\tfrac{x_c}{\beta_1})+x_c x_d \alpha_1(\tfrac{x_a}{\beta_1}-\tfrac{x_b}{\beta_2})\bigr)=0.
    \end{aligned}
\end{multline}

\begin{multline}\label{c3ddd}
C3_{(\delta_1;\,\delta_2;\,\delta_3)}(x;x_a,x_b,x_c,x_d;\al,\bt)= \\ 
    \begin{aligned}
& \Bigl(\alpha_2^2 (x_b x_c - x_a x_d) + \beta_1 \beta_2 (x_a x_c - x_b x_d) + \alpha_2(\tfrac{\beta_2}{\beta_1}-\tfrac{\beta_1}{\beta_2})\bigl({\delta_1} \alpha_1 - {\delta_3}\tfrac{\beta_1 \beta_2}{\alpha_1}x_a x_b  \\
&\phantom{\Bigl(}+  {\delta_2}\tfrac{\beta_1 \beta_2}{\alpha_1} x_c x_d\bigr)\Bigr)x %
 +\alpha_2 x_a x_b (\beta_2 x_d-\beta_1 x_c) + {\delta_1}\alpha_1\bigl(\beta_1 x_b-\beta_2 x_a + \alpha_2^2 (\tfrac{x_a}{\beta_2} - \tfrac{x_b}{\beta_1})\bigr)  \\
& + {\delta_2}\Bigl(\tfrac{(\alpha_2^2 - \beta_1^2)(\alpha_2^2 - \beta_2^2)}{2\alpha_2\beta_1\beta_2}(\beta_2 x_d-\beta_1 x_c) + \tfrac{x_c x_d}{\alpha_1}\bigl(\beta_1 \beta_2 (\beta_1 x_b-\beta_2 x_a) + \alpha_2^2 (\beta_1 x_a - \beta_2 x_b)\bigr)\Bigr) \\
& +\Bigl(\alpha_2 (\beta_1 x_d - \beta_2 x_c) -{\delta_3}\bigl(\alpha_2^2 (\beta_1 x_b-\beta_2 x_a) + \beta_1 \beta_2(\beta_1 x_a - \beta_2 x_b)\bigr)\alpha_1^{-1}\Bigr)x^2=0.
    \end{aligned}
\end{multline}

The equation \eqref{a3d} satisfies CAFCC for the two values $\delta=0,1$.  The equations \eqref{a3d}, \eqref{b3ddd}, \eqref{c3ddd}, collectively satisfy CAFCC for the four values of $(\delta_1,\delta_2,\delta_3)=(0,0,0)$, $(1,0,0)$, $(\tfrac{1}{2},0,\tfrac{1}{2}),(\tfrac{1}{2},\tfrac{1}{2},0)$, where the parameter for \eqref{a3d} is $\delta=2\delta_2$.

\subsection*{\texorpdfstring{$A2_{(\delta_1;\,\delta_2)}$}{A2(delta1;delta2)}, \texorpdfstring{$B2_{(\delta_1;\,\delta_2;\,\delta_3)}$}{B2(delta1;delta2;delta3)}, \texorpdfstring{$C2_{(\delta_1;\,\delta_2;\,\delta_3)}$}{C2(delta1;delta2;delta3)}}\label{app:b2}

Define $\theta_{ij}$ by
\begin{align}%
    \theta_{ij}=\theta_i-\theta_j, \qquad i,j\in\{1,2,3,4\},\qquad (\theta_1,\theta_2,\theta_3,\theta_4)=(\alpha_1,\alpha_2,\beta_1,\beta_2).
\end{align}

\begin{multline}\label{a2dd}
 A2_{(\delta_1;\,\delta_2)}(x;x_a,x_b,x_c,x_d;\al,\bt)= \\ 
\begin{aligned}
& \theta_{23}(x_ax^2 -x_bx_cx_d) - \theta_{24}(x_bx^2 -x_ax_cx_d) - \theta_{13}(x_cx^2 -x_ax_bx_d) + \theta_{14}(x_dx^2 -x_ax_bx_c)  \\
& + \Bigl(\theta_{34}(x_ax_b-x_cx_d) + \theta_{12}(x_ax_c-x_bx_d) - (\theta_{13}+\theta_{24})(x_ax_d-x_bx_c)  \Bigr)x \\
& + {\delta_1}\Bigl(\theta_{14}\theta_{23}(\theta_{13}x_b+\theta_{24}x_c)(2x-\theta_{12}\theta_{34})^{\delta_2} - \theta_{13}\theta_{24}(\theta_{13}x_a+\theta_{23}x_d)(2x+\theta_{12}\theta_{34})^{\delta_2}
   \Bigr) \\
& + {\delta_1}x\theta_{12}\theta_{34}(\theta_{13}+\theta_{24})\bigl(x+x_a+x_b+x_c+x_d - \theta_{12}^2-\theta_{13}\theta_{23}-\theta_{14}\theta_{24}\bigr)^{\delta_2}  \\
& + {\delta_2}\Bigl( x_a \theta_{13}\theta_{14}(\theta_{24}\theta_{14}^2-\theta_{34}x_b) - x_b\theta_{13}\theta_{23}(\theta_{14}\theta_{13}^2-\theta_{12}x_d) - x_c\theta_{14}\theta_{24}(\theta_{23}\theta_{24}^2+\theta_{12}x_a)     \\
& \;+ x_d\theta_{23}\theta_{24}(\theta_{13}\theta_{23}^2 +\theta_{34}x_c) +
     \bigl(x_a x_d\theta_{13}\theta_{42} + x_b x_c\theta_{23}\theta_{14} + {\textstyle \prod\limits_{1\leq i<j\leq4}}\theta_{ij}\bigr)(\theta_{13}+\theta_{24})\!\Bigr)\! =0.
\end{aligned}
\end{multline}

\begin{multline}\label{b2ddd}
B2_{(\delta_1;\,\delta_2;\,\delta_3)}(x;x_a,x_b,x_c,x_d;\al,\bt)={\delta_1}\Bigl( \theta_{12}\theta_{43}(\theta_{12}^2-\theta_{13}\theta_{14}-\theta_{23}\theta_{24})^{\delta_2}(-x_a-x_b-x_c-x_d)^{\delta_3} \\ 
    \begin{aligned}
& +\bigl(x_a(x+\theta_{14}\theta_{41}^{\delta_3})^{1+{\delta_2}}-x_b(x+\theta_{13}\theta_{31}^{\delta_3})^{1+{\delta_2}}-x_c(x+\theta_{24}\theta_{42}^{\delta_3})^{1+{\delta_2}}+x_d(x+\theta_{23}\theta_{32}^{\delta_3})^{1+{\delta_2}}\bigr)\!\Bigr)   \\
& + {\delta_3}\bigl((x_a x_c-x_b x_d)\theta_{12} + (x_a x_b-x_c x_d)\theta_{34} - x_bx_c(x_a+x_d) + x_ax_d(x_b+x_c)\bigr) \\
& + 2{\delta_2}( \theta_{12}\theta_{34})x^2 + (2{\delta_2} x +{\delta_3})\theta_{12}\theta_{34}(\theta_{13}+\theta_{24})+ (x_a x_d - x_b x_c) (\theta_{31}+\theta_{42})^{\delta_3}(-1)^{\delta_2} =0.
    \end{aligned}
\end{multline}

\begin{multline}\label{c2ddd}
C2_{(\delta_1;\,\delta_2;\,\delta_3)}(x;x_a,x_b,x_c,x_d;\al,\bt)= \\ 
    \begin{aligned}
& (x_d -x_c)(x^2 + x_a x_b) + \theta_{34} (x^2 - x_a x_b)(\theta_{13}+\theta_{14})^{\delta_3} +2 {\delta_3}(\theta_{23} x_a - \theta_{24} x_b)x^2 \\
& + \Bigl((x_a + x_b + 2 {\delta_2} \theta_{23}\theta_{24})(x_c - x_d) - (x_a-x_b)(\theta_{23}+\theta_{24})(\theta_{13}+\theta_{14})^{\delta_3} + 2 {\delta_3} \theta_{34} x_a x_b \Bigr)x \\
& + {\delta_1}\bigl(x_a \theta_{24} -x_b \theta_{23} + \theta_{34}(\delta_2\theta_{23}\theta_{24}-x)\bigr)
\Bigl(\theta_{13}^{1+{\delta_2}+{\delta_3}}+\theta_{14}^{1+{\delta_2}+{\delta_3}} + 2 {\delta_2} x_c x_d -(x_c+x_d)(\theta_{13}+\theta_{14})^{\delta_2}\Bigr) \\
& +\delta_1 \theta_{23}\theta_{24}\bigl(x_c-x_d+\theta_{34}(\theta_{13}+\theta_{14}- 2 x)^{{\delta_3}}\bigr)(x_a+x_b-\theta_{34}^2-\theta_{23}\theta_{24})^{\delta_2} =0.
    \end{aligned}
\end{multline}

The equation \eqref{a2dd} satisfies CAFCC  for the three values $(\delta_1,\delta_2)=(0,0),(1,0), (1,1)$. The equations \eqref{a2dd}, \eqref{b2ddd}, \eqref{c2ddd}, collectively satisfy CAFCC  with the four values $(\delta_1,\delta_2,\delta_3)=(0,0,0),(1,0,0),(1,0,1),(1,1,0)$.

\subsection*{\texorpdfstring{$A2_{(\delta_1;\,\delta_2)}$, $D1$, $C1_{(\delta_1)}$}{A1(delta1;delta2), D1, C1(delta1)}}\label{app:b1}

\begin{align}\label{d1}
 D1(x_a,x_b,x_c,x_d)= x_a-x_b-x_c+x_d=0.
\end{align}
 \begin{align}\label{c1d}
\begin{split}
& C1(x;x_a,x_b,x_c,x_d;\al,\bt)=\,
 (x_c-x_d)x^2 +
 \Bigl(2(\beta_1-\beta_2)\bigl(-\frac{x_c+x_d}{2}\bigr)^{\delta_1} -(x_a+x_b)(x_c-x_d)\Bigr)x \\
& + 2 \bigl((\beta_2-\alpha_2)x_a +(\alpha_2-\beta_1)x_b\bigr)\bigl(-\frac{x_c+x_d}{2}\bigr)^{\delta_1} + \bigl(x_ax_b- \delta_1 (\alpha_2-\beta_1)(\alpha_2-\beta_2)\bigr)(x_c-x_d) =0.
\end{split}
\end{align}
The equations \eqref{a2dd}, \eqref{d1}, \eqref{c1d}, collectively satisfy CAFCC for the values $(\delta_1,\delta_2)=(0,0),(1,0)$.

\subsection*{Four-leg expressions}

The above equations have equivalent four-leg type expressions that are given respectively in \eqref{4leg}--\eqref{4legc}, with the functions given in Tables \ref{table-A} and \ref{table-BC2} below.  The abbreviation {\it add.}, indicates an additive form of one of the equations \eqref{4leg}, \eqref{4legb}, \eqref{4legc}, given respectively by
\begin{align}
    a(x;x_a;\alpha_2,\beta_1)+a(x;x_d;\alpha_1,\beta_2)-a(x;x_b;\alpha_2,\beta_2)-a(x;x_c;\alpha_1,\beta_1)=0, \\
    b(x;x_a;\alpha_2,\beta_1)+b(x;x_d;\alpha_1,\beta_2)-b(x;x_b;\alpha_2,\beta_2)-b(x;x_c;\alpha_1,\beta_1)=0, \\
    a(x;x_a;\alpha_2,\beta_1)+c(x;x_d;\alpha_1,\beta_2)-a(x;x_b;\alpha_2,\beta_2)-c(x;x_c;\alpha_1,\beta_1)=0.
\end{align}

\begin{table}[htb!]
\centering
\begin{tabular}{c|c}%

 Type-A & $a(x;y;\alpha,\beta)$ %
 
 \\
 
 \hline
 
 \\[-0.4cm]

$A3_{(\delta=1)}$ & $\displaystyle
\frac{\alpha^2+\beta^2\overline{x}^2-2\alpha\beta \overline{x}y}
     {\beta^2+\alpha^2\overline{x}^2-2\alpha\beta \overline{x}y}$

\\[0.4cm]

$A3_{(\delta=0)}$ & $\displaystyle
\frac{\beta x-\alpha y}{\alpha x-\beta y}$

\\[0.4cm]

$A2_{(\delta_1=1;\,\delta_2=1)}$ & $\displaystyle
\frac{(\sqrt{x}+\alpha-\beta)^2-y}{(\sqrt{x}-\alpha+\beta)^2-y}$

\\[0.4cm]

$A2_{(\delta_1=1;\,\delta_2=0)}$ & $\displaystyle
\frac{-x+y+\alpha-\beta}{x-y+\alpha-\beta}$

\\[0.4cm]

$A2_{(\delta_1=0;\,\delta_2=0)}$ & $\displaystyle
\phantom{(add.) }\quad  \frac{\alpha-\beta}{x-y} \quad (add.)$

\\[0.25cm]

\hline 
\end{tabular}
\caption{A list of $a(x;y;\alpha,\beta)$ in \eqref{4leg} for type-A equations \eqref{a3d}, \eqref{a2dd}.  Here $\overline{x}=x+\sqrt{x^2-1}$.}%
\label{table-A}
\end{table}


\begin{table}[htb!]
\centering
\begin{tabular}{c|c}

Type-B  & $b(x;y;\alpha,\beta)$ 
 
 \\
 
 \hline
 
 \\[-0.4cm]

$B3_{(\delta_1=\frac{1}{2};\,\delta_2=\frac{1}{2};\,\delta_3=0)}$ &  $\displaystyle
\beta^2+\alpha^2x^2-2\alpha\beta xy$

\\[0.2cm]

$B3_{(\delta_1=\frac{1}{2};\,\delta_2=0;\,\delta_3=\frac{1}{2})}$  & $\displaystyle
\frac{\alpha y-\beta\overline{x}}{\alpha\overline{x}y-\beta}$

\\[0.2cm]

$B3_{(\delta_1=1;\,\delta_2=0;\,\delta_3=0)}$ &  $\displaystyle
\beta-\alpha xy$

\\[0.2cm]

 $B3_{(\delta_1=0;\,\delta_2=0;\,\delta_3=0)}$  & $\displaystyle y$

\\[0.2cm]

$B2_{(\delta_1=1;\,\delta_2=1;\,\delta_3=0)}$ & $\displaystyle
(x+\alpha-\beta)^2-y$

\\[0.2cm]

$B2_{(\delta_1=1;\,\delta_2=0;\,\delta_3=1)}$  & $\displaystyle
\frac{\sqrt{x}+y+\alpha-\beta}{-\sqrt{x}+y+\alpha-\beta}$

\\[0.2cm]

$B2_{(\delta_1=1;\,\delta_2=0;\,\delta_3=0)}$ & $\displaystyle
x+y+\alpha-\beta$

\\[0.2cm]

$B2_{(\delta_1=0;\,\delta_2=0;\,\delta_3=0)}$ & $\displaystyle y$

\\[0.2cm]

$D1$  & $\displaystyle
\phantom{(add.) }\quad  y \quad (add.)$

\\[0.2cm]

$D1$  & $\displaystyle
\phantom{(add.) }\quad  y \quad (add.)$

\\[0.05cm]

\hline 
\end{tabular}
\hspace{0.0cm}
\begin{tabular}{c|c}%

 Type-C & $c(x;y;\alpha,\beta)$ %
 
 \\
 
 \hline
 
 \\[-0.4cm]

 $C3_{(\delta_1=\frac{1}{2};\,\delta_2=\frac{1}{2};\,\delta_3=0)}$  
&
$\displaystyle
\frac{\alpha-\beta\overline{x}y}{\alpha\overline{x}-\beta y}$

\\[0.2cm]

$C3_{(\delta_1=\frac{1}{2};\,\delta_2=0;\,\delta_3=\frac{1}{2})}$
&
$\bigl(\tfrac{\alpha^2}{\beta}+\beta x^2-2\alpha x y\bigr)$

\\[0.2cm]

$C3_{(\delta_1=1;\,\delta_2=0;\,\delta_3=0)}$      
&
$xy-\tfrac{\alpha}{\beta}$

\\[0.2cm]

$C3_{(\delta_1=0;\,\delta_2=0;\,\delta_3=0)}$ 
&
$y$

\\[0.2cm]

$C2_{(\delta_1=1;\,\delta_2=1;\,\delta_3=0)}$
&
$\displaystyle
\frac{-\sqrt{x}+y-\alpha+\beta}{\sqrt{x}+y-\alpha+\beta}$

\\[0.2cm]

$C2_{(\delta_1=1;\,\delta_2=0;\,\delta_3=1)}$ 
&
$(x-\alpha+\beta)^2-y$

\\[0.2cm]

$C2_{(\delta_1=1;\,\delta_2=0;\,\delta_3=0)}$ 
&
$\displaystyle
x+y-\alpha+\beta$

\\[0.2cm]

$C2_{(\delta_1=0;\,\delta_2=0;\,\delta_3=0)}$ 
&
$\phantom{(a) }$ \,  $-\frac{y+\beta}{2x}  \quad (add.)$

\\[0.2cm]

$C1_{(\delta_1=1)}$ 
& 
$y$

\\[0.2cm]

$C1_{(\delta_1=0)}$ 
& 
$\phantom{(add.) }\quad  y \quad (add.)$

\\[0.05cm]

\hline 
\end{tabular}
\caption{Left: A list of $b(x;y;\alpha,\beta)$ in \eqref{4legb} for type-B equations \eqref{b3ddd}, \eqref{b2ddd}. Right: A list of $c(x;y;\alpha,\beta)$ in \eqref{4legc} for type-C equations \eqref{c3ddd}, \eqref{c2ddd}.  For $C3_{(\delta_1;\,\delta_2;\,\delta_3)}$, $C2_{(\delta_1;\,\delta_2;\,\delta_3)}$, and  $C1_{(\delta_1)}$, the $a(x;y;\alpha,\beta)$ are respectively given by $A3_{(2\delta_2)}$, $A2_{(\delta_1;\,\delta_2)}$, and $A2_{(\delta_1;\,\delta_2=0)}$.  Here $\overline{x}=x+\sqrt{x^2-1}$.}%
\label{table-BC2}
\end{table}

\section{Lists of degree growths for type-C equations}\label{app:degrees}

This Appendix gives the lists of degree growths that have been obtained for type-C equations with the two types of initial conditions \eqref{initialconditionsdef}.%

\subsection*{\texorpdfstring{$C1_{(0)}(\x;\y)$}{C1(0)} system}

\paragraph{$\PTN$.}
The two different growth patterns are
\begin{gather}
    1, 3, 7, 11, 17, 25, 33, 43, 55, 67, 81, 97, 113\dots
    \label{eq:C1x} \\
    1, 3, 5, 9, 15, 21, 29, 39, 49, 61, 75, 89\dots
    \label{eq:C1y}
\end{gather}

\paragraph{$\PNpPNm$.}
The two different growth patterns are
\begin{gather}
    \begin{gathered}
        1, 4, 12, 20, 30, 46, 62, 78, 101, 125, 149, 178, 
        210, 242, 278, 317, 357, 401,
        \\
        446, 494, 546, 598, 653, 713, 773, 834, 902, 970, 
        1038, 1113, 1189\dots
    \end{gathered}
    \label{eq:C1xkels} \\
    \begin{gathered}
        1, 5, 10, 17, 29, 41, 54, 74, 94, 114, 141, 169, 197,
        230, 266, 302, 342, 
        \\
        385, 429, 477, 526, 578, 634, 690, 749, 813, 877, 942, 1014, 1086\dots
    \end{gathered}
    \label{eq:C1ykels}
\end{gather}

\paragraph{Maximal.}

The maximal growth in $\PTN$ is given by \eqref{eq:C1x}.  The maximal growth in $\PNpPNm$ is:
\begin{equation}
    \begin{gathered}
        1, 5, 12, 20, 30, 46, 62, 78, 101, 125, 149, 178, 210, 242, 278, 
        317, 357, 
        \\
        401, 446, 494, 546, 598, 653, 713, 773, 834, 902, 970, 1038, 1113\dots
    \end{gathered}
    \label{eq:C1maxkels}
\end{equation}

\subsection*{\texorpdfstring{$C2_{\left( 1;\,0;\,0 \right)}(\x;\y)$}{C2(1;0;0)} system}

\paragraph{$\PTN$.}
The two different growth patterns are
\begin{gather}
    1, 3, 7, 12, 19, 27, 36, 47, 59, 72, 87, 103, 120\dots
    \label{eq:C2100x} \\
    1, 3, 6, 11, 17, 24, 33, 43, 54, 67, 81, 96\dots
    \label{eq:C2100y}
\end{gather}

\paragraph{$\PNpPNm$.}
The two different growth patterns are
\begin{gather}
    \begin{gathered}
        1, 4, 12, 22, 34, 49, 67, 87, 110, 135, 163, 193, 225, 260, 
        \\
        298, 338, 380, 425, 473, 523, 575, 630, 688, 748, 810\dots
    \end{gathered}
    \label{eq:C2100xkels} \\
    \begin{gathered}
        1, 5, 11, 20, 32, 46, 63, 82, 104, 128, 154, 183, 215, 249, 
        \\
        285, 324, 366, 410, 456, 505, 557, 611, 667, 726\dots
    \end{gathered}
    \label{eq:C2100ykels}
\end{gather}

\paragraph{Maximal.}

The maximal growth in $\PTN$ is given by \eqref{eq:C2100x}.  The maximal growth in $\PNpPNm$  is:
\begin{equation}
    \begin{gathered}
        1, 5, 12, 22, 34, 49, 67, 87, 110, 135, 163, 193, 225, 260, 
        \\
        298, 338, 380, 425, 473, 523, 575, 630, 688, 748\dots
    \end{gathered}
    \label{eq:C2100maxkels}
\end{equation}

\subsection*{\texorpdfstring{$C3_{\left( 0;\,0;\,0 \right)}(\x;\y)$}{C3(0;0;0)} system}

\paragraph{$\PTN$.}
The two different growth patterns are
\begin{gather}
    1, 3, 7, 12, 18, 26, 35, 45, 57, 70, 84, 100, 117\dots
    \label{eq:C3000x} \\
    1, 3, 6, 10, 16, 23, 31, 41, 52, 64, 78, 93\dots
    \label{eq:C3000y}
\end{gather}

\paragraph{$\PNpPNm$.}
The two different growth patterns are
\begin{gather}
    \begin{gathered}
        1, 4, 12, 21, 31, 47, 64, 82, 104, 129, 155, 185, 216, 250, 288, 327,367,
        \\
        413, 460, 508, 560, 615, 671, 731, 792, 856, 924, 993, 1063, 1139, 1216\dots
    \end{gathered}
    \label{eq:C3000xkels} \\
    \begin{gathered}
        1, 5, 10, 18, 30, 43, 57, 77, 98, 120, 146, 175, 205, 239, 274, 312, 354, 
        \\
        397, 441, 491, 542, 594, 650, 709, 769, 833, 898, 966, 1038, 1111\dots
    \end{gathered}
    \label{eq:C3000ykels}
\end{gather}

\paragraph{Maximal.}

The maximal growth in $\PTN$ is given by \eqref{eq:C3000x}.  The maximal growth in $\PNpPNm$   is:
\begin{equation}
    \begin{gathered}
        1, 5, 12, 21, 31, 47, 64, 82, 104, 129, 155, 185, 216, 250, 288, 327, 367,
        \\
        413, 460, 508, 560, 615, 671, 731, 792, 856, 924, 993, 1063, 1139\dots
    \end{gathered}
    \label{eq:C3000maxkels}
\end{equation}

\subsection*{\texorpdfstring{$C3_{\left( 1;\,0;\,0 \right)}(\x;\y)$}{C3(1;0;0)} system}

\paragraph{$\PTN$.}
The two different growth patterns are
\begin{gather}
    1, 3, 7, 13, 20, 28, 38, 49, 61, 75, 90, 106, 124\dots
    \label{eq:C3100x} \\
    1, 3, 7, 12, 18, 26, 35, 45, 57, 70, 84, 100\dots
    \label{eq:C3100y}
\end{gather}

\paragraph{$\PNpPNm$.}
The two different growth patterns are
\begin{gather}
    \begin{gathered}
        1, 4, 12, 22, 34, 50, 68, 88, 111, 137, 165, 195, 228, 264, 
        \\
        302, 342, 385, 431, 479, 529, 582, 638, 696, 756, 819\dots
    \end{gathered}
    \label{eq:C3100xkels} \\
    \begin{gathered}
        1, 5, 11, 21, 33, 47, 64, 84, 106, 130, 157, 187, 219, 253, 
        \\
        290,330, 372, 416, 463, 513, 565, 619, 676, 736\dots
    \end{gathered}
     \label{eq:C3100ykels} 
\end{gather}

\paragraph{Maximal.}

The maximal growth in $\PTN$ is given by \eqref{eq:C3100x}.  The maximal growth in $\PNpPNm$   is:
\begin{equation}
    \begin{gathered}
        1, 5, 12, 22, 34, 50, 68, 88, 111, 137, 165, 195, 228, 264, 
        \\
        302, 342,385, 431, 479, 529, 582, 638, 696, 756\dots
    \end{gathered}
    \label{eq:C3100maxkels}
\end{equation}

\subsection*{\texorpdfstring{$C1_{\left( 1 \right)}(\x;\y)+C2_{\left( 0;\,0;\,0 \right)}(\y;\x)$}{C1(1)+C2(0;0;0)} system}

\paragraph{$\PTN$ -- NE \& SW directions.}
The four different growth patterns are:%
\begin{gather}
    1, 3, 7, 11, 17, 25, 33, 43, 55, 67, 81, 97, 113\dots
    \label{eq:C11C2000NE1} \\
    1, 3, 5, 9, 15, 21, 29, 39, 49, 61, 75, 89\dots
    \label{eq:C11C2000NE2} \\
    1, 3, 7, 12, 18, 26, 35, 45, 57, 70, 84, 100\dots
    \label{eq:C11C2000NE3} \\
    1, 3, 6, 10, 16, 23, 31, 41, 52, 64, 78\dots
    \label{eq:C11C2000NE4}
\end{gather}

\paragraph{$\PTN$ -- NW \& SE directions.}
The four different growth patterns are:%
\begin{gather}
        1, 3, 7, 11, 18, 25, 35, 43, 57, 67, 84, 97, 117, 131, 155, 171, 198, 217, 247, 267, 301\dots
    \label{eq:C11C2000NW1} \\
        1, 3, 5, 10, 15, 23, 29, 41, 49, 64, 75, 93, 105, 127, 141, 166, 183, 211, 229, 261\dots
    \label{eq:C11C2000NW2} \\
        1, 3, 7, 12, 17, 26, 33, 45, 55, 70, 81, 100, 113, 135, 151, 176, 193, 222, 241, 273\dots
    \label{eq:C11C2000NW3}\\
    1, 3, 6, 9, 16, 21, 31, 39, 52, 61, 78, 89, 109, 123, 146, 161, 188, 205, 235\dots
    \label{eq:C11C2000NW4}
\end{gather}

\paragraph{$\PNpPNm$ -- NE \& SW directions.}
The four different growth patterns are:%
\begin{gather}
    \begin{gathered}
        1, 4, 12, 20, 30, 46, 62, 79, 101, 125, 150, 179, 210, 
        243, 280, 317, 358, 403, 448, 495, 548,
        \\
         601, 655, 715, 776, 838, 904, 973, 1043, 1117, 1192, 
        1270, 1352, 1434, 1519, 1609, 1699\dots
    \end{gathered}
    \label{eq:C11C2000NEkels1} \\
    \begin{gathered}
        1, 5, 10, 17, 29, 41, 54, 74, 94, 115, 141, 169, 
        198, 231, 266, 303, 344, 385, 430, 479,528,
        \\
         579, 
        636, 693, 751, 815, 880, 946, 1016, 1089, 1163, 1241, 1320, 1402, 1488, 1574\dots
    \end{gathered}
    \label{eq:C11C2000NEkels2} \\
    \begin{gathered}
        1, 4, 12, 21, 31, 47, 64, 81, 104, 129, 154, 183, 216, 249, 286, 325, 366, 411,457, 505, 
        \\
        558, 612, 666, 727, 789, 851, 918, 988, 1058, 1132, 1209, 1287, 1369, 1452, 1538, 1628\dots
    \end{gathered}
    \label{eq:C11C2000NEkels3} \\
    \begin{gathered}
        1, 5, 10, 18, 30, 43, 57, 77, 98, 119, 146, 175, 204, 
        237, 274, 311, 352, 395, 440, 489,
        \\
        539, 591, 648, 706, 764, 829, 895, 961, 1032, 1106, 1180, 1258, 1339, 1421, 1507\dots
    \end{gathered}
    \label{eq:C11C2000NEkels4}
\end{gather}

\paragraph{$\PNpPNm$ -- NW \& SE directions.}
The four different growth patterns are:%
\begin{gather}
    \begin{gathered}
        1, 4, 12, 20, 31, 46, 64, 79, 104, 125, 154, 179, 216, 
        243, 286, 317, 366, 403, 457, 495, 
        \\
        558,601, 666, 715, 789, 838, 918, 973, 1058, 1117, 1209, 
        1270, 1369, 1434, 1538, 1609,
        \\
        1719, 1790,1909, 1986, 2107, 2189, 2319, 2402, 2537,
        2626, 2767, 2860, 3007, 3102, 3257,
        \\
        3356,3515, 3620, 3786, 3891, 4065, 4176, 4353, 4468, 
        4654, 4771, 4962, 5084, 5281\dots
    \end{gathered}
    \label{eq:C11C2000NWkels1} \\
    \begin{gathered}
        1, 5, 10, 18, 29, 43, 54, 77, 94, 119, 141, 175, 198, 237, 266, 311, 344, 395,430, 489, 528,
        \\
          591, 636, 706, 751, 829, 880, 961, 1016, 1106, 1163, 1258, 1320, 1421, 1488, 1594,  1663,
        \\
          1778, 1851, 1969, 2048, 2174, 2253, 2386, 2471, 2608, 2697, 2842, 2933, 3084, 3179, 
        \\
           3336, 3437, 3599, 3701, 3872, 3979, 4152, 4265, 4447, 4560, 4748, 4867, 5060\dots
    \end{gathered}
    \label{eq:C11C2000NWkels2} \\
    \begin{gathered}
        1, 4, 12, 21, 30, 47, 62, 81, 101, 129, 150, 183, 210, 249, 280, 325, 358, 411, 448,505, 
        \\
         548,  612, 655, 727, 776, 851, 904, 988, 1043, 1132, 1192, 1287, 1352,1452,1519, 1628, 
          \\
           1699, 1811, 1888, 2008, 2085, 2212, 2295, 2426, 2513, 2652, 2741, 2886, 2979, 3130,
          \\
            3229, 3385, 3485, 3650, 3755, 3922, 4033, 4209, 4320, 4502, 4619, 4806, 4927, 5121\dots
    \end{gathered}
    \label{eq:C11C2000NWkels3} \\
    \begin{gathered}
        1, 5, 10, 17, 30, 41, 57, 74, 98, 115, 146, 169, 204, 231, 274, 303, 352, 385, 440, 479,
        \\
        539, 579, 648, 693, 764, 815, 895, 946, 1032, 1089, 1180, 1241, 1339, 1402, 1507, 1574, 
        \\
        1684, 1757, 1873, 1946, 2071, 2150, 2277, 2361, 2497, 2582, 2723, 2814, 2961, 3056, 
        \\
        3209, 3306, 3467, 3568, 3733, 3840, 4012, 4119, 4299, 4412, 4595, 4712, 4904\dots
    \end{gathered}
    \label{eq:C11C2000NWkels4}
\end{gather}

\paragraph{Maximal.}

The maximal growth with initial conditions in $\PTN$ is
isotropic and given by \eqref{eq:C11C2000NE3}.

The maximal growth with initial conditions in $\PNpPNm$ is
isotropic and given by %
\begin{equation}
    \begin{gathered}
        1, 5, 12, 21, 31, 47, 64, 81, 104, 129, 154, 183, 216, 249, 286,325, 366, 411, 457,
        \\
        505, 558, 612, 666, 727, 789, 851, 918, 988, 1058, 1132, 1209, 1287, 1369, 1452, 1538\dots
    \end{gathered}
    \label{eq:C11C2000NWkelsmax}
\end{equation}

\subsection*{\texorpdfstring{$C2_{\left(1;\,0;\,1\right)}(\x;\y)+C2_{\left(1;\,1;\,0\right)}(\y;\x)$}{C2(1;0;1)+C2(1;1;0)} system}

\paragraph{$\PTN$ -- NE \& SW directions.}
The four different growth patterns are:%
\begin{gather}
        1, 3, 7, 13, 21, 31, 42, 54, 68, 83, 99, 117, 136, 156, 178, 201, 225, 251, 278, 
    306, 336\dots
    \label{eq:C2101C2110NE1} \\
    1, 3, 7, 13, 21, 30, 40, 52, 65, 79, 95, 112, 130, 150, 171, 193, 217, 
    242, 268, 296, 325, 355\dots
    \label{eq:C2101C2110NE2} \\
    1, 3, 7, 13, 20, 28, 38, 49, 61, 75, 90, 106, 124, 143, 163, 
    185, 208, 232, 258, 285, 313, 343\dots
    \label{eq:C2101C2110NE3} \\
    1, 3, 7, 12, 18, 26, 35, 45, 57, 70, 84, 100, 117, 135, 155, 
    176, 198, 222, 247, 273, 301, 330\dots
    \label{eq:C2101C2110NE4}
\end{gather}

\paragraph{$\PTN$ -- NW \& SE directions.}
The four different growth patterns are:%
\begin{gather}
    \begin{gathered}
        1, 3, 7, 13, 20, 31, 38, 54, 61, 83, 90, 117, 124, 156, 163, 
        \\
        201, 208, 251, 258, 306, 313, 367, 374, 433, 440\dots
    \end{gathered}
    \label{eq:C2101C2110NW1} \\
     \begin{gathered}
        1, 3, 7, 12, 21, 26, 40, 45, 65, 70, 95, 100, 130, 135, 171, 
        \\
        176, 217, 222, 268, 273, 325, 330, 387, 392\dots
    \end{gathered}
    \label{eq:C2101C2110NW2} \\
     \begin{gathered}
        1, 3, 7, 13, 21, 28, 42, 49, 68, 75, 99, 106, 136, 143, 
        \\
        178, 185,225, 232, 278, 285, 336, 343, 399, 406\dots
    \end{gathered}
    \label{eq:C2101C2110NW3}\\
    \begin{gathered}
        1, 3, 7, 13, 18, 30, 35, 52, 57, 79, 84, 112, 117, 150, 
        \\
        155, 193, 198, 242, 247, 296, 301, 355, 360\dots
    \end{gathered}
    \label{eq:C2101C2110NW4}
\end{gather}

\paragraph{$\PNpPNm$ -- NE \& SW directions.}
The four different growth patterns are:%
\begin{gather}
    \begin{gathered}
        1, 5, 12, 23, 38, 57, 78, 101, 127, 155, 186, 221, 257, 296, 338, 382,
        \\
         428, 477, 529, 583, 640, 700, 762, 827, 894, 963, 1036, 1111,1188,
        \\
         1269, 1352, 1437, 1525, 1615, 1708, 1804, 1902, 2003, 2107, 2213, 2321\dots
    \end{gathered}
    \label{eq:C2101C2110NEkels1} \\
    \begin{gathered}
        1, 5, 12, 23, 38, 55, 74, 97, 122, 149, 180, 212, 247, 285, 325, 368,  
        \\
        414,462, 512, 565,621, 679, 740, 804, 870, 939, 1010, 1083, 1160,
        \\
         1239, 1320, 
        1405, 1492, 1581, 1673, 1767, 1864, 1964, 2066, 2171\dots
    \end{gathered}
    \label{eq:C2101C2110NEkels2} \\
    \begin{gathered}
        1, 4, 12, 23, 36, 51, 70, 91, 114, 141, 170, 201, 236, 272, 
        311, 353, 397, 443, 493, 545,
        \\
          600, 657, 717, 779, 844, 911, 981, 1054, 1130, 1207,
        1288, 1371, 1456, 1544, 1635, 1728\dots
    \end{gathered}
    \label{eq:C2101C2110NEkels3} \\
    \begin{gathered}
        1, 5, 12, 21, 32, 47, 65, 85, 108, 133, 160, 191, 
        223, 258, 296, 337, 380, 426, 474, 525, 
        \\
        578, 634, 692, 754, 818, 884, 953, 1025, 1098, 1175, 
        1254, 1336, 1421, 1508, 1597\dots
    \end{gathered}
    \label{eq:C2101C2110NEkels4}
\end{gather}

\paragraph{$\PNpPNm$ -- NW \& SE directions.}
The four different growth patterns are:%
\begin{gather}
    \begin{gathered}
        1, 5, 12, 23, 36, 57, 70, 101, 114, 155, 170, 221, 236, 296, 311, 382, 397,477, 493,583,
        \\
         600, 700, 717, 827, 844, 963, 981, 1111, 1130, 1269, 1288, 1437, 1456, 1615,1635,
        \\
         1804, 1825, 2003, 2024, 2213, 2234, 2432, 2454, 2662, 2685, 2903, 2926, 3154, 3177\dots
    \end{gathered}
    \label{eq:C2101C2110NWkels1} \\
    \begin{gathered}
        1, 5, 12, 21, 38, 47, 74, 85, 122, 133, 180, 191, 247, 258, 325, 337, 414, 426, 512, 
        \\
        525, 621, 634, 740, 754, 870, 884, 1010, 1025, 1160, 1175, 1320, 1336, 1492, 1508, 
        \\
        1673, 1690, 1864, 1881, 2066, 2084, 2279, 2297, 2501, 2520, 2734, 2753, 2977, 2997\dots
    \end{gathered}
    \label{eq:C2101C2110NWkels2}\\
    \begin{gathered}
        1, 4, 12, 23, 38, 51, 78, 91, 127, 141, 186, 201, 257, 272, 338, 353, 428, 443, 529,
        \\
        545, 640, 657, 762, 779, 894, 911, 1036, 1054, 1188, 1207, 1352, 1371, 1525, 1544, 
        \\
        1708, 1728, 1902, 1923, 2107, 2128, 2321, 2342, 2546, 2568, 2781, 2804, 3027, 3050\dots
    \end{gathered}
    \label{eq:C2101C2110NWkels3} \\
    \begin{gathered}
        1, 5, 12, 23, 32, 55, 65, 97, 108, 149, 160, 212, 223, 285, 296, 368, 380, 462,474,
        \\
         565, 578, 679, 692, 804, 818, 939, 953, 1083, 1098, 1239, 1254, 1405, 1421,1581,
        \\
         1597, 1767, 1784, 1964, 1981, 2171, 2189, 2389, 2407, 2616, 2635, 2854, 2873\dots
    \end{gathered}
    \label{eq:C2101C2110NWkels4}
\end{gather}

\paragraph{Maximal.}

The maximal growth with initial conditions in $\PTN$ is
isotropic and given by \eqref{eq:C2101C2110NE1}.

The maximal growth with initial conditions in $\PNpPNm$ is
isotropic and given by \eqref{eq:C2101C2110NEkels1}.

\subsection*{\texorpdfstring{$C3_{\left(\frac{1}{2};\,0;\,\frac{1}{2}\right)}(\x;\y)+C3_{\left(\frac{1}{2};\,\frac{1}{2};\,0\right)}(\y;\x)$}{C3(1/2;0;1/2)+C2(1/2;1/2;0)} system}

\paragraph{$\PTN$ -- NE \& SW directions.}

The same growth patterns as were given in \eqref{eq:C2101C2110NE1}--\eqref{eq:C2101C2110NE4}.

\paragraph{$\PTN$ -- NW \& SE directions.}

The same growth patterns as were given in \eqref{eq:C2101C2110NW1}--\eqref{eq:C2101C2110NW4}.

\paragraph{$\PNpPNm$ -- NE \& SW directions.}
The four different growth patterns are:%
\begin{gather}
    \begin{gathered}
        1, 5, 12, 23, 38, 57, 78, 101, 127, 156, 187, 222, 258, 297, 339,383,
        \\
         429, 479, 531, 586, 643, 703, 765, 830, 897, 967, 1040, 1116, 1193,
        \\
         1274, 1357, 1442, 1530, 1621, 1714, 1811, 1909, 2010, 2114, 2220, 2328\dots
    \end{gathered}
    \label{eq:C3101C3110NEkels1} \\
    \begin{gathered}
        1, 5, 12, 23, 38, 55, 75, 98, 123, 150, 181, 213, 248, 286, 327, 370, 416,
        \\
         464, 515, 568, 624, 682, 744, 808, 874, 943, 1015, 1088, 1165, 1244,
        \\
        1326, 1411, 1498, 1587, 1680, 1774, 1871, 1971, 2074, 2179\dots
    \end{gathered}
    \label{eq:C3101C3110NEkels2} \\
    \begin{gathered}
        1, 4, 12, 23, 36, 52, 71, 92, 115, 142, 171, 202, 237, 
        274, 313, 355,399, 446,496,548,603,
        \\
        661, 721, 783, 848, 916, 986, 1059, 1135, 1213,
        1294, 1377, 1462, 1551, 1642, 1735\dots
    \end{gathered}
    \label{eq:C3101C3110NEkels3} \\
    \begin{gathered}
        1, 5, 12, 21, 33, 48, 66, 86, 109, 134, 161, 192, 225, 260, 299,
        340, 383, 429, 477, 528,  
        \\
        582,638, 697, 759, 823, 889, 958, 1030,
        1104, 1181, 1261, 1343, 1428, 1515, 1604\dots
    \end{gathered}
    \label{eq:C3101C3110NEkels4}
\end{gather}

\paragraph{$\PNpPNm$ -- NW \& SE directions.}
The four different growth patterns are:%
\begin{gather}
    \begin{gathered}
        1, 5, 12, 23, 36, 57, 71, 101, 115, 156, 171, 222, 237, 297, 313, 383, 399,
        \\
         479, 496, 586, 603, 703, 721, 830, 848, 967, 986, 1116, 1135, 1274, 1294,
        \\
          1442, 1462, 1621, 1642, 1811, 1832, 2010, 2032, 2220, 2242, 2440, 2463\dots
    \end{gathered}
    \label{eq:C3101C3110NWkels1} \\
    \begin{gathered}
        1, 5, 12, 21, 38, 48, 75, 86, 123, 134, 181, 192, 248, 260, 327, 340,
        \\
          416, 429, 515, 528, 624, 638, 744, 759, 874, 889, 1015, 1030, 1165, 1181,
        \\
        1326, 1343, 1498, 1515, 1680, 1697, 1871, 1889, 2074, 2093, 2287, 2306\dots
    \end{gathered}
    \label{eq:C3101C3110NWkels2} \\
    \begin{gathered}
        1, 4, 12, 23, 38, 52, 78, 92, 127, 142, 187, 202, 258,  274, 339, 355,
        \\
          429, 446, 531, 548, 643,  661, 765, 783, 897, 916, 1040, 1059, 1193, 1213,
        \\
        1357, 1377, 1530, 1551, 1714, 1735, 1909, 1931, 2114, 2136, 2328, 2351\dots
    \end{gathered}
    \label{eq:C3101C3110NWkels3} \\
    \begin{gathered}
        1, 5, 12, 23, 33, 55, 66, 98, 109, 150, 161, 213, 225, 286, 299,
        370, 
        \\
         383, 464, 477, 568, 582, 682, 697, 808, 823, 943, 958, 1088,
        1104, 1244, 
        \\
        1261, 1411, 1428, 1587, 1604, 1774, 1792, 1971, 1990, 2179, 2198\dots
    \end{gathered}
    \label{eq:C3101C3110NWkels4}
\end{gather}

\paragraph{Maximal.}

The maximal growth with initial conditions in $\PNpPNm$ is
isotropic and given by \eqref{eq:C3101C3110NEkels1}.

\end{appendices}

\bibliographystyle{plain}
\bibliography{bibliography}

\begin{thebibliography}{10}

\bibitem{ABS2003}
V.~E. Adler, A.~I. Bobenko, and Yu.~B. Suris.
\newblock Classification of integrable equations on quad-graphs. {The}
  consistency approach.
\newblock {\em Comm. Math. Phys.}, 233:513--543, 2003.

\bibitem{ABS2009}
V.~E. Adler, A.~I. Bobenko, and Yu.~B. Suris.
\newblock Discrete nonlinear hyperbolic equations. {Classification} of
  integrable cases.
\newblock {\em Funct. Anal. Appl.}, 43:3--17, 2009.

\bibitem{AdlerStartsev1999}
V.~E. Adler and S.~Ya. Startsev.
\newblock Discrete analogues of the {Liouville} equation.
\newblock {\em Theor. Math. Phys.}, 121(2):1484--1495, 1999.

\bibitem{Arnold1990}
V.~I. Arnold.
\newblock Dynamics of complexity of intersections.
\newblock {\em Bol. Soc. Bras. Mat.}, 21:1--10, 1990.

\bibitem{BakerGraves1996}
G.~A. Baker and P.~R. Graves-Morris.
\newblock {\em Pad\'e approximants}.
\newblock Cambridge University Press, 1996.

\bibitem{Bazhanov:2016ajm}
V.~V. Bazhanov, A.~P. Kels, and S.~M Sergeev.
\newblock {Quasi-classical expansion of the star-triangle relation and
  integrable systems on quad-graphs}.
\newblock {\em J. Phys.}, A49:464001, 2016.

\bibitem{Bazhanov:2007mh}
V.~V. Bazhanov, V.~V. Mangazeev, and S.~M. Sergeev.
\newblock {Faddeev-Volkov solution of the {Y}ang-{B}axter equation and discrete
  conformal symmetry}.
\newblock {\em Nucl. Phys.}, B784:234--258, 2007.

\bibitem{Bazhanov:2010kz}
V.~V. Bazhanov and S.~M. Sergeev.
\newblock {A Master solution of the quantum {Y}ang-{B}axter equation and
  classical discrete integrable equations}.
\newblock {\em Adv. Theor. Math. Phys.}, 16(1):65--95, 2012.

\bibitem{BellonViallet1999}
M.~Bellon and C-M. Viallet.
\newblock Algebraic entropy.
\newblock {\em Comm. Math. Phys.}, 204:425--437, 1999.

\bibitem{BobenkoSuris2002}
A.~I. Bobenko and Yu.~B. Suris.
\newblock {I}ntegrable systems on quad-graphs.
\newblock {\em Int. Math. Res. Not.}, 2002(11):573--611, 2002.

\bibitem{Boll2011}
R.~Boll.
\newblock Classification of {$3D$} consistent quad-equations.
\newblock {\em J. Nonlinear Math. Phys.}, 18(3):337--365, 2011.

\bibitem{CalogeroDeGasperisIST_I}
F.~Calogero and A.~Degasperis.
\newblock {\em Spectral {Transform} and {Solitons} {I}}.
\newblock North Holland Publishing Co., Amsterdam, 1982.

\bibitem{CalogeroNucci1991}
F.~Calogero and M.~C. Nucci.
\newblock {L}ax pairs galore.
\newblock {\em J. Math. Phys.}, 32(1):72--74, 1991.

\bibitem{DoliwaSantini1997}
A.~Doliwa and P.~M. Santini.
\newblock Multidimensional quadrilateral lattices are integrable.
\newblock {\em Phys. Lett. A}, 233:365--372, 1997.

\bibitem{Elaydi2005}
S.~Elaydi.
\newblock {\em An introduction to {Difference Equations}}.
\newblock Springer, 3rd edition, 2005.

\bibitem{FalquiViallet1993}
G.~Falqui and C-M. Viallet.
\newblock Singularity, complexity, and quasi-integrability of rational
  mappings.
\newblock {\em Comm. Math. Phys.}, 154:111--125, 1993.

\bibitem{FlajoletOdlyzko1990}
P.~Flajolet and A.~Odlyzko.
\newblock Singularity analysis of generating functions.
\newblock {\em SIAM J. on Discr. Math.}, 3(2):216--240, 1990.

\bibitem{GrammaticosHalburdRamaniViallet2009}
B.~Grammaticos, R.~G. Halburd, A.~Ramani, and C-M. Viallet.
\newblock How to detect the integrability of discrete systems.
\newblock {\em J. Phys A: Math. Theor.}, 42:454002 (41 pp), 2009.
\newblock Newton Institute Preprint NI09060-DIS.

\bibitem{GubbiottiASIDE16}
G.~Gubbiotti.
\newblock Integrability of difference equations through algebraic entropy and
  generalized symmetries.
\newblock In D.~Levi, R.~Verge-Rebelo, and P.~Winternitz, editors, {\em
  Symmetries and Integrability of Difference Equations: Lecture Notes of the
  Abecederian School of SIDE 12, Montreal 2016}, CRM Series in Mathematical
  Physics, chapter~3, pages 75--152. Springer International Publishing, Berlin,
  2017.

\bibitem{GSL_general}
G.~Gubbiotti, C.~Scimiterna, and D.~Levi.
\newblock Algebraic entropy, symmetries and linearization of quad equations
  consistent on the cube.
\newblock {\em J. Nonlinear Math. Phys.}, 23(4):507--543, 2016.

\bibitem{GSL_Gallipoli15}
G.~Gubbiotti, C.~Scimiterna, and D.~Levi.
\newblock Linearizability and fake {L}ax pair for a consistent around the cube
  nonlinear non-autonomous quad-graph equation.
\newblock {\em Theor. Math. Phys.}, 189(1):1459--1471, 2016.

\bibitem{GSL_QV}
G.~Gubbiotti, C.~Scimiterna, and D.~Levi.
\newblock A two-periodic generalization of the {$Q_\text{V}$} equation.
\newblock {\em J. Integrable Sys.}, 2:xyx004 (13pp), 2017.

\bibitem{GSY_DarbouxII}
G.~Gubbiotti, C.~Scimiterna, and R.~I. Yamilov.
\newblock Darboux integrability of trapezoidal {$H^4$} and {$H^6$} families of
  lattice equations {II}: General solutions.
\newblock {\em SIGMA}, 14:008 (51pp), 2018.

\bibitem{GSY_DarbouxI}
G.~Gubbiotti and R.~I. Yamilov.
\newblock Darboux integrability of trapezoidal {$H^4$} and {$H^6$} families of
  lattice equations {I}: {F}irst integrals.
\newblock {\em J. Phys. A: Math. Theor.}, 50:345205 (26pp), 2017.

\bibitem{Hasselblatt2007}
B.~Hasselblatt and J.~Propp.
\newblock Degree-growth of monomial maps.
\newblock {\em Ergodic Theory and Dynamical Systems}, 2007.

\bibitem{HayButler2013}
M.~Hay and S.~Butler.
\newblock Simple identification of fake {Lax} pairs, 2012.
\newblock \texttt{arXiv:1311.2406v1}.

\bibitem{HayButler2015}
M.~Hay and S.~Butler.
\newblock Two definitions of fake {L}ax pairs.
\newblock {\em AIP Conf. Proc.}, 1648:180006, 2015.

\bibitem{Hietarinta2005}
J.~Hietarinta.
\newblock Searching for {CAC}-maps.
\newblock {\em J. Nonlinear Math. Phys.}, 12:223--230, 2005.

\bibitem{HietarintaBook}
J.~Hietarinta.
\newblock Definitions and {P}redictions of {I}ntegrability for {D}ifference
  {Equations}.
\newblock In D.~Levi, P.~Olver, Z.~Thomova, and P.~Winternitz, editors, {\em
  Symmetries and Integrability of Difference Equations}, London Mathematical
  Society Lecture Notes series, pages 83--114. Cambridge University Press,
  Cambridge, 2011.

\bibitem{Hietarinta2018}
J.~Hietarinta.
\newblock Search for {CAC}-integrable homogeneous quadratic triplets of quad
  equations and their classification by {BT} and {L}ax.
\newblock {\em J. Nonlinear Math. Phys.}, 26:358--389, 2019.

\bibitem{Hietarintaetal2019}
J.~Hietarinta, T.~Mase, and R.~Willox.
\newblock Algebraic entropy computations for lattice equations: why initial
  value problems do matter.
\newblock {\em J. Phys. A: Math. Theor.}, 52(49):49LT01, 2019.

\bibitem{HietarintaViallet1997}
J.~Hietarinta and C-M. Viallet.
\newblock Singularity confinement and chaos in discrete systems.
\newblock {\em Phys. Rev. Lett.}, 81(2):325--328, 1998.

\bibitem{HietarintaViallet2007}
J.~Hietarinta and C-M. Viallet.
\newblock Searching for integrable lattice maps using factorization.
\newblock {\em J. Phys. A: Math. Theor.}, 40:12629--12643, 2007.

\bibitem{HietarintaViallet2012}
J.~Hietarinta and C-M. Viallet.
\newblock Weak {Lax} pairs for lattice equations.
\newblock {\em Nonlinearity}, 25:1955--1966, 2012.

\bibitem{Jury1964}
E.~Jury.
\newblock {\em Theory and applications of the {$Z$}-transform method}.
\newblock Robert E. Krieger, 1964.

\bibitem{Kels:2018xge}
A.~P. Kels.
\newblock {Integrable quad equations derived from the quantum {Y}ang-{B}axter
  equation}.
\newblock {\em Lett. Math. Phys.}, 110:1477--1557, 2020.

\bibitem{Kels2020lax}
A.~P. Kels.
\newblock Lax matrices for lattice equations which satisfy
  consistency-around-a-face-centered-cube, 2020.

\bibitem{Kels2020cafcc}
A.~P. Kels.
\newblock Interaction-round-a-face and consistency-around-a-face-centered-cube.
\newblock {\em J. Math. Phys.}, 62(3):033509, 2021.

\bibitem{Nijhoff2001}
F.~W. Nijhoff and A.~J. Walker.
\newblock The discrete and continous {Painlev\'e VI} hierarchy and the
  {Garnier} systems.
\newblock {\em Glasg. Math. J.}, 43A:109--123, 2001.

\bibitem{Pade1892}
H.~Pad\'e.
\newblock Sur la r\'epresentation approch\'ee d'une fonction par des fractions
  rationelles.
\newblock {\em Ann. \'Ecole Nor.}, 3(9):1--93, 1892.
\newblock Thesis, suppement.

\bibitem{RobertsTran2017}
J.~A.~G. Roberts and D.~T. Tran.
\newblock Algebraic entropy of (integrable) lattice equations and their
  reductions.
\newblock {\em Nonlinearity}, 32:622, 2019.

\bibitem{Tremblay2001}
S.~Tremblay, B.~Grammaticos, and A.~Ramani.
\newblock Integrable lattice equations and their growth properties.
\newblock {\em Phys. Lett. A}, 278(6):319--324, 2001.

\bibitem{Veselov1992}
A.~P. Veselov.
\newblock Growth and integrability in the dynamics of mappings.
\newblock {\em Comm. Math. Phys.}, 145:181--193, 1992.

\bibitem{Viallet2006}
C-M. Viallet.
\newblock Algebraic {Entropy} for lattice equations, 2006.
\newblock \texttt{arXiv:0609043}.

\bibitem{Viallet2009}
C-M. Viallet.
\newblock Integrable lattice maps: {$Q_{5}$} a rational version of {$Q_4$}.
\newblock {\em Glasg. Math. J.}, 51(A):157--163, 2009.

\bibitem{Viallet2015}
C-M. Viallet.
\newblock On the algebraic structure of rational discrete dynamical systems.
\newblock {\em J. Phys. A: Math. Theor.}, 48(16):16FT01, 2015.

\bibitem{Xenitidis2009}
P.~D. Xenitidis and V.~G. Papageorgiou.
\newblock Symmetries and integrability of discrete equations defined on a
  black--white lattice.
\newblock {\em J. Phys. A: Math. Theor.}, 42(35):454025, 2009.

\end{thebibliography}

\end{document}